\numberwithin{equation}{section}
\numberwithin{figure}{section}
\theoremstyle{plain}
\newtheorem*{thm*}{Theorem}
\newtheorem{thm}{Theorem}[section]
\newtheorem{prop}[thm]{Proposition}
\theoremstyle{definition}
\newtheorem{defn}[thm]{Definition}
\newtheorem*{defn*}{Definition}
\crefname{lemma}{lemma}{lemmas}
\Crefname{lemma}{Lemma}{Lemmas}
\crefname{thm}{theorem}{theorems}
\Crefname{thm}{Theorem}{Theorems}
\crefname{defn}{definition}{definitions}
\Crefname{defn}{Definition}{Definitions}
\DeclarePairedDelimiterX{\abs}[1]{\lvert}{\rvert}{\ifblank{#1}{{}\cdot{}}{#1}}
\newtheorem*{thm:main}{Theorem \ref{thm:main}}
\newtheorem*{thm:prop}{Proposition \ref{thm:prop}}
\newcommand{\be}{\begin{equation}}
\newcommand{\ee}{\end{equation}}
\newcommand{\bea}{\begin{eqnarray}}
\newcommand{\eea}{\end{eqnarray}}
\newcommand{\bega}{\begin{gather}}
\newcommand{\eega}{\end{gather}}
\newcommand{\bi}{\begin{itemize}}
\newcommand{\ei}{\end{itemize}}
\newcommand{\ben}{\begin{enumerate}}
\newcommand{\een}{\end{enumerate}}
\newcommand{\bca}{\begin{cases}}
\newcommand{\eca}{\end{cases}}
\newcommand{\bln}{\begin{align}}
\newcommand{\eln}{\end{align}}
\newcommand{\bst}{\begin{split}}
\newcommand{\est}{\end{split}}
\def\ie{\begin{equation}\begin{aligned}}
\def\fe{\end{aligned}\end{equation}}
\newcommand{\bma}{\le(\begin{matrix}}
\newcommand{\ema}{\end{matrix}\ri)}
\newcommand{\bwt}{\begin{widetext}}
\newcommand{\ewt}{\end{widetext}}
\def\b{{\beta}}
\def\p{{\partial}}
\newcommand\ep{\epsilon}
\newcommand\lam{\lambda}
\newcommand\om{\omega}
\newcommand\Om{\Omega}
\newcommand\ga{{\ensuremath{{\gamma}}}}
\newcommand\Ga{{\ensuremath{{\Gamma}}}}
\newcommand\de{{\ensuremath{{\delta}}}}
\newcommand\De{{\ensuremath{{\Delta}}}}
\newcommand\vp{\varphi}
\newcommand\da{{\dagger}}
\def\th{{\theta}}
\newcommand\ov{\over}
\newcommand\ha{{\half}}
\newcommand\apr{{\ensuremath{{\alpha'}}}}
\def\le{\left}
\def\ri{\right}
\newcommand\sB{{\ensuremath{{\mathcal B}}}}
\newcommand\sH{{\ensuremath{{\mathcal H}}}}
\newcommand\sM{{\ensuremath{{\mathcal M}}}}
\newcommand\sN{{\ensuremath{{\mathcal N}}}}
\newcommand\sV{{\mathcal V}}
\newcommand\sS{{\mathcal S}}
\newcommand\sT{{\mathcal T}}
\newcommand\sW{{\mathcal W}}
\newcommand\sX{{\mathcal X}}
\newcommand\sZ{{\mathcal Z}}
\newcommand{\fa}{{\mathfrak{a}}}
\newcommand{\fb}{{\mathfrak{b}}}
\newcommand{\fo}{{\mathfrak o}}
\newcommand\vev[1]{{\ensuremath{\left\langle{#1}\right\rangle}}}
\newcommand{\field}[1]{\ensuremath{\mathbb{#1}}}
\newcommand{\RR}{\field{R}}
\newcommand\half{{\ensuremath{\frac{1}{2}}}}
\newcommand\vx{{\vec x}}
\newcommand\vk{{\vec k}}
\begin{document}
\begin{titlepage}
\vspace*{-3cm} 
\begin{flushright}
{\tt MIT-CTP/5751}\\
\end{flushright}
\begin{center}
\vspace{2.2cm}
{\LARGE\bfseries Toward stringy horizons}\\
\vspace{1cm}
{\large
Elliott Gesteau$^{1}$ and Hong Liu$^{2}$\\}
\vspace{.6cm}
{ $^1$ Division of Physics, Mathematics, and Astronomy, California Institute of Technology}\par\vspace{-.3cm}
{Pasadena, CA 91125, U.S.A.}\par
\vspace{.2cm}
{ $^2$ Center for Theoretical Physics, Massachusetts Institute of Technology}\par\vspace{-.3cm}
{Cambridge, MA 02139, U.S.A.}\par
\vspace{.4cm}
\scalebox{.95}{\tt  egesteau@caltech.edu, hong$\_$liu@mit.edu}\par
\vspace{1cm}
{\bf{Abstract}}\\
\end{center}
We take a first step towards developing a new language to describe causal structure, event horizons, and quantum extremal surfaces (QES) for the bulk description of holographic systems  beyond the standard Einstein gravity regime.
By considering the structure of boundary operator algebras, we introduce a stringy ``causal depth parameter'', which quantifies the depth of the emergent radial direction in the bulk, and a certain notion of ergodicity on the boundary. We define stringy event horizons in terms of the half-sided inclusion property, which is related to a stronger notion of boundary ergodic or quantum chaotic behavior.
Using our definition, we argue that above the Hawking--Page temperature, there is an emergent sharp horizon structure in the large $N$ limit of $\mathcal{N}=4$ Super-Yang--Mills at finite nonzero 't Hooft coupling. In contrast, some previously considered toy models of black hole information loss do not have a stringy horizon.
Our methods can also be used to probe violations of the equivalence principle for the bulk gravitational system, and to explore aspects of stringy nonlocality.

\vfill 

\end{titlepage}

\tableofcontents
\newpage
\section{Introduction}

In the semiclassical gravity description,\footnote{By semiclassical gravity, we mean classical Einstein gravity (with possible higher derivative corrections) and quantum field theory in a curved spacetime.} an event horizon is a boundary beyond which events cannot affect an asymptotic observer. While originally defined in terms of the causal structure of a spacetime, event horizons have also played crucial roles in many other aspects of black hole physics, including the discovery of Hawking radiation and black hole thermodynamics.

In string theory, the semiclassical gravity description serves as a low-energy effective theory, arising in the limit where Newton's constant $G_N$ and the string length $\ell_s$ (or $\apr =\ell_s^2$) both approach zero.
At finite $G_N$, spacetime fluctuations prevent sharp definitions of geometric concepts such as event horizons, causal structure, and spacetime subregions.  It is natural to ask whether these concepts can still be precisely formulated 
in the so-called ``stringy regime,'' where $G_N$ approaches zero while $\alpha'$ remains finite. 
In this regime, while there are no quantum gravitational fluctuations, fundamental objects are one-dimensional strings, which  probe spacetime very differently from point particles. In particular, sharp locality used in the current formulations of these concepts may not be compatible with the intrinsic non-locality brought by the finite size of these strings. 
Despite much progress in our understanding of string theory, there have been very limited tools for approaching this question.

In this paper, we propose a definition of event horizons for holographic gravitational systems in the stringy regime in terms of their boundary duals. As a prototypical example, consider $\sN=4$ Super-Yang-Mills (SYM) theory with gauge group $SU(N)$, which is dual to IIB string theory on AdS$_5 \times S_5$. The semi-classical regime of the bulk gravity theory corresponds to the $N \to \infty$ and $\lam \to \infty$ ($\lam$ is the 't Hooft coupling)  limit  of the SYM theory, while the stringy regime corresponds to $N \to \infty$ with $\lam$ kept finite. Suppose the SYM theory is in the thermofield double state 
above the Hawking-Page temperature~\cite{Hawking:1982dh}, which in the $N \to \infty$ and $\lam \to \infty$ limit is dual to an eternal black hole in AdS$_5$~\cite{Wit98b,Mal01}. Now consider the system at finite $\lam$. 
We will argue that the bulk is described by a stringy black hole and provide tools to explore the corresponding stringy horizon.

The starting point of our proposal is the subregion-subalgebra duality introduced in~\cite{LeuLiu21b,LeuLiu22}, 
which provides a new framework for understanding the emergence of bulk spacetime from the boundary theory
in the semi-classical regime. 
The duality  states that a bulk subregion $\fo$ can be described by a subalgebra $\sM_\fo$ of the boundary theory. 

$\sM_\fo$ includes all physical operations that can be performed by bulk observers in the region $\fo$, and encodes its geometric, causal, and entanglement structures. 
For example, the inclusion of one bulk spacetime region $\fo_1$ in another region $\fo_2$ is described by inclusion of the corresponding dual boundary subalgebras $\sM_{\fo_1} \subset \sM_{\fo_2}$. That $\fo_1$ and $\fo_2$ are causally disconnected corresponds to the fact that $\sM_{\fo_1}$ and $\sM_{\fo_2}$ commute. The entanglement structure of a region $\fo$ is reflected in that $\sM_\fo$ is a type III$_1$ von Neumann algebra and associated algebraic properties. 

Philosophically, subregion-subalgebra duality is reminiscent of Gelfand duality in mathematics, where a topological space can be equivalently described by the algebras of functions on that space. Here the collection of boundary subalgebras $\{\sM_\fo\}$ captures not only the geometric information on the bulk spacetime, but also all the physics defined on it.

Now suppose the boundary subalgebra $\sM_\fo$ for a bulk subregion $\fo$ can be extended to finite $\lam$. 
We can consider using $\sM_\fo$ as a {\it definition} of the bulk region $\fo$ in the stringy regime. 
If that can be shown to be sensible, the commutation relations among different algebras can then be used to define the causal structure of the emergent bulk spacetime in the stringy regime, which should in turn enable the definition of an event horizon.

\begin{figure}
\centering
\includegraphics[scale=0.5]{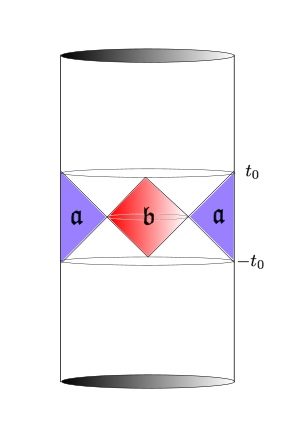}
\caption{A simple illustration of subregion/subalgebra duality in the vacuum state of strongly coupled $\mathcal{N}=4$ Super-Yang--Mills theory. The region $\fa$, a spherical Rindler region in the bulk, is dual to the algebra $\mathcal{S}_I$ of large $N$ boundary observables in the time band $I=(-t_0,t_0)$. The bulk diamond $\fb$, which does not touch the boundary, can be identified with the commutant of the algebra $\mathcal{S}_I$.}
\label{fig:vtime}
\end{figure}

As an illustration, consider the algebra $\sS_{I}$ generated by single-trace operators localized  within a time band $I = (-t_0, t_0)$ in the vacuum sector of the boundary theory\footnote{In the $N \to \infty$ limit, boundary operators at different times are independent, and the space of states in the boundary theory splits into disjoint sectors associated with semi-classical states.}. In the large $\lam$ limit, it has been argued~\cite{LeuLiu21b,LeuLiu22} to be dual to a spherical Rindler region $\fa$ in empty AdS (see Fig.~\ref{fig:vtime}). The bulk diamond region $\fb$, which is the causal complement of $\fa$ in the bulk, can be described by the boundary subalgebra $\sS_{I}'$, the commutant of $\sS_{I}$. 
It is then natural to use $\sS_{I}$ and $\sS_{I}'$ at finite $\lam$ to define the corresponding subregions $\fa$ and $\fb$ in the stringy regime. 

In the large $N$ limit, the bulk theory is free, and the boundary theory is a generalized free field theory. The algebra $\sM_\fo$ dual to a bulk region $\fo$ is then recovered from all the algebras generated by individual single-trace operators. This feature extends to finite $\lam$. In particular, if we consider a specific spacetime field corresponding to a stringy excitation, it should still be described by a free quantum field theory in a curved spacetime. Similarly, on the boundary each single-trace operator is still described by a generalized free field, and the discussion of operator algebras associated with each single-trace operator goes exactly as that in the $\lam \to \infty$ limit. 

There are, however,  two possible complications in using $\sM_\fo$ as a definition of $\fo$ in the stringy regime: 

\ben 

\item In the semiclassical limit, all bulk fields in the Einstein gravity ``see'' the same geometry, which is the statement of the ``equivalence principle''. In the stringy regime, the equivalence principle may be broken, indications of which can already be seen perturbatively in $\apr$ by including higher derivative corrections. Due to such corrections, different bulk fields may ``see'' different effective spacetime metrics\footnote{
An explicit example was given in~\cite{BriLiu07}, where adding the Gauss-Bonnet term to the Einstein action leads to different effective metrics for different components of metric perturbations.}.

Thus it can be that in the stringy regime, sharp definitions of subregions or horizons can be given for each bulk free field, but the definitions differ for different fields. That is, a universal definition that is common to all fields may not exist. In particular, interactions among the fields may lead to a nonlocal smearing of the boundaries or horizons of subregions, see Fig.~\ref{fig:blurry} for a cartoon.

\begin{figure}
\centering
\includegraphics[scale=0.5]{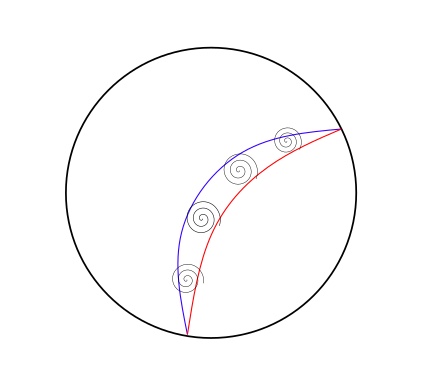}
\caption{A cartoon of how the notion of spacetime may break down in the stringy regime. The bulk region ``seen" by a boundary field or another from a given boundary subregion may differ depending on the field - two such regions are represented in blue and red. If these two fields interact, the difference between these two spacetime geometries may become blurry.}
\label{fig:blurry}
\end{figure}

It may also happen in certain stringy spacetimes, sharp subregions in general cannot be defined, but an event horizon universal to all stringy fields may still be sharply defined.

In terms of  boundary subalgebras, in the $\lam \to \infty$ limit, the equivalence principle implies that properties of different factors in $\sM_\fo$ associated with different single-trace operators should ``conspire'' to describe the same bulk region $\fo$. 
Non-existence of a definition universal to all fields should then be reflected in the breaking of this ``conspiracy''  among factors of $\sM_\fo$ by finite $\lam$ effects.

\item There is a Hagedorn growth in the number of single-trace operators in $\sM_\fo$, dual to stringy modes in the bulk. 
Given that the Hagedorn growth is what distinguishes a string theory from  a field theory with an infinite number of fields\footnote{IIB supergravity on AdS$_5 \times S_5$ already has an infinite number of fields in AdS$_5$ from the Kaluza-Klein reduction on $S_5$.}, it may be expected to have fundamental implications for the structure of $\sM_\fo$.

\een

Despite the above possible complications, $\sM_\fo$ provides a framework and powerful new tools where questions about 
bulk stringy geometry can be explored. For example, identifying sharp signatures that algebras of different single-trace operators ``reconstruct'' different bulk spacetime geometries may give insight into the nature of stringy nonlocality, and lead to clues on how to characterize stringy geometry. 

Motivated by the above considerations,  we will use the structure of boundary algebras to explore geometric aspects of the bulk in the stringy regime, and propose a diagnostic for existence of stringy horizons. Here is a summary of the main results of the paper:

\ben 

\item A causal depth parameter, which is also well-defined in the stringy regime, is introduced to quantify the ``depth'' of the emergent radial direction. 

In the large $N$ limit, the causal depth parameter can be calculated for each single-trace operator by using its boundary spectral function.  It measures the depth of the bulk as probed by the bulk field dual to the single-trace operator. Different values of the causal depth parameter for different single-trace operators would be a direct signature that in the stringy regime different fields ``see'' different bulk geometries.  

 We show that for a CFT in the vacuum state (dual to empty AdS), all bulk fields have the same causal depth parameter. The same statement applies to the thermofield double state below the Hawking-Page temperature. 

The computation of the causal depth parameter exactly maps to a well-studied problem from harmonic analysis, which is known as the \textit{exponential type problem}. In many physically relevant situations, this connection makes it possible to gain information about the causal depth parameter, in particular whether it is finite or not, without knowing the specific form of the spectral function.

\item We give a boundary diagnostic for the existence of stringy horizons 
using the causal depth parameter and the existence of half-sided modular inclusions. 

  As an example, we argue that the thermofield double state of the $\sN=4$ SYM theory above the Hawking-Page temperature should have a stringy horizon at finite (nonzero) $\lam$. 
In contrast, the IOP model~\cite{Iizuka:2008eb} does not, despite exhibiting information loss and type III$_1$ algebras.

\item We introduce an algebra to characterize the structure of a stringy horizon (or a stringy entangling surface), and discuss how to calculate the $O(N^0)$ part of its generalized entropy. 

\item We introduce a modular depth parameter which can be used quantify the  ``depth'' of the emergent radial direction using modular flows of boundary subalgebras. 

This makes it possible to explore an extension of entanglement wedge reconstruction to the stringy regime, and 
gives a diagnostic for the corresponding RT surfaces/QES~\cite{Ryu:2006bv,Hubeny:2007xt,Engelhardt:2014gca}. Using the same techniques as the ones developed in the case of event horizons, we define a QES algebra associated to such stringy quantum extremal surfaces.

We also discuss to what extent the algebraic ER = EPR proposal~\cite{Engelhardt:2023xer} can be extended to the stringy regime, and propose a possible resolution of a question raised in~\cite{Engelhardt:2023xer} about spacetime connectivity for an evaporating black hole before and after the Page time.

\een

The plan of the paper is as follows. In Sec.~\ref{sec:CdpS} we introduce the causal depth parameter and 
give a criterion for the existence of a stringy  horizon. We also discuss some examples. 
In Sec.~\ref{sec:RT} we introduce the modular depth parameter and discuss a diagnostic of stringy RT surface/QES. 
In Sec.~\ref{sec:diss} we conclude with some implications of our results and future perspectives.

\textbf{Notations and conventions:}\begin{itemize} \item 
Single-trace operator algebras associated to a field $\phi$ are denoted $\mathcal{S}_\phi$.
\item Unless mentioned otherwise, in this paper an algebra always means a von Neumann algebra of operators acting explicitly on a Hilbert space.

\item The Fourier transform of a function $f(t)$ is by default denoted by $f(\omega)$. 
\item Without indication, $\int dt \equiv \int_{-\infty}^\infty dt$. \item $\th (\om)$ is the Heaviside step function.
\item In this paper, all Hilbert spaces are separable.
\item $\sB (\sH)$ denotes the algebra of bounded operators on the Hilbert space $\mathcal{H}$
\item  Throughout the text, we assume that all manipulated distributions satisfy the conditions to be tempered. This allows us to take Fourier transforms without ambiguity in our proofs.

\end{itemize}

\section{Causal depth parameter and stringy horizons}\label{sec:CdpS}

In this section we motivate and define a new quantity, which we call \textit{causal depth parameter}, and use it to diagnose the emergence of a causal structure, a radial direction, and horizons in the bulk. We also introduce a horizon algebra to characterize stringy horizons. 
We use $\sN=4$ SYM theory on $S^3$ as the prototypical example, and work in the large $N$ limit throughout. 
For convenience we take the radius of the boundary sphere to be $1$.

\subsection{Causal depth parameter and stringy horizons: the TFD state} \label{sec:hor}

Consider two copies of the boundary CFT in the thermofield double (TFD) state. 
In the large $N$ limit~(for all $\lam$), the system is believed to have a Hawking-Page transition at some temperature $T_{\rm HP}$~\cite{AhaMar03}. For $T < T_{\rm HP}$, the free energy is of order $O(N^0)$. Furthermore, thermal Euclidean two-point functions of single-trace operators are given from the vacuum ones by summing over images in the Euclidean time circle~\cite{BriFes05}, which implies the thermal spectral function has a discrete spectrum. 
For $T > T_{\rm HP}$, the free energy is of order $O(N^2)$, and thermal spectral functions for general single-trace operators are believed to have a complete spectrum~\cite{Festuccia:2006sa}. If we further take the large $\lam$ limit, 
then the boundary system is dual to thermal AdS which consists of two copies of AdS (in an entangled state) for $T < T_{\rm HP}$, and for $T > T_{\rm HP}$ to an eternal black hole~\cite{Mal01}. See Fig.~\ref{fig:TFD}.

The obvious geometric difference between 
thermal AdS and the eternal black hole is that the black hole geometry is connected, with a bifurcate horizon separating the $R$ and $L$ regions of the black hole, whereas two copies of thermal AdS are disconnected. We now would like to generalize the notions of connectedness and horizon to the corresponding bulk dual at finite $\lam$ where we no longer have a geometric description. In order to do so, we will first 
provide an algebraic characterization of these geometric features, after which the generalization to finite $\lam$ is immediate.

\begin{figure}
\centering
\begin{subfigure}{0.4\textwidth}
\centering
\begin{tikzpicture}[scale=0.7]
\node (I)    at ( 4,0)   {};
\node (II)   at (-4,0)   {};
\node (III)  at (0, 2.5) {};
\node (IV)   at (0,-2.5) {};

\path 
   (I) +(90:2)  coordinate[label=45:$t_0$] (It) 
       +(-90:2) coordinate[label=-45:$-t_0$] (Ib) 
       +(180:2) coordinate (Il) 
       ;

\draw[fill=blue!50]  (It) -- (Il) -- (Ib);
\path 
   (I) +(-160:0.8) coordinate[label=:\text{\Large$\fa$}] (Ipt) ;

\path 
   (Il) +(135:1.4) coordinate  (Ilt) 
       +(-135:1.4) coordinate (Ilb) 
       +(180:2) coordinate (M) 
       ;

\draw[pattern=north west lines, pattern color=red]  (Ilt) -- (M) -- (Ilb)-- (Il)-- (Ilt)-- cycle;

\path 
   (Il) +(135:5.63) coordinate  (Ist) 
       +(-135:5.63) coordinate (Isb) 
       ;

\path 
   (I) +(45:0.35)  coordinate (Itop1)
       +(-45:0.35) coordinate (Ibot1)
       +(180:1) coordinate (Ileft1)
       ;
\path 
   (I) +(90:4)  coordinate (Itop)
       +(-90:4) coordinate (Ibot)
       +(180:4) coordinate (Ileft)
       ;

\draw  (Ileft) -- (Itop) -- (Ibot) -- (Ileft) -- cycle;
\path  
  (II) +(90:4)  coordinate  (IItop)
       +(-90:4) coordinate (IIbot)
       +(0:4)   coordinate                  (IIright)
       ;
\draw[decoration=zigzag][fill=red!50] (Ist)--(Ilt)--(M)--(Ilb)--(Isb)decorate{--(IIbot)}--(IItop)decorate{--(Ist)}--cycle;
\draw 
      (IItop) --
          node[midway, below, sloped] {}
      (IIright) -- 
          node[midway, below, sloped] {}
      (IIbot) --
          node[midway, above, sloped] {}
          node[midway, below left]    {}    
      (IItop) -- cycle;
\path 
   (I) +(-174:3.05) coordinate[label=:\text{\Large$\fb$}] (Ipt) ;
\draw[decorate,decoration=zigzag] (Ist) -- (Itop)
      node[midway, above, inner sep=2mm] {};

\draw[decorate,decoration=zigzag] (Isb) -- (Ibot)
      node[midway, below, inner sep=2mm] {};

\end{tikzpicture}
\caption{The Penrose diagram of an AdS-Schwarzschild two-sided black hole.  For all times $t_0$, the bulk algebra $\fa$ dual to the single trace algebra $\mathcal{S}_{(-t_0,t_0)}$ has a nontrivial relative commutant $\fb$ in $\mathcal{S}_{R}$. Topologically, this means that the bulk never fills up and there is a bifurcate horizon.}
\label{fig:TFDa}
\end{subfigure}
~
\begin{subfigure}{0.4\textwidth}
\centering
\includegraphics[scale=0.5]{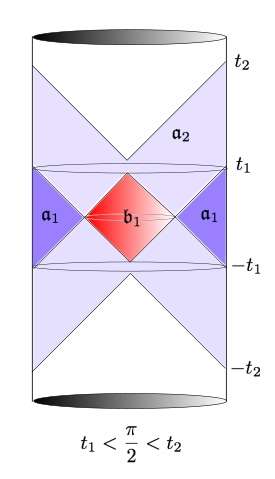}
\caption{A depiction of thermal AdS (right copy). The left copy in the thermofield double state is disconnected and not shown. Contrary to the AdS-Schwarzschild case, the region $\fa_1$ dual to $\mathcal{S}_{(-t_1,t_1)}$ for $t_1<\frac{\pi}{2}$ has a nonempty relative commutant $\fb_1$ inside $\mathcal{S}_{R}$, but for $t_2>\frac{\pi}{2}$, $\mathcal{S}_{(-t_2,t_2)}$ is equivalent to the full right copy, is equal to $\mathcal{S}_{R}$. This means that the bulk ``fills up" for $t_1>\frac{\pi}{2}$, and there is no bifurcate horizon.}
\label{fig:TFDb}
\end{subfigure}
\caption{Diagnosing the presence of a bifurcate horizon from the algebraic structure of boundary time bands.}
\label{fig:TFD}
\end{figure}

We denote the single-trace operator algebras for the CFT$_{R,L}$ in the large $N$ limit respectively by $\sS_R$ and $\sS_L$. 
They are respectively dual to the bulk regions $R$ and $L$. In particular, in the case of thermal AdS, the bulk $R$-region is simply the right copy of AdS.

Consider an open time interval $I = (-t_0,t_0)$ on the right boundary and the associated single-trace time band operator algebra $\sS_I$. By definition we have $\sS_I \subseteq \sS_R$. 
For the bulk dual given by the black hole geometry, it has been argued that 
$\sS_I$ is equivalent to the bulk algebra in the spherical Rindler region $\fa$ represented on Fig.~\ref{fig:TFD} (a)~\cite{LeuLiu21b,LeuLiu22}.
The existence of a horizon (or more coarsely, the fact that both sides of the black hole are connected) implies the geometric statement that for any choice of $t_0$, $\fa$ only covers a proper subset of the $t=0$ Cauchy slice of the $R$-region of the black hole. In other words, when the two sides of the thermofield double are connected, the $R$-region can never be ``filled up'' by  spherical Rindler regions, no matter how large $t_0$ is. In contrast, consider the copy of thermal AdS represented on Fig.~\ref{fig:TFD} (b). There, $\fa$ covers a full Cauchy slice of the $R$-region for $t_0 \geq {\pi \ov 2}$. 

The above geometric statement can be rephrased algebraically by saying that for the black hole geometry, the algebra $\sS_I$ is a strict subalgebra of $\sS_R$ for any  $t_0$. More explicitly, the commutant $\sS_I'$ is equivalent to the bulk causal complement of $\fa$, which includes the red shaded region and the striped region $\fb$.  The relative commutant of $\sS_I$ in $\sS_R$ is given by $\sS_I' \cap \sS_R$ and is equivalent to the striped bulk region $\fb$. 
This is to be contrasted with the thermal AdS case, where we have $\sS_I' \cap \sS_R = \emptyset$~(or equivalently $\sS_I = \sS_R$) for $t_0 \geq {\pi \ov 2}$.
Such algebraic statements are powerful, as they do not refer to any geometry, and thus can immediately be generalized to diagnostics of the presence or absence of a horizon in the stringy regime at finite $\lam$.

The above discussion motivates the following definition: 

\begin{defn}The two sides of the thermofield double are \textit{stringy-connected} if the algebra $\sS_{(-t_0,t_0)}$ is a strict subalgebra of $\sS_{R}$ for any finite value of $t_0$.\label{def0}\end{defn}

In the case where the two sides of the thermofield double are not stringy-connected, we can also probe the depth of the emergent radial direction thanks to the following:

\begin{defn}\label{def1}
The causal depth parameter $\mathcal{T}$ of a large $N$ thermofield double state is the largest value of $2t_0$ for which the algebra $\sS_{(-t_0,t_0)}$ is a strict subalgebra of $\sS_{R}$.
\end{defn} 

From Definition~\ref{def1}, bulk geometry gives that in the $\lam \to \infty$ limit  
\be\label{ejj}
\sT = \bca {\pi} & T<T_{\rm HP} \cr
\infty & T>T_{\rm HP}
\eca \ .
\ee
In Sec.~\ref{sec:exam}, we will give a purely boundary argument for ~\eqref{ejj}, and show that it generalizes to finite $\lam$. Using Definition~\ref{def1}, Definition \ref{def0} can also be stated by requiring that 
\be 
\sT = \infty   \ .
\ee

Intuitively, in the semi-classical limit, the depth parameter quantifies the depth of the bulk $R$-region, by relating this depth to
the time width of the boundary region obtained by 
sending light rays from a bulk point. See Fig.~\ref{fig:depth}.  
Note that when there is a horizon, clearly by definition, $\sT$ is infinite. In~\cite{Engelhardt:2016vdk}, such considerations were in fact used to characterize the depth of the bulk geometry using the bulk causal structure, and in~\cite{Nebabu:2023iox}, a quantum circuit model following the same line of thought was constructed. Here we have given an intrinsic boundary definition in terms of the commutant structure, which can in turn be used to characterize the bulk causal structure in the stringy regime.

\begin{figure}
\centering
\includegraphics[scale=0.4]{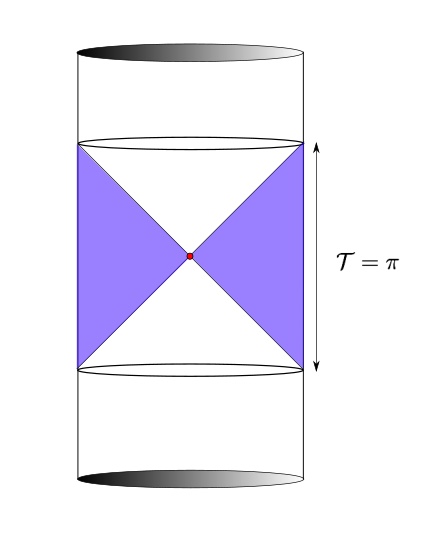}
\caption{A heuristic picture of the depth parameter in thermal AdS. In this case, there is a point, depicted in red, which sits at the ``center" of the bulk. The depth parameter $\mathcal{T}$ can be seen as the boundary time interval that can be reached by shooting light rays from this point. Here, $\mathcal{T}=\pi$.}
\label{fig:depth}
\end{figure}

The depth parameter also has an interesting boundary interpretation. 
It is the minimal width of a time interval such that the knowledge of the algebra of observables localized in the interval is enough to determine the full algebra of the $R$-system.  
For example, in the case of thermal AdS,  $\sS_I = \sS_R$ for $t_0 > {\pi \ov 2}$, i.e., we already have access to all the operations in the $R$-system given such an $\sS_I$. But in the case of a black hole, it is not possible to have access to the full $\sS_R$ for any $t_0$. For the theory at finite $N$, access to the operator algebra on a single Cauchy slice suffices 
to determine its future evolution. But the large $N$ limit, such a property breaks down as there are no equations of motion. This is related to the loss of determinism on the boundary in the large $N$ limit.
Hence there is a sense that the large $N$ limit leads to information loss even in the vacuum sector, although in the end all the information can be recovered by having access to the operator algebra for a time band with finite $t_0 > {\pi \ov 2}$. In the black hole sector, information loss is more drastic: one can never recover all the information even with access to operator algebras of finite time bands of arbitrary width. The causal depth parameter $\mathcal{T}$ can thus also be viewed as quantifying the amount of information loss there is in the system. Whenever it is nonzero, there is a loss of determinism. Under this interpretation, it is not only the formation of black holes, but also the emergence of the whole radial direction, that is related to information loss.
It should also be mentioned that even systems with $\sT=0$ can have information loss, as we will see in an example later.

In the semiclassical regime, while an event horizon is defined through the causal structure, it also has many other properties. For example, it is also a hypersurface of infinite redshift, a property that plays an essential role for the existence of Hawking radiation. In particular, it implies that quantum fields living outside the black hole must have spectral support for arbitrarily small frequencies (in terms of the Schwarzschild time). We would like to have a definition for a stringy black hole that still retains this property. As we will discuss later in Sec.~\ref{sec:genR}, Definition \ref{def0} does not by itself imply that quantum fields living outside such a ``horizon'' have spectral support for arbitrarily small frequencies. 
We will now formulate an additional condition to ensure that. 

An important property associated with a horizon in the semi-classical regime is the existence of not only a bifurcation surface, but also half-sided modular inclusions along the future or past horizon (see Fig.~\ref{fig:half}). In the TFD state, modular flow coincides with time translation. The existence of a half-sided modular inclusion then requires that the algebra of a semi-infinite time band is a strict subalgebra 
of $\sS_R$. This a strictly stronger condition than Definition \ref{def0}, as it involves semi-infinite intervals and not only finite intervals. While it may not be immediately intuitive, we will show in Sec.~\ref{sec:genR} that the emergence of a half-sided inclusion ensures that the spectral support of bulk quantum fields includes arbitrarily small frequencies. While Condition 1 captures some form of connectivity, it is only Condition 2 that guarantees all the properties of horizons. We therefore propose the definition:

\begin{defn}
There is a stringy horizon in the thermofield double state if the subalgebra $\sS_{(0,\infty)}$ is a strict subalgebra of $\sS_{R}$.
\label{def2}
\end{defn}

Note that by invariance under time translation and time reversal, this definition implies that $\sS_{(t_0,\infty)}$ is a strict subalgebra of $\sS_{R}$, and similarly for $\sS_{(-\infty,t_0)}$, for all values of $t_0$.

Instead of imposing the condition of half-sided inclusion, we could imagine defining the existence a stringy horizon by imposing the requirement that the system is ``chaotic''. There are various possible levels of chaos in a system \cite{Gesteau:2023rrx,Ouseph:2023juq}. For example, one possible requirement is that two-point functions cluster in time, i.e., decay with time. In particular, if $\rho(t)$ clusters exponentially fast, i.e., for some constants $C>0$ and $\alpha>0$,
\begin{align}\lvert\rho(t)
\rvert\leq Ce^{-\alpha t},\label{eq:expdecay}
\end{align} 
then $\rho(\omega)$ is analytic on a strip, therefore it can only vanish on isolated points. We will see in Sec. \ref{sec:full} that under mild assumptions, if $\rho(\omega)$ vanishes only at the origin, we can show that there is an emergent half-sided modular inclusion in the system. The condition \eqref{eq:expdecay} gives the weaker condition that $\rho(\omega)$ vanishes at an at most countable number of points on the real axis. It would be interesting to see to what extent a half-sided inclusion also emerges under this weaker assumption.

We also note that by definition $\underset{t_0>0}{\bigvee} \sS_I = \sS_R$, and thus even in the case $\sT = \infty$, we have $\underset{t_0>0}{\bigcap} (\sS_I' \cap \sS_R) = \mathbb{C}$. In particular if we want to define an algebra of operators associated with the bifurcate horizon, it is not enough to consider $\underset{t_0>0}{\bigcap} (\sS_I' \cap \sS_R)$. One instead needs to perform a sequential construction which will be described in Section \ref{sec:alg}.

We can similarly define a depth parameter and horizons using time band subalgebras of $\sS_L$. For the thermofield double state, there is an antiunitary operator $J$ that takes $\sS_R$ to $\sS_L$ and vice versa.  As a result, all the results presented here follow in an exactly analogous manner for $\sS_L$. In particular, as we will see in Sec.~\ref{sec:alg}, there is a precise sense in which the horizon defined using $\sS_L$ coincides with the one defined using $\sS_R$.

\begin{figure}
\centering
\begin{tikzpicture}[scale=0.7]
\node (I)    at ( 4,0)   {};
\node (II)   at (-4,0)   {};
\node (III)  at (0, 2.5) {};
\node (IV)   at (0,-2.5) {};

\path 
   (I) +(90:2)  coordinate (It) 
       +(-90:2) coordinate (Ib) 
       +(180:2) coordinate (Il) 
       ;

\path 
   (Il) +(135:1.4) coordinate  (Ilt) 
       +(-135:1.4) coordinate (Ilb) 
       +(180:2) coordinate (M) 
       ;

\draw[pattern=north west lines, pattern color=red]  (Itop)--(Ilt)--(Ib)--(Itop)--cycle
   (I) +(120:1.9) coordinate[label=:\text{\Large$\mathcal{N}^+$}] (Ipt) ;

\draw[pattern=north east lines, pattern color=blue]  (Ibot)--(Ilb)--(It)--(Ibot)--cycle
   (I) +(-110:2.7) coordinate[label=:\text{\Large$\mathcal{N}^-$}] (Ipt) ;

\path 
   (Il) +(135:5.63) coordinate  (Ist) 
       +(-135:5.63) coordinate (Isb) 
       ;

\path 
   (I) +(45:0.35)  coordinate (Itop1)
       +(-45:0.35) coordinate (Ibot1)
       +(180:1) coordinate (Ileft1)
       ;
\path 
   (I) +(90:4)  coordinate (Itop)
       +(-90:4) coordinate (Ibot)
       +(180:4) coordinate (Ileft)
       ;

\draw  (Ileft) -- (Itop) -- (Ibot) -- (Ileft) -- cycle;
\path  
  (II) +(90:4)  coordinate  (IItop)
       +(-90:4) coordinate (IIbot)
       +(0:4)   coordinate                  (IIright)
       ;
\draw 
      (IItop) --
          node[midway, below, sloped] {}
      (IIright) -- 
          node[midway, below, sloped] {}
      (IIbot) --
          node[midway, above, sloped] {}
          node[midway, below left]    {}    
      (IItop) -- cycle;
\path 
   (I) +(-174:3.05) coordinate (Ipt) ;
\draw[decorate,decoration=zigzag] (IItop) -- (Itop)
      node[midway, above, inner sep=2mm] {};

\draw[decorate,decoration=zigzag] (IIbot) -- (Ibot)
      node[midway, below, inner sep=2mm] {};

\end{tikzpicture}
\caption{Half-sided inclusions for future and past horizons. In the thermofield double state at strong coupling and high temperature, the algebras $\mathcal{N}^+$ and $\mathcal{N}^-$, which supported on future and past semi-infinite time intervals, are dual to the red and blue wedges in the bulk, respectively. In particular, it is the fact that they are inequivalent to the full algebra of right boundary observables that allows for the emergence of nontrivial future and past half-sided modular inclusions, which we will promote to a definition of a horizon in the TFD state in the stringy regime.}
\label{fig:half}
\end{figure}
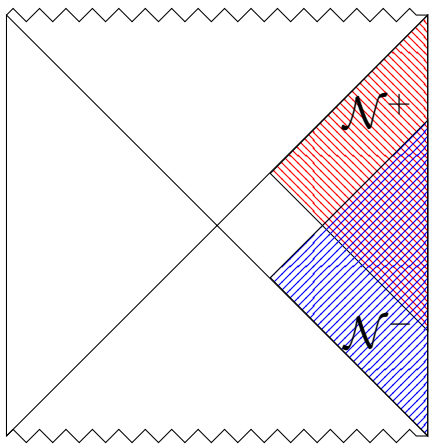

\subsection{Causal depth parameter and stringy horizons for general states} 
\label{sec:hor2}

The definitions of causal depth parameter and stringy horizon in the thermofield double state, as discussed above, can be readily generalized to general semi-classical two-sided and single-sided states. Note, however, that we defer the generalization of the notion of stringy connectivity until Section \ref{sec:RT}.

We refer to a state in the boundary theory as semi-classical if it has  a semi-classical bulk dual in the large $N$ and large $\lam$ limit. We would like to probe the corresponding bulk dual in the stringy regime with a finite $\lam$.

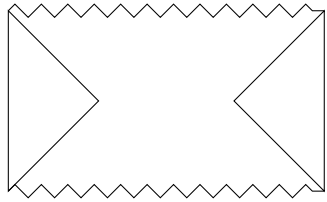
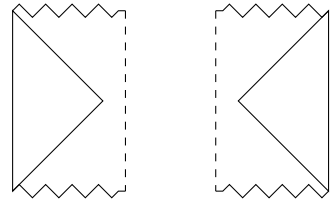
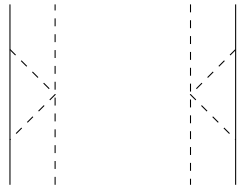
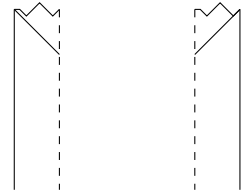
\begin{figure}[htbp]
\centering

\begin{subfigure}[b]{0.4\textwidth}
\centering
\begin{tikzpicture}[scale=0.3]
\node (I)    at ( 7,0)   {};
\node (II)   at (-7,0)   {};
\node (III)  at (0, 2.5) {};
\node (IV)   at (0,-2.5) {};

\path 
   (I) +(45:0.35)  coordinate (Itop1)
       +(-45:0.35) coordinate (Ibot1)
       +(180:1) coordinate (Ileft1)
       ;
\path 
   (I) +(90:4)  coordinate (Itop)
       +(-90:4) coordinate (Ibot)
       +(180:4) coordinate (Ileft)
       ;

\draw  (Ileft) -- (Itop) -- (Ibot) -- (Ileft) -- cycle;
\path  
  (II) +(90:4)  coordinate  (IItop)
       +(-90:4) coordinate (IIbot)
       +(0:4)   coordinate                  (IIright)
       ;
\draw 
      (IItop) --
          node[midway, below, sloped] {}
      (IIright) -- 
          node[midway, below, sloped] {}
      (IIbot) --
          node[midway, above, sloped] {}
          node[midway, below left]    {}    
      (IItop) -- cycle;
\path 
   (I) +(-174:3.05) coordinate (Ipt) ;
\draw[decorate,decoration=zigzag] (IItop) -- (Itop)
      node[midway, above, inner sep=2mm] {};

\draw[decorate,decoration=zigzag] (IIbot) -- (Ibot)
      node[midway, below, inner sep=2mm] {};

\end{tikzpicture}
\caption{The ``long black hole", a connected two-sided geometry.}
\end{subfigure}
\hspace{10pt}
\begin{subfigure}{0.4\textwidth}
\centering
\begin{tikzpicture}[scale=0.3]
\node (I)    at ( 7,0)   {};
\node (II)   at (-7,0)   {};
\node (III)  at (0, 2.5) {};
\node (IV)   at (0,-2.5) {};

\path 
   (I) +(45:0.35)  coordinate (Itop1)
       +(-45:0.35) coordinate (Ibot1)
       +(180:1) coordinate (Ileft1)
       ;
\path 
   (I) +(90:4)  coordinate (Itop)
       +(-90:4) coordinate (Ibot)
       +(180:4) coordinate (Ileft)
       +(180:5) coordinate (Illeft)
       ;
\path (Itop)+(180:5) coordinate (Ittop)
        ;
\path (Ibot)+(180:5) coordinate (Ibbot)
        ;
\path (IItop)+(0:5) coordinate (IIttop)
        ;
\path (IIbot)+(0:5) coordinate (IIbbot)
        ;

\draw  (Ileft) -- (Itop) -- (Ibot) -- (Ileft) -- cycle;
\path  
  (II) +(90:4)  coordinate  (IItop)
       +(-90:4) coordinate (IIbot)
       +(0:4)   coordinate                  (IIright)
       +(0:5) coordinate (Illeft)
       ;
\draw 
      (IItop) --
          node[midway, below, sloped] {}
      (IIright) -- 
          node[midway, below, sloped] {}
      (IIbot) --
          node[midway, above, sloped] {}
          node[midway, below left]    {}    
      (IItop) -- cycle;
\path 
   (I) +(-174:3.05) coordinate (Ipt) ;
\draw[decorate,decoration=zigzag] (Itop) -- (Ittop)
      node[midway, above, inner sep=2mm] {};
\draw[decorate,decoration=zigzag] (IItop) -- (IIttop)
      node[midway, above, inner sep=2mm] {};
\draw[decorate,decoration=zigzag] (Ibot) -- (Ibbot)
      node[midway, above, inner sep=2mm] {};
\draw[decorate,decoration=zigzag] (IIbot) -- (IIbbot)
      node[midway, below, inner sep=2mm] {};
\draw[dashed] (IIttop) -- (IIbbot)
      node[midway, above, inner sep=2mm] {};
\draw[dashed] (Ittop) -- (Ibbot)
      node[midway, above, inner sep=2mm] {};      
\end{tikzpicture}
\caption{A disconnected two-sided geometry with bifurcate horizons.}
\end{subfigure}
\vspace{10pt}
\begin{subfigure}[b]{0.4\textwidth}
\centering
\begin{tikzpicture}[scale=0.3]
\node (I)    at ( 5,0)   {};
\node (II)   at (3,0)   {};
\node (III)  at (-3, 0) {};
\node (IV)   at (-5,0) {};

\path 
   (I) +(90:2)  coordinate (Itop1)
       +(-90:2) coordinate (Ibot1)
       +(180:2) coordinate (Ileft1)
       ;
\path 
   (I) +(90:4)  coordinate (Itop)
       +(-90:4) coordinate (Ibot)
       +(180:4) coordinate (Ileft)
       ;

\path 
   (II) +(90:4)  coordinate (IItop)
       +(-90:4) coordinate (IIbot)
       +(180:4) coordinate (IIleft)
       ;

\path 
   (IV) +(90:2)  coordinate (IVtop1)
       +(-90:2) coordinate (IVbot1)
       +(0:2) coordinate (IVright1)
       ;

\path 
   (III) +(90:4)  coordinate (IIItop)
       +(-90:4) coordinate (IIIbot)
       +(180:4) coordinate (IIIleft)
       ;

\path 
   (III) +(90:4)  coordinate (Itop2)
       +(-90:4) coordinate (Ibot2)
       ;

\path 
  (IV) +(90:4)  coordinate  (IItop2)
       +(-90:4) coordinate (IIbot2)
       +(180:4)   coordinate                  (IIright2)
       ;

\path  
  (IV) +(90:4)  coordinate  (IVtop)
       +(-90:4) coordinate (IVbot)
       +(0:4)   coordinate                  (IVright)
       ;

\draw[dashed] (Itop1) -- (Ileft1) -- (Ibot1);
\draw[dashed] (IVtop1) -- (IVright1) -- (IVbot1);
\draw[dashed] (IItop) -- (IIbot) ;
\draw (Ibot) -- (Itop);
\draw (IVtop) -- (IVbot);
\draw[dashed] (IIIbot) -- (IIItop) ;
\end{tikzpicture}
\caption{A disconnected two-sided geometry without horizons.}
\end{subfigure} 
\hspace{10pt}
\begin{subfigure}[b]{0.4\textwidth}
\centering
\begin{tikzpicture}[scale=0.3]
\node (I)    at ( 5,0)   {};
\node (II)   at (3,0)   {};
\node (III)  at (-3, 0) {};
\node (IV)   at (-5,0) {};

\path 
   (I) +(90:2)  coordinate (Itop1)
       +(-90:2) coordinate (Ibot1)
       +(180:2) coordinate (Ileft1)
       ;
\path 
   (I) +(90:4)  coordinate (Itop)
       +(-90:4) coordinate (Ibot)
       +(180:4) coordinate (Ileft)
       ;

\path 
   (II) +(90:4)  coordinate (IItop)
       +(-90:4) coordinate (IIbot)
       +(180:4) coordinate (IIleft)
       +(90:2)   coordinate                  (IItop1)
       ;

\path 
   (IV) +(90:2)  coordinate (IVtop1)
       +(-90:2) coordinate (IVbot1)
       +(0:2) coordinate (IVright1)
       ;

\path 
   (III) +(90:4)  coordinate (IIItop)
       +(-90:4) coordinate (IIIbot)
       +(180:4) coordinate (IIIleft)
       ;

\path 
   (III) +(90:4)  coordinate (Itop2)
       +(-90:4) coordinate (Ibot2)
       +(90:2)   coordinate                  (IIItop1)
       ;

\path 
  (IV) +(90:4)  coordinate  (IItop2)
       +(-90:4) coordinate (IIbot2)
       +(180:4)   coordinate                  (IIright2)
       
       ;

\path  
  (IV) +(90:4)  coordinate  (IVtop)
       +(-90:4) coordinate (IVbot)
       +(0:4)   coordinate                  (IVright)
       ;

\draw (IItop1) -- (Itop);
\draw (IIItop1) -- (IVtop);
\draw[decoration=zigzag] (Itop) decorate{-- (IItop)};
\draw[decoration=zigzag] (IIItop) decorate{-- (IVtop)};
\draw[dashed] (IItop) -- (IIbot);
\draw (Itop) -- (Ibot);
\draw[dashed] (IIItop) -- (IIIbot);
\draw (IVtop) -- (IVbot);
\end{tikzpicture}
\caption{A disconnected two-sided geometry with future horizons.}
\end{subfigure}
\caption{Cartoons for various types of two-sided states. 
These examples also illustrate that the causal depth parameter does not allow to predict spacetime connectivity in a general state. The cases (a) and (b) both have infinite causal depth parameter, but (a) is connected whereas (b) is disconnected. Similarly, (b) and (c) are both disconnected but (b) has infinite causal depth parameter while (c) has finite causal depth parameter. Finally case (d) is disconnected but has a future horizon.}
\label{fig:genex}
\end{figure}

First consider a general semi-classical two-sided state $\ket{\Psi}$ in the $\lam \to \infty$ limit. See Fig.~\ref{fig:genex} for some cartoon examples. 
Denote by $\sM_R$ the algebra of operators of the $R$-system in the large $N$ limit. 
In the case of TFD state we have $\sS_R= \sM_R$, and time translations coincide with modular flows. 
But in general we have $\sS_R \subset \sM_R$ and the state is not time translation invariant.\footnote{As we will discuss further in Sec.~\ref{sec:RT}, there may be additional operators generated by modular flows. Due to these complications that we will introduce a more general definition of stringy connectivity in Sec.~\ref{sec:RT}.} 
Nevertheless, the definition of causal depth parameter and the diagnostic of a horizon are largely not affected except that there is no symmetry between left and right, and the depth parameter can be time-dependent. 

\begin{defn}\label{def01}
The right causal depth parameter $\mathcal{T}_R (t)$ of a large $N$ two-sided state is the largest value of $2t_0$ for which the algebra $\sS_{(t-t_0,t+t_0)}$ is a strict subalgebra of $\sS_{R}$. 
\end{defn}

\begin{defn} 
A large $N$ two-sided state has a right bifurcate horizon if semi-infinite future and past time band algebras are proper 
subalgebras of $\sS_R$.

\end{defn} 

In fact, it is enough to show that the causal depth parameter  $\sT_R (t)$ is infinite for one value of $t$, since then it is infinite for all values of $t$. Indeed, fix $t^\prime\in\mathbb{R}$. For all values of $t_0$, the interval $(t^\prime-t_0,t^\prime+t_0)$ is included in some interval of the form $(t+T,t-T)$ where we have chosen $T$ large enough. The algebra of that interval is a strict subalgebra since $\sT_R (t)=\infty$, therefore, so is the algebra of $(t^\prime-t_0,t^\prime+t_0)$.

We can also diagnose the future and past horizons as 
\begin{defn}  
A large $N$ two-sided  state has a nontrivial right future horizon after time $\tau$ if the algebra $\sS_{(\tau,\infty)}$ is a strict subalgebra of $\sS_R$. 
\end{defn}

\begin{defn}  
A large $N$ two-sided  state has a nontrivial right past horizon before time $\tau$ if the algebra $\sS_{(-\infty, \tau)}$ is a strict subalgebra of $\sS_R$. 
\end{defn} 

There are parallel definitions for the $L$-system. Note that in the case of a state that is invariant under time translations such as the thermofield double, all three definitions coincide. 
 
When $\sT_R (t)$ is finite, i.e. the $R$-region has a finite depth, the bulk geometry should necessarily be disconnected.
But as indicated in Fig.~\ref{fig:genex}, $\sT_R (t)$ being infinite does not say anything 
whether the bulk geometry is connected or disconnected.  To diagnose whether the bulk geometry is connected requires the further developments of Sec.~\ref{sec:RT} and will be discussed in Sec.~\ref{sec:ER}.

The above definitions can also be straightforwardly adapted for 
a single-sided semi-classical state (i.e. with only one boundary). We just need to change two-sided state to single-sided state and remove the word ``right'' in various places.  For example, 
\begin{defn}\label{def11}
The  causal depth parameter $\mathcal{T} (t)$ of a large $N$ single-sided state is the largest value of $2t_0$ for which the algebra $\sS_{(t-t_0,t+t_0)}$ is a strict subalgebra of $\sS$. 
\end{defn} 
Here $\sS$ is the single-trace operator algebra for the full boundary.

\subsection{The causal depth parameter from the spectral function} \label{sec:spec}

In this subsection we show that the causal depth parameter can be determined by the spectral function of the boundary theory. 
In fact, it is equal\footnote{Up to a technicality.} to 
an invariant known in harmonic analysis as the~\textit{exponential type}.

In the large $N$ limit, in a semi-classical state, correlation functions of single-trace operators factorize into sums of products of two-point functions. Consequently, there is a generalized free field theory around the state for each single-trace operator.
We can diagonalize two-point functions such that different operators do not mix, and thus each generalized free field generates its own algebra. The full algebra is then recovered from all the algebras associated to individual fields. 

In this subsection, we will consider only \textit{one} (Hermitian) generalized free field $\phi$ acting on the GNS Hilbert space of some semi-classical state $\ket{\Psi}$, which can be the TFD state, the vacuum or other two-sided/single-sided states. We will return to the problem of assembling the results obtained for each individual field later in this section.
In the case of a two-sided state, $\phi$ should be understood as $\phi_R$, i.e., a generalized free field in the CFT$_R$. 
For details on the setup, see Appendix~\ref{app:araki}. 
For convenience we will assume $\ket{\Psi}$ is time translation invariant and suppress possible spatial directions for notational simplicity.

Consider the commutator
\be \label{com1}
 \rho (t-t') \equiv  \vev{\Psi|[\phi (t), \phi (t')]| \Psi} 
\ee
which satisfies $\rho^* (t) =  \rho (-t) = - \rho (t)$. The Fourier transform of $\rho (t)$ is called the spectral function and satisfies the properties
\be 
\rho (-\om) = - \rho (\om), \quad \ep (\om) \rho (\om) \geq 0  \ .
\ee
In the TFD state $\rho (\om)$ is related to the Wightman function as 
 \begin{align}
 G_+ (\omega)=\frac{\rho(\omega)}{1-e^{-\beta\omega}} ,
  \end{align}
while in the vacuum state we have 
  \be 
  G_+ (\om) = \th (\om) \rho (\om) \ .
  \ee

The algebra $\sS_\phi$ associated with the field $\phi$ is generated by the
\be 
\phi (f) = \int d t \, f (t) \phi (t)
\ee
with $f(t)$ real and normalizable in terms of the inner product
\be 
\langle f_1, f_2  \rangle_\beta =\int {d\om \ov 2 \pi} \, f^*_1 (\om) \, \th (\om) {\rho (\om)}
\, f_2  (\om) \ .
\label{eq:inner2}
\ee
In the above equation, $f_1 (\om)$ is the Fourier transform of $f_1 (t)$, with $f_1 (-\om) = f_1^* (\om)$. 

For a general state (not equal to vacuum), normalizability for the inner product \eqref{eq:inner2} is too weak. On top of this normalizability, we need to impose the extra condition that $f$ is in the domain of the operator of multiplication by $\mathfrak{b}_-$ defined in Eq. \eqref{c11HH}. In the case of a thermal state which will be of most interest to us, this is equivalent to the stronger normalizability condition \be 
\langle f_1, f_2  \rangle_\beta =\int {d\om \ov 2 \pi} \, f^*_1 (\om) \, \th (\om) \frac{\rho (\om)}{1-e^{-\beta\omega}} 
\, f_2  (\om) \ .
\label{eq:inner3}
\ee

Note that in order to make these inner products positive definite in the case in which the support of $\rho(\omega)$ is not the full real line, the set of the admissible $f(t)$ should also be quotiented out by the functions that are supported only outside the support of $\rho$.

 The subalgebra $\sS_I \subseteq \sS_\phi $ for the time band $I = (-t_0, t_0)$ is generated by $\phi (f_I)$ with the support  ${\rm supp} \, f_I (t) \subset I$. We are interested in whether this inclusion is strict. For a generalized free field, this question can be shown to be equivalent to asking whether the relative commutant of $\sS_I$ in $\sS_R$, i.e., $\sS_I^\prime\cap\sS_R$, is nonempty (see Appendix \ref{app:equ}). The presence of a nontrivial relative commutant indicates the existence of a region in the emergent dual stringy spacetime that is spacelike to the operators in the causal wedge of $I$.

 The (relative) commutant of $\sS_I$ can be shown to be generated by $\phi (g)$, for $g$ real-valued,  satisfying  
\be \label{defg} 
[\phi (g), \phi (f_I)] =0 \quad \to \quad (g, f_I) \equiv \int dt dt' \, g(t) \rho(t-t') f_I (t') = 0  \ .
\ee
Equation~\eqref{defg} by itself only says that the subalgebra generated by $\phi (g)$ belongs to the (relative) commutant of $\sS_I$. 
That it in fact generates the full relative commutant follows from a theorem by Araki, which we review in Appendix~\ref{app:araki}. Equation~\eqref{defg} can also be stated as the fact that $\phi (g)$ lies in the symplectic complement of the subspace spanned by $f_I$, since $(g,f)$, as defined by~\eqref{defg}, is a symplectic product.

Condition~\eqref{defg} can also be phrased as the requirement that the support of the convolution 
\be\label{defs}
(g \ast \rho)  (t') = \int dt \, g(t) \rho (t^\prime-t) ,
\ee
lies in the complement of $I$. 
It is convenient to write~\eqref{defg} and~\eqref{defs} in frequency space 
\be 
\int {d \om  \ov 2 \pi} g^* (\om) \rho (\om) f_I (\om) =0, \quad (g \ast \rho)  (\om) = g (\om) \rho (\om)  \ .
\ee
Thus the relative commutant  is fully determined by the spectral function $\rho (\om)$.

It turns out that our question maps to a well-studied problem in harmonic analysis, known as the \textit{exponential type problem}. The exponential type problem is at the core of some very deep recent results in mathematics and can be formulated as follows:

\textbf{Exponential type problem:} Given a measure $\rho(\omega)$, what is the smallest value of $\sT$ for which the Fourier transforms of compactly supported distributions on $(-\sT/2, \sT/2)$ become dense in $L^2(\rho)$?

The answer to the exponential type problem applied to the spectral density is exactly\footnote{Assuming that in the finite temperature case, the extra constraint of normalizability with respect to the inner product \eqref{eq:inner2} for $\beta$ finite instead of infinite does not change the answer.} the depth parameter $\mathcal{T}$. In Appendix \ref{app:exp}, we review the state of the art on the exponential type problem. While the general case is difficult, it turns out to be quite simple to obtain $\sT$ for various cases of interest for holographic applications, which we will discuss in 
Sec.~\ref{sec:exam}. 

It is amusing to note that the exponential type problem was first introduced in the work of Kolmogorov and others on chaos in classical dynamical systems (see for example \cite{krein}), where they asked how much time is necessary to observe a system for in order to be able to predict its whole evolution. Our discussion provides a holographic interpretation of such a question. In particular, this relationship to dynamical systems is closely related to the notion of information loss in a large $N$ system, which we commented on earlier in Sec.~\ref{sec:hor}.

\subsection{General statements on $\sT$ and half-sided inclusions} \label{sec:genR}

In Sec.~\ref{sec:hor}, we introduced two different notions: 

(i) the notion of stringy connectivity in the thermofield double (for which $\mathcal{T} = \infty$); 

(ii) stringy horizons, in terms of half-sided inclusions of semi-infinite time band algebras. 

\noindent We saw that (ii) implies (i). Here we further clarify the different physics behind these two conditions.
We will see that (i) is only concerned with large frequency behavior of the  boundary generalized free field theory, while (ii) is concerned with both large and small frequency behavior.

The value of the causal depth parameter $\mathcal{T}$ can be obtained  
from the largest value of $a$ for which an \textit{$a$-uniform, or $a$-regular}, sequence can be embedded into the spectral density of the generalized free field under consideration. While we refer the reader to Appendix \ref{app:exp} for a detailed explanation of these notions, here we give an intuitive account of their physical meaning.

Roughly speaking, an $a$-uniform or $a$-regular sequence can be seen as a sequence of points that are embedded in the support of the spectral function, and that become equally spaced with spacing $1/a$ as $\omega\rightarrow\pm\infty$. In other words, it is a sequence $(\lambda_n)$
such that
\begin{align}
\rho (\lam_n) \neq 0, \quad \lam_n \approx \frac{n}{a} + c_{\pm}, \quad n \to \pm\infty,
 \end{align} 
 where the precise meaning of the symbol $\approx$ is clarified in Appendix \ref{app:exp}. In particular, only the behavior at large frequencies matters when deciding whether a sequence is $a$-uniform or not. 
 
Since the depth parameter $\mathcal{T}$ is encoded in the largest value of $a$ for which an $a$-uniform sequence can be embedded in the spectral density, we deduce that the value of $\mathcal{T}$ (and whether or not it is infinite) is solely encoded in the large-frequency behavior of the boundary spectral function. However, we stress that it is not how $\rho (\om)$ grows or decays with a large $\om$---usually characterized as the UV behavior---that is important here. Rather, it is the average spacing between the points in the support of $\rho$ that is required to determine $\mathcal{T}$.

In the semi-classical regime, while an event horizon is defined by the causal structure, it is also a hypersurface of infinite shift. This implies that spectral functions of matter fields outside it are supported at arbitrarily small frequencies, which is in turn important for the thermal interpretation.
The condition $\sT = \infty$ can be viewed as the algebraic counterpart of the causal definition for an event horizon. The above discussion tells us that whether $\sT = \infty$ or not is not sensitive to the behavior of $\rho (\om)$ at small $\om$. For example, $\rho(\om)$ can have a gap near $\om =0$ and still give $\sT = \infty$. This may be interpreted as an indication that,
in the stringy regime, the causal condition no longer warrants the desired spectral behavior needed for interpreting
the horizon as being ``thermal". Thus a stronger condition is needed.

We now show that the condition of half-sided inclusion does require that $\rho (\om)$ is supported at all frequencies.
To see it, we can prove the following proposition:

\begin{prop}
In a (0+1)-D generalized free field theory at finite temperature carrying a half-sided modular inclusion, the spectral function cannot vanish on any open interval.
\label{prop:hsmi}
\end{prop}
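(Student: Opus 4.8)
The plan is to exploit the fact that a half-sided modular inclusion $\sS_{(0,\infty)} \subsetneq \sS_R$ forces, via Borchers' theorem and the Wiesbrock/Araki--Zsido characterization, a very rigid analytic structure: there must be a one-parameter semigroup of translations $U(s) = e^{isP}$ with $P \geq 0$ implementing $\sS_{(s,\infty)} \subset \sS_{(0,\infty)}$ for $s > 0$, together with the modular flow $\Delta_R^{it}$ acting as dilations. Concretely, I would first translate the abstract statement into a statement about the one-particle (generalized free field) structure described in Sec.~\ref{sec:spec}: the field algebra $\sS_R$ corresponds to a standard subspace $K \subset \sH_1$ of the one-particle Hilbert space $\sH_1 = L^2(\RR, \th(\om)\rho(\om)\tfrac{d\om}{2\pi(1-e^{-\beta\om})})$ (or the appropriate completion), and $\sS_{(0,\infty)}$ corresponds to the subspace $K_+$ generated by $f$ with $\mathrm{supp}\, f \subset (0,\infty)$. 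Time translation acts by $f(t) \mapsto f(t-s)$, i.e.\ multiplication by $e^{-i\om s}$ in frequency space, and $K_+ \subset K$ with $\bigcap_s U(s) K_+ = \{0\}$, $\overline{\bigcup_s U(-s) K_+} = \sH_1$.

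The key step is the positivity of the generator. The half-sided inclusion property (this is the content of the Borchers--Wiesbrock analysis, which I would cite) implies that the generator $P$ of the translation semigroup restricted to $K_+$ satisfies $P \geq 0$ on the full Hilbert space $\sH_1$ --- but $P$ here is precisely the operator of multiplication by $\om$. So the spectral measure of $P$, which is exactly $\th(\om)\rho(\om)\tfrac{d\om}{2\pi(1-e^{-\beta\om})}$ restricted to its support, must be the spectral measure of a \emph{positive} self-adjoint operator that generates a semigroup $U(s)$, $s\geq 0$, of \emph{contractions} mapping $K_+$ into itself. Now suppose for contradiction that $\rho$ vanishes on an open interval $(\om_1,\om_2) \subset (0,\infty)$. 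Then the one-particle space $\sH_1$, as a space of boundary values, decomposes and the translation semigroup $U(s)$ would preserve a decomposition of $\sH_1$ into pieces supported on $(0,\om_1)$ and $(\om_2,\infty)$; but a function $f \in K_+$ has $\mathrm{supp}\, f \subset (0,\infty)$ in \emph{time}, so $f(\om)$ is the boundary value of a function holomorphic in the lower half $\om$-plane (by Paley--Wiener), and such a function cannot vanish on an open real interval unless it vanishes identically. This is the crux: the support condition in time (holomorphy) is incompatible with $f(\om)$ being forced to vanish on $(\om_1,\om_2)$ for \emph{all} $f \in K_+$, yet the inner product only "sees" $\om$ in the support of $\rho$. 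The half-sided inclusion requires $K_+$ to be \emph{nontrivial and proper}; I would argue that if $\rho \equiv 0$ on $(\om_1, \om_2)$, then the relative commutant computation of Eq.~\eqref{defg} shows $K_+$ would actually equal $K$ (every $g$ supported in $(-\infty,0)$ is symplectically orthogonal to everything, after quotienting by the kernel of $\rho$), contradicting strictness --- or, conversely, that the only way to have a proper half-sided inclusion with the correct commutation relations is for $\rho$ to have full support.

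The main obstacle I anticipate is making the "holomorphy versus vanishing of $\rho$" tension fully rigorous in the presence of the quotient by functions supported outside $\mathrm{supp}\,\rho$ (mentioned right before Eq.~\eqref{defg}): one must check that the standard subspace $K_+$ does not become degenerate in a way that trivializes the argument, and that the Paley--Wiener/Titchmarsh argument survives passage to the $L^2(\rho)$ completion. A clean way around this is to argue directly at the level of the modular operator: Wiesbrock's theorem tells us the modular operator $\Delta^{it}$ of $(K,\Omega)$ and the translations generate a representation of the $ax+b$ group, and the commutation relation $\Delta^{it} U(s) \Delta^{-it} = U(e^{-2\pi t}s)$ forces the spectrum of $P = $ (mult.\ by $\om$) to be dilation-invariant, hence either $\{0\}$, $[0,\infty)$, or $(-\infty,\infty)$ up to null sets --- in particular it cannot have a gap $(\om_1,\om_2)$. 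I would present this dilation-invariance argument as the primary proof, since it sidesteps the delicate Paley--Wiener quotient issues, and remark that it also immediately rules out $\rho$ vanishing on \emph{any} open interval (not just one in $(0,\infty)$), which is exactly the statement of the proposition. The remaining care is just to confirm that the relevant self-adjoint generator is genuinely multiplication by $\om$ on the GNS space with the stated measure, which follows from the setup in Appendix~\ref{app:araki} and the identification of modular flow with time translation in the thermal state.
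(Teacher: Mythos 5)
Your primary argument rests on a misidentification of the operators in the Borchers--Wiesbrock structure, and this is a genuine gap. In the TFD state the modular flow of $\sS_R$ \emph{is} time translation, so time translation plays the role of the dilations of the $ax+b$ group; that is exactly what makes $\sS_{(0,\infty)}\subset\sS_R$ a half-sided \emph{modular} inclusion. The one-parameter group with positive generator that Wiesbrock's theorem produces is a derived group, with generator $\tfrac{1}{2\pi}\le(\log\De_{\sS_{(0,\infty)}}-\log\De_{\sS_R}\ri)$ (the null translations along the horizon in the bulk picture); it is not generated by multiplication by $\om$. Multiplication by $\om$ is the one-particle Liouvillian, whose spectrum is the (origin-symmetric) support of $\rho$ and is certainly not positive, so the claim that ``$P\geq 0$ where $P$ is multiplication by $\om$'' is false, and the dilation-invariance conclusion you draw applies to the wrong operator. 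The relation $\De^{it}U(s)\De^{-it}=U(e^{-2\pi t}s)$ forces dilation invariance of the spectrum of the \emph{translation} generator $G$, which only gives $\mathrm{spec}(G)=\{0\}$ or $[0,\infty)$ and says nothing directly about $\mathrm{supp}\,\rho$. To extract a statement about $\mathrm{supp}\,\rho$ along these lines you would need substantially more: that the inclusion descends to a half-sided inclusion of standard subspaces of the one-particle space, and the representation theory of the $ax+b$ group (on the part of the space where $G>0$, the dilation generator has Lebesgue spectrum equal to all of $\RR$); none of this appears in your proposal.

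Your fallback remark --- that if $\rho$ vanished on an open interval the relative commutant of the half-line algebra would trivialize --- is the correct statement to aim for, but you assert it without the harmonic-analytic input that makes it true. The paper's proof is exactly this route made precise: a proper inclusion has a nontrivial relative commutant (Appendix \ref{app:equ}), so there is a nonzero $f$ with $(f\ast\rho)(t)$ supported in a half-line; Beurling's uniqueness theorem (a tempered distribution whose support omits a sufficiently ``long'' union of intervals and whose Fourier transform vanishes on an open interval must vanish identically) then forces $f(\om)\rho(\om)$, and hence $\rho(\om)$, to be nonzero on every open interval. The pointwise Paley--Wiener/boundary-value observation you invoke for individual test functions does not on its own survive the passage to the $L^2(\rho)$ completion; a uniqueness theorem of Beurling type is genuinely needed at this step.
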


\begin{proof}
Suppose that there exists a nonzero function $f$, normalizable for the inner product \eqref{eq:inner2}, in the relative commutant of a half-infinite interval. Then, the Fourier transform of $f(\omega)\rho(\omega)$ must vanish on a half-infinite interval (without loss of generality, say the positive reals). Now we can invoke Beurling's uniqueness theorem \cite{Beurling}, for example the version given in Theorem 4 of \cite{f4312568-16eb-3ae7-9d52-073fcbb92f76}, which we recall here:

\begin{thm}[\cite{f4312568-16eb-3ae7-9d52-073fcbb92f76}]
    Suppose that $T$ is a tempered distribution on $\mathbb{R}$ and that the complement of the support of $T$ contains the disjoint union of closed intervals \begin{align*}\bigcup_{n=1}^\infty[l_n-a_n,l_n+a_n],\end{align*}
    \begin{align*}0<l_1<l_2<\dots<l_n<\dots\to\infty,\end{align*}
    that 
    \begin{align*}\sum_{n=1}^\infty\left(\frac{a_n}{l_n}\right)^2=\infty,\end{align*} and that the Fourier transform $\hat{T}$ vanishes on an open interval. Then, $T=0$.
\end{thm}

In our case we can choose the times $l_n=e^n$ and $a_n=e^{n-1}/2$, to deduce that $f(\omega)\rho(\omega)$ does not vanish on any open interval. In particular, this implies that $\rho(\omega)$ does not vanish on any open interval.
\end{proof}

We now give two physical examples where the depth parameter may be infinite but there is no half-sided modular inclusion.

The spectral function $\rho (\om, \vk)$ of a free scalar field of mass $m$ at zero temperature has the form 
\be \label{frema}
\rho (\om, \vk) = f(-k^2) \th (-k^2 - m^2) (\th (\om)  - \th (-\om)), \quad k^2 \equiv -\om^2 + \vk^2  \ .
\ee
The spectral function has a spectral gap at $\om =0$ for any $\vk$, but has a continuous spectrum beyond the gap. 
The  depth parameter is infinite, but there is no half-sided inclusion, as expected of a system at zero temperature.

Another interesting example is the one of the so-called ``primon gas" or ``Riemannium", whose spectral function is supported at the logarithms of the prime numbers \cite{Julia_1990}. We have:

\begin{prop}
    The depth parameter of the Riemannium satisfies $\mathcal{T}=\infty$, but there is no stringy horizon in the Riemannium at any temperature.
\end{prop}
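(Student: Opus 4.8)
The plan is to verify the two halves of the statement separately, using the spectral-function criteria already established. First, the claim $\mathcal{T} = \infty$ for the Riemannium: the spectral function is supported on the set $\{\pm\log p : p \text{ prime}\}$ (together with, depending on conventions, the origin and logarithms of prime powers). By the discussion of Sec.~\ref{sec:spec} and Appendix~\ref{app:exp}, $\mathcal{T}$ equals (up to the stated technicality) the exponential type associated to this measure, which is governed by the largest $a$ for which an $a$-uniform sequence embeds into the support. So I would examine the asymptotic density of $\{\log p\}$: by the prime number theorem, the counting function of primes below $x$ is $\sim x/\log x$, hence the number of $\log p \le L$ is $\sim e^L / L$, which grows \emph{faster} than linearly in $L$. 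This means the points $\log p$ become arbitrarily densely spaced as $L \to \infty$: for any target spacing $1/a$, one can extract a subsequence that is $a$-uniform (indeed for every $a$). Therefore the supremum of admissible $a$ is $+\infty$, giving $\mathcal{T} = \infty$. I expect the only subtlety here is making ``one can extract an $a$-uniform subsequence'' precise against whatever exact regularity definition Appendix~\ref{app:exp} uses — but super-linear growth of the counting function should comfortably suffice, since it provides far more points than any uniform sequence requires.

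Second, the claim that there is no stringy horizon at any temperature: by Definition~\ref{def2} (and the finite-temperature discussion around \eqref{eq:inner3}), a stringy horizon requires $\sS_{(0,\infty)}$ to be a strict subalgebra of $\sS_R$, equivalently the existence of a nonzero $f$, normalizable for the relevant inner product, in the relative commutant of the semi-infinite interval. As in the proof of Proposition~\ref{prop:hsmi}, such an $f$ would force the Fourier transform of $f(\omega)\rho(\omega)$ to vanish on a half-line. But $\rho(\omega)$ is a discrete (atomic) measure supported on an unbounded, sparse set $\{\pm\log p\}$. I would argue that no such $f$ can exist: $f(\omega)\rho(\omega)$ is then a distribution supported on $\{\pm\log p\}$, i.e., (a limit of) a sum of delta functions $\sum_p c_p\,\delta(\omega \mp \log p)$, and its inverse Fourier transform is a (generalized) almost-periodic function $\sum_p c_p e^{\pm i t \log p}$, which — being a nontrivial Dirichlet-type series in disguise — cannot vanish on any interval, let alone a half-line, unless all $c_p = 0$. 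The cleanest route is probably again Beurling's uniqueness theorem (the version quoted from \cite{f4312568-16eb-3ae7-9d52-073fcbb92f76}): the complement of the support of $f(\omega)\rho(\omega)$ contains the gaps between consecutive primes' logarithms; with $l_n = \log p_n$ and $a_n$ half the gap to the neighbors, one checks $\sum (a_n/l_n)^2 = \infty$ because the logarithmic gaps $\log p_{n+1} - \log p_n$ do not shrink fast enough relative to $\log p_n$ — in fact $a_n/l_n \sim (\log p_{n+1}-\log p_n)/(2\log p_n)$, and using $p_{n+1}-p_n = O(p_n)$ on average one gets $a_n/l_n \gtrsim 1/(n \log n)$-type bounds whose squares still diverge when summed appropriately, or more robustly one uses a thinned subsequence of primes with controlled gaps. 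Beurling's theorem then forces $f(\omega)\rho(\omega) = 0$, hence $f = 0$ on the support of $\rho$, contradicting nontriviality.

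The main obstacle I anticipate is the second part, specifically verifying the Beurling divergence condition $\sum (a_n/l_n)^2 = \infty$ with genuine bounds on prime gaps. One does not want to rely on deep results about small prime gaps; instead I would pass to a convenient \emph{sparse} subsequence of primes (or prime powers) on which the logarithmic gaps are bounded below by a constant while $l_n$ still grows only like $\log$ of an exponentially-growing index, making $a_n/l_n$ of order $1/n$ and the sum of squares convergent — which is the \emph{wrong} direction. So the better strategy is the opposite: keep \emph{all} the primes, and observe that although individual logarithmic gaps $\log p_{n+1} - \log p_n \approx (p_{n+1}-p_n)/p_n$ are typically of size $\sim (\log p_n)/p_n$, which is tiny, the relevant quantity for Beurling is the ratio to $l_n = \log p_n$, and one must be careful that these intervals, being short, might not individually satisfy the hypothesis. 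The honest fix is to group consecutive primes into blocks so that each block contributes a gap interval of size comparable to $l_n$ itself (there are arbitrarily long runs where one can do this since $\sum 1/p$ diverges slowly), yielding $a_n/l_n$ bounded below and the sum of squares trivially divergent. Alternatively — and this is likely what the paper does — one invokes exponential decay: the two-point function of the Riemannium does \emph{not} cluster (it is almost periodic), so \eqref{eq:expdecay} fails, and one argues more directly from almost-periodicity that $\rho(\omega)$ being purely atomic with unbounded support is incompatible with a half-sided inclusion. I would present the Beurling argument as the rigorous backbone and flag the block-grouping of primes as the one computational lemma that needs care.
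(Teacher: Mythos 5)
Your first half is essentially the paper's argument and is fine: the number of points $\log p$ below $L$ grows like $e^{L}/L$, so for every $d$ one can extract a $d$-uniform subsequence of the $\log p_n$ (the paper does this concretely with the sampling intervals $I_n=[n(n+1)/2,(n+1)(n+2)/2]$, for which the conditions of Appendix \ref{app:exp} are easy to check) and then Proposition \ref{prop:c7} gives $\mathcal{T}=\infty$. The subtlety you flag (verifying the exact regularity conditions) is real but harmless, exactly as you suspect.

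The second half, however, has a genuine gap. Your ``rigorous backbone'' applies Beurling's theorem to $T=f(\omega)\rho(\omega)$ in the \emph{frequency} domain, so the intervals $[l_n-a_n,l_n+a_n]$ must lie in the gaps of $\{\log p_n\}$ — and those gaps are far too small for the divergence hypothesis. By Bertrand's postulate every interval $[x,2x]$ contains a prime, so every log-gap has length less than $\log 2$; moreover the disjoint gap-intervals with centers in a window $[L,L+1]$ have total length at most $1$, whence $\sum_{l_n\in[L,L+1]}(a_n/l_n)^2\le L^{-2}(\max_n a_n)\sum_n 2a_n=O(L^{-2})$, and summing over $L$ the series \emph{converges} for any admissible family with $l_n\to\infty$. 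There is no way to ``group consecutive primes into blocks'' to manufacture gaps comparable to $l_n$: the support of the measure is what it is, and it has no gaps of length $\ge\log 2$ at all. The correct route is the one you only gesture at in your final sentence, and it is the paper's one-line proof: the spectral density of the Riemannium has discrete support, hence vanishes on open intervals, hence Proposition \ref{prop:hsmi} already forbids a half-sided modular inclusion at any temperature. (That proposition is itself proved by running Beurling in the \emph{time} domain, where $T=(f\ast\rho)(t)$ vanishes on a half-line and one can place the intervals at $l_n=e^{n}$, $a_n=e^{n-1}/2$ with $a_n/l_n$ constant, so the divergence condition is trivially met, while $\widehat{T}\propto f\rho$ vanishes on open intervals because the support is discrete; Beurling then forces $f\rho=0$.) Swapping which domain carries the interval family is precisely what breaks your version of the argument.
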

\begin{proof}
Denote the $n^{th}$ prime number by $p_n$. We construct a subsequence of the $(\mathrm{ln}\,p_n)$ that is $d$-uniform in the sense of Appendix \ref{app:exp} in the following way: consider the sequence $I_n=[n(n+1)/2,(n+1)(n+2)/2]$. The energy condition \eqref{eq:energy} is automatically satisfied. Moreover we can always find $\lvert n\rvert d$ logs of prime numbers in this interval for $\lvert n\rvert$ large enough, as the number of logs of prime numbers in an interval of the form $I_n$ grows exponentially with $n$. Picking such $\lvert n\rvert d$ representatives furnishes a $d$-uniform subsequence. This being true for all $d$, we deduce from Proposition \ref{prop:c7} that $\mathcal{T}=\infty$. However the support of the spectral density is discrete which makes it impossible for there to be a half-sided modular inclusion according to Proposition \ref{prop:hsmi}.
\end{proof}

In contrast to the zero-temperature example~\eqref{frema}, the Riemannium example shows that finite temperature systems can have $\sT = \infty$, but no half-sided inclusion. This exemplifies well the point that $\sT =\infty$ is only concerned with the large-frequency behavior. In particular, the Riemannium does not exhibit enough chaos for there to be a half-sided inclusion. Indeed, it can be seen as an infinite noninteracting superposition of oscillatory modes, with frequencies $\mathrm{ln}\,p$ for all prime numbers $p$. For each of these modes there is no late time decay of the thermal two point function.

\subsection{Examples}\label{sec:exam}

Finding the depth parameter in a generalized free field theory amounts to solving the exponential type problem for the spectral density. While this is difficult in general, in various cases relevant to holography it is possible to deduce whether it is infinite or finite from the qualitative behavior of the spectral function without the need of knowing its explicit form.

Various results are summarized in Table \ref{tab:examples}:

\begin{table}[!ht]
    \centering
    \begingroup
    \renewcommand{\arraystretch}{2.2}
    \scalebox{.8}{\begin{tabular}{cc}
    \toprule
\textbf{Theory} & \textbf{Depth parameter} \\
\midrule
 $\mathcal{N}=4$ SYM, $0\leq T<T_{HP}$ & $\mathcal{T}=\pi$\\
\cmidrule(lr){1-2}
 $\mathcal{N}=4$ SYM at $\lambda>0$, $T>T_{HP}$ & $\mathcal{T}=\infty$ \\
\cmidrule(lr){1-2}
IOP model and its generalizations  at $T > 0$   & $\mathcal{T}=0$\\
\bottomrule
\end{tabular}}
\endgroup
\caption{The depth parameter in the thermofield double state for various theories.}
    \label{tab:examples}
\end{table}

\subsubsection{Evenly spaced spectral density}

We first consider the vacuum state of a $d$-dimensional CFT on $S^{d-1}$. For general $\lam$,  the spectral function of a single-trace operator with given angular quantum numbers on $S^{d-1}$ has the form 
\be \label{vae}
\rho_0 (\om) = 
\sum_{n=0}^\infty a_n \le[\delta(\omega-2n-\omega_0) -\delta(\omega+2n+\omega_0) \ri], 
\ee
for some $\omega_0\in\mathbb{R}$. The above form is determined by the conformal symmetry which also completely fixes all the coefficients $a_n$.

If $\omega_0$ is equal to zero or an integer multiple of $2$, it is be fairly straightforward to show that $\mathcal{T}=\pi$: the Fourier transform of any function multiplied by $\rho(\omega)$ is periodic of period $\pi$, so if it vanishes on an interval of width $\pi$ then it vanishes everywhere, which establishes $\mathcal{T}\leq\pi$. If we further assume the $a_n$ do not decay faster than polynomially, we can further construct explicit multiples of $\rho(\omega)$ whose Fourier transforms vanish on an interval of the form $\left(-\frac{\pi}{2}+\varepsilon,\frac{\pi}{2}-\varepsilon\right)$ for all $\varepsilon>0$, which establishes $\mathcal{T}=\pi$. This construction is detailed in Appendix \ref{app:exp2}.

The general case, however, is more complicated, but the basic idea is that a shift by $\omega_0$ of the spectral function should not alter the result too much, as indicated by our discussion of Sec.~\ref{sec:genR}. This is formalized by a result due to Poltoratski \cite{poltoratski2013problem}, see Appendix \ref{app:exp} for more details and intuition on this result. Since $\rho$ is separated (i.e. there is a minimal spacing between its peaks), and \begin{align}\sum_n\frac{\lvert\mathrm{ln}\,{a_n}\rvert}{1+n^2}<\infty,\end{align} we can apply Theorem 4 of \cite{poltoratski2013problem}. The only value of $a$ for which the counting function $n_\rho$ of $\rho(\omega)$ satisfies \begin{align}\int\frac{n_\rho(x)-ax}{1+x^2}dx<\infty\end{align} is $a=\frac{1}{2}$, so that by Theorem 4 of \cite{poltoratski2013problem}, \begin{align}\mathcal{T}=\frac{2\pi}{2}=\pi.\end{align}

Here the conclusion applies to all single-trace operators, and  thus all bulk stringy fields have the same causal depth.

Theorem 4 of \cite{poltoratski2013problem} is stated in a way that only considers summable spectral functions, however, they still hold for our spectral functions, which have polynomial growth \cite{private}.

Note that technically, the results of \cite{poltoratski2013problem} only guarantee the existence of a \textit{complex-valued} test function $\varphi$ such that $\varphi\ast\rho$ vanishes on an interval of width $\pi$, whereas we need a real-valued one. However, since $\rho$ takes imaginary values, we also have that $\varphi^\ast\ast\rho$ vanishes on an interval of width $\pi$, so $(\varphi+\varphi^\ast)/2$, which is real-valued also vanishes on an interval of width $\pi$.

The above discussion also applies to the TFD state below the Hawking--Page temperature since there the spectral density has the same form~\cite{BriFes05} as well, and the normalizability condition \eqref{eq:inner2} is equivalent to normalizability with respect to $\rho$ (without the temperature-dependent factor) for discretely supported spectral densities. From the above, we deduce that in the vacuum state or in the TFD state below the Hawking-Page transition, the depth parameter is independent of $\lam$. That is, the geometric picture of  empty AdS may extend to finite $\lam$. The statement may not be surprising as empty AdS is the only geometry that is invariant under all the conformal symmetries.

\subsubsection{Compact spectral density}

As our next case, suppose the spectral function $\rho (\om)$ has compact support on the frequency axis. 

\begin{prop}
If the spectral density is compactly supported, then $\mathcal{T}=0$.
\end{prop}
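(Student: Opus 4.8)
The plan is to show that compact support of $\rho(\omega)$ forces the time-band algebra $\sS_{(-t_0,t_0)}$ to fail to be a strict subalgebra of $\sS_\phi$ for \emph{every} $t_0 > 0$, hence $\sT = 0$ by Definition~\ref{def1}. By the reformulation in Sec.~\ref{sec:spec}, it is equivalent to show that for every $t_0 > 0$ there is \emph{no} nonzero real-valued $g$, normalizable for \eqref{eq:inner2}, whose convolution $(g \ast \rho)(\omega) = g(\omega)\rho(\omega)$ is supported outside $I = (-t_0,t_0)$. Equivalently, in frequency space the condition $(g,f_I) = 0$ for all $f_I$ supported in $I$ means $g(\omega)\rho(\omega)$ has Fourier transform vanishing on $I$; I want to argue this forces $g(\omega)\rho(\omega) \equiv 0$ in $L^2(\rho)$, i.e. the relative commutant is trivial.

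The key observation is a Paley--Wiener type statement: since $\rho$ has compact support, say $\mathrm{supp}\,\rho \subseteq [-\Lambda, \Lambda]$, the function $h(\omega) := g(\omega)\rho(\omega)$ also has compact support in frequency, so its Fourier transform $\hat h(t)$ is an entire function of exponential type (at most $\Lambda$) when extended to complex $t$, by the Paley--Wiener theorem. An entire function that vanishes on an open interval of the real axis vanishes identically. Therefore $\hat h \equiv 0$, hence $h = g\rho = 0$ as an element of $L^2(\rho)$, which says precisely that $\phi(g) = 0$ in $\sS_\phi$. So the relative commutant of $\sS_I$ is trivial for every $t_0 > 0$, meaning $\sS_I = \sS_\phi$ for all $t_0 > 0$, and thus there is no finite $t_0$ for which the inclusion is strict: $\sT = 0$.

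The steps, in order: (1) recall from Sec.~\ref{sec:spec} and Appendix~\ref{app:equ} that strictness of $\sS_I \subset \sS_\phi$ is equivalent to existence of a nontrivial relative commutant, generated by real $g$ with $(g \ast \rho)$ supported in the complement of $I$; (2) pass to frequency space, writing the condition as: the Fourier transform of $\omega \mapsto g(\omega)\rho(\omega)$ vanishes on $I$; (3) invoke compactness of $\mathrm{supp}\,\rho$ to conclude $g(\omega)\rho(\omega)$ is compactly supported, hence by Paley--Wiener its inverse Fourier transform extends to an entire function of finite exponential type; (4) use that a nonzero entire function cannot vanish on an open set, so $g\rho \equiv 0$, i.e. the relative commutant is $\mathbb{C}$; (5) conclude $\sS_{(-t_0,t_0)} = \sS_\phi$ for every $t_0 > 0$, so the largest $t_0$ making the inclusion strict is $0$, giving $\sT = 0$.

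The main subtlety — not really an obstacle but the point requiring care — is the functional-analytic bookkeeping at step (3): $g$ itself need only be normalizable for the (possibly degenerate) inner product \eqref{eq:inner2}, so $g(\omega)$ a priori lives in $L^2(\rho)$ rather than being a nice function; but this is exactly why multiplying by $\rho$ (which kills everything outside $\mathrm{supp}\,\rho$) is the right move, since $g\rho$ then has compact support as a tempered distribution, and the paper's standing assumption that all distributions in play are tempered lets us apply Paley--Wiener (in its distributional form, e.g. Schwartz--Paley--Wiener) without worry. One should also note that the argument is manifestly $\lambda$-independent and applies to each single-trace operator separately, and that — consistently with the remark after Proposition~\ref{prop:hsmi} — $\sT = 0$ does not preclude other pathologies such as information loss, which is why the compact-support case is flagged as a clean illustration that $\sT$ detects only the large-frequency spacing of the spectrum.
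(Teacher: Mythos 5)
Your proposal is correct and follows essentially the same route as the paper's own proof: compact support of $\rho(\omega)$ forces $g(\omega)\rho(\omega)$ to be compactly supported, so by the Schwartz--Paley--Wiener theorem its Fourier transform is entire, and an entire function vanishing on the open interval $(-t_0,t_0)$ vanishes identically, making the relative commutant trivial for every $t_0>0$. Your extra remarks on the distributional bookkeeping and on the physical interpretation are consistent with the paper but do not change the argument.
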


\begin{proof} 
From our discussion of last subsection, to find the (relative) commutant of $\sS_{(-t_0, t_0)}$ we need to identify $g (t)$ such that~\eqref{defs} does
not have support in $(-t_0, t_0)$. 

Since $\rho (\om)$ is compactly supported, so is $g(\om) \rho (\om)$.  
By Fourier inversion for tempered distributions and the Schwartz--Paley--Wiener theorem, $(g\ast \rho) (t)$ is an entire function of $t$.  If it vanishes on an open interval, then it has to be identically zero. We thus conclude the (relative) commutant of $\sS_{(-t_0, t_0)}$ is trivial for any $t_0$. That is, $\sT =0$.
\end{proof}

An explicit example of a compact supported spectral function is the IOP model~\cite{Iizuka:2008eb}, which was proposed as a toy model of black hole information loss. 
It describes an interacting $N$-dimensional vector $a_i$ coupled to a free $N \times N$ matrix. 
The dynamics of the vector is nontrivial, and exhibit information loss in the sense that thermal two-point functions 
of $a_i$ decay to zero at late times. The corresponding spectral function has a continuous spectrum, but is compactly supported. The continuous spectrum means that the von Neumann algebra generated by $a_i$ (of the $R$ system) is type III$_1$~\cite{Derezinski,Golodets_Neshveyev_1998,Furuya:2023fei,Gesteau:2024dhj}. 

$\sT =0$ appears to indicate that there no stringy horizon, nor a sharply defined emergent smooth radial direction. However, we should caution that in the IOP model, vector $a_i$ is only a sector of the full system, in fact a subleading sector.
So it may make sense that this sector does not have a direct bulk geometric interpretation by itself.

There are other matrix quantum mechanical systems in which a probe sector has a spectral function which can be expressed in terms of a semicircle law like in~\cite{Gesteau:2024dhj}, and thus have a compact spectrum. For these systems, we again have 
$\sT=0$, which means no stringy black hole. Results suggestive of a same picture for non-singlets in the $c=1$ matrix quantum mechanics were found in \cite{Betzios:2017yms}.

\subsubsection{Spectral density with full support}\label{sec:full}

Now consider the case of a spectral density $\rho (\om)$  that has a complete support on the real frequency axis. This case was long conjectured~\cite{Festuccia:2005pi,Festuccia:2006sa} to be closely related to the emergence of a stringy bifurcate horizon, here we make this idea precise by showing that a spectral density with full support implies that it allows half-sided modular inclusions in both directions (and hence that the depth parameter is infinite).

\begin{prop}
If the spectral density $\rho (\om)$  is a continuous function that vanishes only at zero, is differentiable at $0$ with continuous nonzero first derivative, and decays at most polynomially, then $\mathcal{T}=\infty$ and there is a stringy horizon.
\end{prop}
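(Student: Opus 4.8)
The plan is to separate the claim into its two halves: (i) showing that $\rho(\omega)$ with this regularity admits an $a$-uniform subsequence for every $a>0$, so that by the characterization from Sec.~\ref{sec:spec} the exponential type --- hence $\mathcal{T}$ --- is infinite; and (ii) showing that no nonzero $f$, normalizable for \eqref{eq:inner3}, can lie in the relative commutant of a semi-infinite time band, which is exactly the statement that the half-sided inclusion $\sS_{(0,\infty)}\subsetneq\sS_R$ exists and therefore that there is a stringy horizon in the sense of Definition~\ref{def2}.

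For part (i), I would argue as follows. Since $\rho$ is continuous and vanishes \emph{only} at $0$, on any compact set bounded away from the origin it is bounded below in absolute value, so its support contains all of $\mathbb{R}\setminus\{0\}$; in particular one can choose points $\lambda_n$ with $\rho(\lambda_n)\neq 0$ as densely and as uniformly spaced as desired at large $|\omega|$, giving $\lambda_n\approx n/a + c_\pm$ for any prescribed $a$. The one subtlety the regularity hypotheses are there to handle is the ``energy condition'' near $\omega=0$ (the analogue of \eqref{eq:energy} used in the Riemannium argument): because $\rho$ is differentiable at $0$ with nonzero derivative, $\rho(\omega)\sim \rho'(0)\,\omega$ near the origin, so $\rho$ does not decay too fast there, and the weight $\th(\omega)\rho(\omega)/(1-e^{-\beta\omega})$ appearing in \eqref{eq:inner3} is comparable to a positive constant times the Lebesgue weight near $0$ (the factor $1/(1-e^{-\beta\omega})\sim 1/(\beta\omega)$ cancels the linear vanishing of $\rho$). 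Hence $L^2(\rho)$ with the thermal weight is, locally near every point of $\mathbb{R}$, equivalent to $L^2$ of Lebesgue measure, which lets me invoke Proposition~\ref{prop:c7} (or equivalently Poltoratski-type density results quoted in Sec.~\ref{sec:exam}) to conclude the exponential type is infinite, i.e. $\mathcal{T}=\infty$.

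For part (ii) --- the existence of the half-sided modular inclusion --- suppose for contradiction that a nonzero $f$ normalizable for \eqref{eq:inner3} satisfies \eqref{defg} for the band $I=(0,\infty)$; then $g\ast\rho$, equivalently the distribution with Fourier transform $f(\omega)\rho(\omega)$, is supported in $(-\infty,0]$, i.e. it vanishes on the open half-line $(0,\infty)$. This is precisely the hypothesis of Beurling's uniqueness theorem as quoted in the proof of Proposition~\ref{prop:hsmi}: choosing $l_n=e^n$, $a_n=e^{n-1}/2$ one has $\sum (a_n/l_n)^2=\infty$ and $f(\omega)\rho(\omega)$ has a Fourier transform vanishing on an open interval, forcing $f(\omega)\rho(\omega)\equiv 0$. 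Since $\rho$ is continuous and vanishes only at $0$, this forces $f(\omega)=0$ for all $\omega\neq 0$, hence (as a tempered distribution supported at $\{0\}$, quotiented appropriately, $f$ cannot be a nonzero element of the completed one-particle space built from \eqref{eq:inner3}) $f=0$, a contradiction. Therefore $\sS_{(0,\infty)}$ has trivial relative commutant in $\sS_R$ is \emph{false} --- wait, I have this backwards: the existence of the half-sided inclusion requires the inclusion $\sS_{(0,\infty)}\subsetneq\sS_R$ to be \emph{strict}, which (via Appendix~\ref{app:equ}) is equivalent to the relative commutant being \emph{nontrivial}. So the argument I actually want runs the other direction and uses the standard converse half-sided modular inclusion statement: full spectral support plus the KMS/Wiener-algebra structure of the GNS representation of the thermal state guarantees, by Wiener's Tauberian theorem (the translates of $f$ span a dense subspace iff $\hat f$ is nonvanishing), that the semi-infinite band algebra is a proper subalgebra with the correct modular covariance; I would cite the relevant statement proved in the appendix (Sec.~\ref{sec:full}'s companion lemmas) rather than reprove it here.

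The main obstacle I anticipate is exactly this part (ii): cleanly establishing that ``$\rho$ has full support $\Rightarrow$ half-sided modular inclusion'' at the level of von Neumann algebras, not just that individual test functions fail to commute. One must check that the subspace of $L^2(\rho)$ spanned by functions supported in $(0,\infty)$ in time is a \emph{proper}, modular-covariant subspace --- properness follows from a Paley--Wiener/Wiener-Tauberian argument using that $\rho$ does not vanish on any open set (so one-sided translates are dense only in the full space, not strictly so --- here the differentiability-at-$0$ hypothesis ensures the KMS weight does not kill the small-$\omega$ modes that carry the modular flow), and modular covariance follows because in the thermal GNS representation modular flow is time translation, which manifestly maps $(0,\infty)$ into itself. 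Assembling these into the half-sided modular inclusion axioms (Borchers/Wiesbrock) is where the real work is; the exponential-type half, part (i), is comparatively routine given Proposition~\ref{prop:c7}.
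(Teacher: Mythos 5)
Your part (ii) — the existence of the half-sided inclusion, which is the heart of the proposition — has a genuine gap. What must be shown is that the relative commutant of a \emph{semi-infinite} time band is nontrivial, i.e.\ that there exists a nonzero, normalizable $g$ with $(g\ast\rho)$ supported outside $(t_0,\infty)$. You first run Beurling's uniqueness theorem in the wrong direction (it is used in Proposition~\ref{prop:hsmi} to prove the \emph{converse} implication, that a half-sided inclusion forces $\rho$ to have full support; it cannot produce an element of the commutant), you notice this, and then you appeal to ``Wiener's Tauberian theorem'' and ``companion lemmas in the appendix'' that do not exist in the paper. Wiener's theorem is a \emph{density} statement and points the wrong way here; nothing in your proposal actually exhibits the required nonzero element of the symplectic complement. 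The missing idea is the paper's explicit construction: pick $\vp(t)$ compactly supported outside the band with $\vp(0)=0$ and $\vp^*(\om)=-\vp(-\om)$, and set $g(\om)=\vp(\om)/\rho(\om)$, so that $g\ast\rho=\vp$ by construction. Full support of $\rho$ makes the ratio well defined away from $0$, the nonzero derivative at $0$ controls it at the origin, and Ingham's theorem supplies a $\vp$ with $\vp(\om)=O(e^{-|\om|^{1-\al}})$, which together with the at-most-polynomial decay of $\rho$ gives the normalizability condition \eqref{eq:norm}. This single construction proves both halves of the proposition at once, since nontriviality for semi-infinite bands implies it for all finite bands and hence $\sT=\infty$.

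Your part (i) is a genuinely different route (Poltoratski's $d$-uniform/exponential-type machinery rather than an explicit construction) and is plausible in outline, but as written it leans on Proposition~\ref{prop:c7}, which is stated for \emph{discrete} measures; for a continuous density the relevant quoted result is the Frostman alternative (type $0$ or $\infty$), and you would still need to rule out type $0$ — which again comes down to exhibiting a nontrivial commutant element, i.e.\ the construction above. Your observation that the thermal weight $\rho(\om)/(1-e^{-\b\om})$ is bounded away from $0$ and $\infty$ near the origin is correct and is indeed where the differentiability hypothesis enters, but it is needed for the normalizability of $g$ at $\om=0$, not for an energy condition on a sampling sequence.
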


\begin{proof}
We need to show that for any $t_0$, it is possible to find $g(t)$ such that $(g \ast \rho) (t)$ does not have support 
in $I =(-\infty, t_0)$ or $I =(t_0, \infty)$.  For this purpose, take a  function $\vp (t)$ that is supported outside $I$. Let
\be \label{defcg}
g(\om) = {\vp (\om) \ov \rho (\om)}, 
\ee
which by construction $g \ast \rho = \vp$ has support outside $I$. To make $g(t)$ real we need 
\be 
g^* (\om) = g (-\om)  \quad \to \quad \vp^* (\om)  = - \vp (-\om)
\ee
where we have used that $\rho (\om)$ is a real, odd function of $\om$. Since $\rho (0) =0$, we also need 
$\vp (0) =0$ such that the ratio~\eqref{defg} is well defined at $\om =0$. Normalizability of $g (\om)$ requires that 
\be 
\int_0^\infty {d \om \ov 2 \pi} {\vp^* (\om) \vp (\om) \ov \rho (\om)} < \infty \ .
\label{eq:norm}
\ee
It is a classical result of Fourier analysis~\cite{Ingham_1934} that given a time band, one can always choose a $\varphi$ with support outside this time band (actually it can even be chosen to have a compact support) such that
\begin{align}
\tilde{\varphi}(\omega)=O(e^{-\lvert\omega\rvert^{1-\alpha}})\label{eq:asymptotics}
\end{align}
for $\alpha>0$. This establishes the normalizability of $g(\omega)$ as long as $\rho$ decays at most polynomially. 
\end{proof}

It is curious to note that the symplectic complement of the union of two disjoint half-infinite intervals may also be nontrivial.
Consider e.g. $I = (-\infty, t_0) \cup (t_1, +\infty)$ with $t_1 > t_0$. We can then choose $\vp (t)$ in the above discussion to have support in $(t_0, t_1)$.

In~\cite{Festuccia:2006sa}, it was argued that for a gauged quantum mechanical system with multiple matrices~(which includes $\mathcal{N} = 4$ SYM theory on $\mathbb{R} \times S^3$) in the large $N$ limit, the spectral function of a generic single-trace operator exhibits a complete spectrum for any nonzero 't Hooft coupling above the Hawking-Page transition temperature.

For large $q$ SYK something more interesting happens: the spectral function has complete spectrum but exponential decay \cite{Maldacena:2016hyu},\footnote{We thank Vladimir Narovlansky for communications on this point.} so equation \eqref{eq:norm} is generically violated for a test function with asymptotics of the form \eqref{eq:asymptotics}. In order to construct a nontrivial relative commutant for a half-infinite time band, we would need to find stronger asymptotic bounds on the decay of $\tilde{\varphi}(\omega)$. The easiest would be for it to have exponential decay, but this is never possible for a test function vanishing on a half-infinite interval (because it is not holomorphic on any strip). Therefore it is not clear whether a relative commutant exists in that case, and if it does its structure is quite different from the one of a spectral density with decay slower than exponential. We leave a more thorough analysis of that case to future work. More generally we believe that our framework could be useful to investigate the SYK model away from the maximally chaotic regime, for example it would be interesting to relate our techniques to the recent results of \cite{Dodelson:2024atp}.

\subsubsection{Systems with uncompact spatial directions}

We now consider some examples of the boundary theory on $\RR^{1,d-1}$.

A simplest example is a free massive field with mass $m$ in the vacuum, whose spectral function was given in~\eqref{frema}. 
 In this case, the spectral function has a gap in the region $|\om| < m$ for all $\vk$. This means that the spectral function has a gap around $\omega=0$. By Proposition \ref{prop:hsmi}, this implies that there is no stringy horizon (in particular, the infinite redshift property is not satisfied). However, the results of \cite{poltoratski2013problem} (extended to polynomially growing spectral spectral density \cite{private}), still imply that the depth parameter $\mathcal{T}$ is infinite.

For a CFT in the vacuum state, the spectral function of a single-trace operator has the form 
\be 
\rho(\om, \vk) = C (-k^2)^{\nu} \th (-k^2) (\th (\om)  - \th (-\om)) , \quad k^2 \equiv -\om^2 + \vk^2  \ .
\ee
For each given $\vk$, there is a gap in the spectrum for $\om^2 < \vk^2$. The gap vanishes in the limit $\vk \to 0$. 
In the case, the bulk is given by the Poincare patch of empty AdS. There is a Poincare horizon there related to the vanishing of the gap in the $\vk \to 0$ limit.

In AdS-Rindler, the spectral function at fixed momentum has the same form as the one of BTZ, except now there is a continuum of momenta. Therefore the depth parameter is infinite, as expected because the bulk is semiclassical.

Consider the spectral function of a scalar glueball in free Yang--Mills theory at finite temperature. It is found 
in~\cite{Hartnoll:2005ju} that the spectral function is continuous and does not vanish outside of an interval. Therefore, following \cite{poltoratski2013problem}, the depth parameter should be infinite and there is a half-sided inclusion.

Another interesting case is the one of the large $N$ limit of symmetric product orbifolds. The bulk theory is thought to be far from regular general relativity on a fixed spacetime, however we were made aware of the upcoming work \cite{orbifold}, which shows that at least for some specific probes, spectral densities are the ones of the BTZ black hole, suggesting an infinite depth parameter and an emergent stringy horizon. We believe understanding this case in more detail could be particularly insightful.

\subsection{Characterization of stringy connectivity: Einstein--Rosen algebra}\label{sec:alg}

We now focus the case of the thermofield double when $\mathcal{T}=\infty$. In that case, there is stringy connectivity between two sides. As we discussed earlier, with $\sT = \infty$ alone, what connects the left and right systems may not qualify as a bifurcate horizon. We will thus refer to it as a stringy ``Einstein-Rosen bridge''. In this section, we will construct an algebra of observables associated with an Einstein-Rosen bridge.

We will need the extra assumption that to each open interval $I$ of $\mathbb{R}$ one can associate not only a von Neumann algebra $M_I$, but also a weak operator dense $C^\ast$-algebra $\mathcal{A}_I$ (this is clearly true in the case of generalized free fields, where one can simply consider the Weyl $C^\ast$-algebra of operators constructed out of the real symplectic structure on the one particle Hilbert space). Denoting $\mathcal{A}=\mathcal{A}_{\mathbb{R}}$, we also assume that $\bigcup_t\mathcal{A}_{(-t,t)}$ is norm-dense in $\mathcal{A}$. We now define a notion of \textit{Einstein--Rosen sequence} (ER sequence), which will allow us to define an algebra attached to a stringy Einstein-Rosen bridge. We will call this algebra an Einstein-Rosen algebra (ER algebra) or with a slight abuse of language, horizon algebra.

\begin{defn}
Let $\mathcal{A}$ be the $C^*$-algebra generated by single-trace operators of one boundary and let $(A_n)$ be a sequence of operators in $\mathcal{A}$. $(A_n)$ is said to be an \textit{ER sequence} if it is bounded in norm, and for all $t\in\mathbb{R}$, $A_n\in \mathcal{A}_{(-t,t)}^\prime$ except for a finite number of terms.
\end{defn}

In order to define a good notion of ER algebra, we first need to introduce one piece of machinery known as ultrapowers and central sequences. 

\begin{defn}
The ultrapower algebra $\mathcal{A}^\omega$ of a $C^\ast$-algebra $\mathcal{A}$ at a free ultrafilter $\omega$ is defined to be the quotient of the space of bounded sequences valued in $\mathcal{A}$ by the space of sequences valued in $\mathcal{A}$ that converge towards a central element (in norm).
\end{defn}

\begin{defn}
A central sequence $(X_n)$ in a $C^\ast$-algebra $\mathcal{A}$ is a bounded sequence of operators $(X_n)$ satisfying, for all $A\in\mathcal{A}$,
\begin{align}
\|[X_n,A]\|\underset{n\rightarrow\infty}\longrightarrow 0.
\end{align}
\end{defn}

We can now first show:

\begin{prop}
ER sequences are central sequences.
\end{prop}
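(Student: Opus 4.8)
The plan is to unwind the two definitions and show that the defining property of an ER sequence — being eventually contained in every $\mathcal{A}_{(-t,t)}'$ — forces commutators with a dense set of operators to vanish, which then extends to all of $\mathcal{A}$ by a standard $\varepsilon/3$ argument using the norm bound. First I would fix an ER sequence $(A_n)$ with $\|A_n\| \le C$ for all $n$, and fix an arbitrary $A \in \mathcal{A}$ together with $\varepsilon > 0$. By the density assumption $\bigcup_t \mathcal{A}_{(-t,t)}$ is norm-dense in $\mathcal{A}$, so I can pick some $t_0$ and $B \in \mathcal{A}_{(-t_0,t_0)}$ with $\|A - B\| < \varepsilon/(3C)$ (adjusting constants as needed; one may also assume $\|B\| \le \|A\|$ up to harmless changes).

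Next I would use the ER property: since $(A_n)$ is an ER sequence, $A_n \in \mathcal{A}_{(-t_0,t_0)}'$ for all but finitely many $n$, so $[A_n, B] = 0$ for all $n \ge N$ for some $N$. For those $n$ I estimate
\[
\|[A_n, A]\| = \|[A_n, A] - [A_n, B]\| = \|[A_n, A - B]\| \le 2\|A_n\|\,\|A - B\| < 2C \cdot \frac{\varepsilon}{3C} < \varepsilon.
\]
Since $\varepsilon > 0$ and $A \in \mathcal{A}$ were arbitrary, this shows $\|[A_n, A]\| \to 0$ as $n \to \infty$ for every $A \in \mathcal{A}$, which is exactly the statement that $(A_n)$ is a central sequence. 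The norm boundedness of $(A_n)$ is immediate since it is part of the definition of an ER sequence.

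There is no real obstacle here — the argument is a routine three-epsilon estimate — but the one point that deserves care is the role of the standing assumptions introduced just before the definition of ER sequences: one genuinely needs both that each $\mathcal{A}_{(-t,t)}$ is a weak-operator-dense $C^\ast$-subalgebra and that $\bigcup_t \mathcal{A}_{(-t,t)}$ is norm-dense in $\mathcal{A}$. Without norm-density (only weak-operator density) the approximation step would fail, since one cannot control $\|[A_n, A-B]\|$ by an operator-norm bound on $A-B$. I would therefore state explicitly where each hypothesis is used, and note that for generalized free fields both hold by construction via the Weyl $C^\ast$-algebra, so the proposition applies to all the cases of interest in the paper.
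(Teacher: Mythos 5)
Your argument is correct and is essentially identical to the paper's own proof: both approximate $A$ in norm by an element $B$ of some finite time band algebra, use the ER property to make $[A_n,B]$ vanish for large $n$, and bound $\|[A_n,A-B]\|$ via the uniform norm bound on $(A_n)$. The only difference is cosmetic bookkeeping of the $\varepsilon$ constants.
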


\begin{proof}
Let $A\in\mathcal{A}$, let $\varepsilon>0$. There exists $T\in\mathbb{R}$ and $A_T\in\mathcal{A}_{(-T,T)}$ such that\begin{align}\|A-A_T\|<\varepsilon.\end{align} Moreover for $n$ large enough $A_n$ commutes with $A_T$, so that \begin{align}\|[A_n,A]\|\leq 2\varepsilon\|A_n\|.\end{align}
Since $(A_n)$ is bounded in norm we conclude that $(A_n)$ is a central sequence.
\end{proof}

Of course, not all central sequences are ER sequences! For example, a sequence of modular flowed operators in a strongly coupled large $N$ CFT would be central, but not an ER sequence.

We now define the ER algebra as

\begin{defn}
Let $\omega$ be a free ultrafilter. The ER algebra is the ($C^\ast$-closure of) the algebra of ER sequences inside $\mathcal{A}^\omega$, up to trivial ones.
\end{defn}

By the above proposition, the ER algebra is a $C^\ast$-subalgebra of the $C^\ast$-algebra of central sequences $\mathcal{A}_{\omega}=\mathcal{A}^\prime\cap \mathcal{A}^{\omega}$. In principle, it depends on a choice of free ultrafilter. 

The reader can reasonably wonder at this stage why we chose to define the ER algebra at the $C^\ast$-level rather than at the von Neumann algebra level. The reason is that the analogous notion of central sequence at the von Neumann level which is known as the \textit{asymptotic centralizer}, is not compatible with the physics of our setup. More precisely, given a normal state $\varphi$ on a von Neumann algebra $M$, a sequence $(X_n)$ of observables in a $M$ is said to be part of the asymptotic centralizer of $\varphi$ if \begin{align}\|\varphi(X_n\cdot)-\varphi(\cdot X_n)\|_{M_\ast}\underset{n\rightarrow\infty}{\rightarrow}0.\label{eq:asympcentr}\end{align}The convergence in \eqref{eq:asympcentr} must happen for the norm topology in the Banach space of linear functionals on $M$. What this means is that the rate of convergence of $\varphi(X_n A)-\varphi(A X_n)$ towards zero must be \textit{independent} of the choice of $A$. But this is clearly not the case for a sequence $X_n$ of operators whose support approaches ER bridge: the closer $A$ is to it, the longer it will take for $\varphi(X_n A)-\varphi(A X_n)$ to converge towards zero. Therefore we need a notion of central sequence that allows for dependence on the operator $A$, and this is provided by the definition at the $C^\ast$-level. The price to pay is that the ER algebra is only made of sequences of operators that are in the $C^\ast$-closure of the space of local observables, rather than the von Neumann closure which is larger.

Another question is whether the left and right ER algebras detect ``the same geometric object". One way to check this is to ask whether the commutant of the algebra generated by two boundary time bands shrinks to a trivial algebra as the width of the time bands goes to infinity. If it does, then there are no observables between the right and left ER bridges and they coincide. See Figure \ref{fig:qesalg}. We expect this to be generically true in the thermofield double state, where the full algebra of a single boundary in the large $N$ limit is generated by single-trace operators.  
 In the next section, we will see that in more general states however, there may be more large $N$ operators, such as modular flowed ones, and that the presence of these large $N$ operators can alter the definition of stringy connectivity.

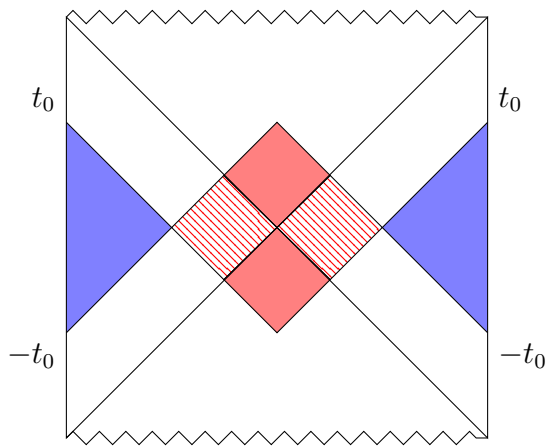
\begin{figure}
\centering
\begin{tikzpicture}[scale=0.7]
\node (I)    at ( 4,0)   {};
\node (II)   at (-4,0)   {};
\node (III)  at (0, 2.5) {};
\node (IV)   at (0,-2.5) {};

\path 
   (I) +(90:2)  coordinate[label=45:$t_0$] (It) 
       +(-90:2) coordinate[label=-45:$-t_0$] (Ib) 
       +(180:2) coordinate (Il) 
       ;

\draw[fill=blue!50]  (It) -- (Il) -- (Ib);
\path 
   (I) +(-160:0.8) coordinate (Ipt) ;

\path 
   (Il) +(135:1.4) coordinate  (Ilt) 
       +(-135:1.4) coordinate (Ilb) 
       +(180:2) coordinate (M) 
       ;

\path
(M) +(90:2) coordinate (M1)
+(-90:2) coordinate (M2)
+(180:2) coordinate (Ir)
;

\path
(Ir) +(45:1.4) coordinate  (Irt) 
       +(-45:1.4) coordinate (Irb) ;

\path 
   (Il) +(135:5.63) coordinate  (Ist) 
       +(-135:5.63) coordinate (Isb) 
       ;

\path 
   (I) +(45:0.35)  coordinate (Itop1)
       +(-45:0.35) coordinate (Ibot1)
       +(180:1) coordinate (Ileft1)
       ;
\path 
   (I) +(90:4)  coordinate (Itop)
       +(-90:4) coordinate (Ibot)
       +(180:4) coordinate (Ileft)
       ;

\path 
   (II) +(90:2)  coordinate[label=135:$t_0$] (IIt) 
       +(-90:2) coordinate[label=-135:$-t_0$] (IIb) 
       +(0:2) coordinate (IIl) 
       ;
\draw[fill=blue!50]  (IIt) -- (IIl) -- (IIb);

\draw[pattern=north west lines, pattern color=red]  (Ilt) -- (M) -- (Ilb)-- (Il)-- (Ilt)-- cycle;
\draw[pattern=north west lines, pattern color=red]  (Irt) -- (M) -- (Irb)-- (Ir)-- (Irt)-- cycle;
\draw[fill=red!50]  (Irt) -- (M1) -- (Ilt)-- (M)-- (Irt)-- cycle;
\draw[fill=red!50]  (Irb) -- (M2) -- (Ilb)-- (M)-- (Irb)-- cycle;
\draw  (Ileft) -- (Itop) -- (Ibot) -- (Ileft) -- cycle;
\path  
  (II) +(90:4)  coordinate  (IItop)
       +(-90:4) coordinate (IIbot)
       +(0:4)   coordinate                  (IIright)
       ;
\draw 
      (IItop) --
          node[midway, below, sloped] {}
      (IIright) -- 
          node[midway, below, sloped] {}
      (IIbot) --
          node[midway, above, sloped] {}
          node[midway, below left]    {}    
      (IItop) -- cycle;
\path 
   (I) +(-174:3.05) coordinate (Ipt) ;
\draw[decorate,decoration=zigzag] (IItop) -- (Itop)
      node[midway, above, inner sep=2mm] {};

\draw[decorate,decoration=zigzag] (IIbot) -- (Ibot)
      node[midway, below, inner sep=2mm] {};

\end{tikzpicture}
\caption{Equivalence between the right and left definitions of the ER algebra in the thermofield double state. The holographic duals of the left and right time bands, represented in blue, are related to each other by modular conjugation. The relative commutants of each time band inside the left and right algebras, each carrying red dashed lines, generate the commutant of the two algebras, filled in red. As $t_0\rightarrow\infty$, the three diamonds all collapse onto the bifurcate horizon.}
\label{fig:qesalg}
\end{figure}

\subsection{Generalized entropy of a stringy horizon}

So far, our framework allows to detect whether or not there is an emergent stringy bifurcate horizon in the bulk, but it does not allow yet to talk about the ``area" of this horizon, or any related measure that it could be equipped with. A full resolution of this problem is beyond the scope of this paper, but here we make the remark that it is at least possible to define the difference between the generalized entropy of an excited state of the CFT and the general entropy of the thermofield double. Indeed, by Wall's proof of the generalized second law \cite{Wall:2011hj}, we know that at strong coupling, if $\ket{\psi}$ is an excitation of the TFD state $\ket{\rm TFD}$, the difference in generalized entropies between the entropy of the full bulk subregion and the entropy at infinity is given by 
\begin{align}
S_{gen}(\infty)-S_{gen}(b)=S(\psi\vert \rm TFD),
\end{align}
where the right hand side is the relative entropy. It is therefore natural to promote this formula to a \textit{definition} of the difference between the generalized entropy of our state for the full subregion and its generalized entropy at infinity. 

In the case in which the state $\ket{\psi}$ is a coherent excitation of $\ket{\rm TFD}$ of the form $W(f)\ket{\rm TFD}$ in a $(0+1)$-D GFF theory, where $f$ is in the one particle Hilbert space and $W(f)$ is the corresponding Weyl operator, there is an explicit formula due to \cite{Ciolli:2019mjo,Bostelmann_2021}, which 
gives:
\begin{align}S(\psi\vert \mathrm{TFD})=\int d\omega\theta(\omega)\frac{\omega}{1-e^{-\beta\omega}}\vert f(\omega)\vert^2. \end{align}

\subsection{Violations of the equivalence principle in the stringy regime?}

As discussed in the Introduction section, in the stringy regime, different bulk fields may ``see'' different bulk geometries. 
The causal depth parameter $\sT$ can be used to probe this phenomenon. 

In a large $N$ gauge theory like $\mathcal{N}=4$ SYM, there are infinitely many generalized free fields. To each of these fields, one can associate a depth parameter. If there is an emergent semiclassical bulk geometry, the large $N$ spectral densities must somehow conspire so that the depth parameter associated to each large $N$ field is the same. Any pair of fields whose depth parameters $\mathcal{T}_1$ and $\mathcal{T}_2$ are not equal cannot be seen as living on the same bulk geometry, as the corresponding depths are different. Therefore such a pair of fields can be seen as giving an obstruction to the validity of the equivalence principle in the bulk.

In the case of empty AdS or thermal AdS, there is a quite general argument ensuring that $\mathcal{T}$ is the same for all fields, as we have seen in Sec.~\ref{sec:exam}. This means that there is a somehow universal notion of ``bulk depth" at large $N$ in the vacuum and  a low temperature thermofield double state, even at finite 't Hooft coupling. However, going away from the vacuum or the thermofield double, one can imagine more general backgrounds in which the equivalence principle may be violated.

For the thermofield double state above the Hawking-Page temperature, we have argued a bifurcate horizon exists for all generalized fields. However, do horizons for different fields coincide? At the moment, we do not have a direct way to answer this question, and will just make some general remarks. The thermofield double state has a left-right reflection symmetry, which enables us to establish in Sec.~\ref{sec:alg} that the left and right horizons for each field coincide. Therefore, if there is a notion of stringy geometry for all fields (although the effective metric may be different for different fields), then the horizons should sit in the middle of that spacetime, and thus coincide. 

Another possible avenue for probing the coincidence of the horizons is to examine how the notion of a common horizon for all fields may become blurry due to potential failures of the split property for algebras describing all bulk fields at finite string length. Indeed, the presence of higher spin fields in the bulk theory leads to nonlocalities that may be related to the failure of the split property \cite{Faulkner:2022ada}, or of the closely related modular nuclearity condition \cite{buchholz2004modular}. Perhaps relatedly, it is known that when introducing interactions between the large $N$ fields, the notion of entanglement wedge becomes blurry in the stringy regime where the out-of-time-ordered correlator gets a submaximal Lyapunov exponent \cite{Chandrasekaran:2021tkb}. 

Finally, as it will be discussed more in the last section, given a spectral function, we can in principle reconstruct the bulk geometry using methods from inverse scattering theory. That may enable a direct comparison.

\section{Algebraic diagnostic of QES} \label{sec:RT}

The definition of causal depth parameter and the algebraic diagnostic of an event horizon discussed in the last section 
were motivated from the bulk causal structure in the semi-classical gravity regime. We then used that data as inspiration to probe/define the bulk causal structure in the stringy regime.
In this section we would like to generalize the 
discussion of the previous section by introducing an algebraic diagnostic for the existence of Ryu-Takayanagi (RT) surfaces~\cite{Ryu:2006bv,Hubeny:2007xt} or more generally quantum extremal surfaces (QES)~\cite{Engelhardt:2014gca}. 
  
The question becomes more difficult as QES are not directly related to the bulk causal structure, which makes them harder to probe. However, the case of the thermofield double, where the bifurcate horizon coincides with the RT surface for the $R$ system, suggests a path forward. 

For the thermofield double, the procedure discussed in Sec.~\ref{sec:hor} 
can also be viewed as diagnosing the possible existence of a particular asymptotically invariant submanifold under the modular flows of the boundary algebra $\sM_R = \sS_R$ for the CFT$_R$, represented with a red dot on Fig. \ref{fig:modularbh}.\footnote{It is an asymptotically fixed submanifold as operators localized on the bifurcate horizon do not belong to  the boundary algebra $\sM_R = \sS_R$ for the CFT$_R$. In fact, since operators always need to be smeared in the time direction, there are no well defined operators localized on the bifurcate horizon.} This manifold has the following properties in terms of modular data: 

\ben
\item  The Schwarzschild time flow becomes a boost near the bifurcation surface, which leaves the surface invariant.

\item  The Schwarzschild time flow is identified with the boundary time flow, which is in turn identified with the modular flows.  

\item While both the red and blue dots of Fig.~\ref{fig:modularbh} are asymptotically invariant submanifolds of the boundary time flows, the ER sequences defined in Sec.~\ref{sec:alg} from the relative commutants of time bands only approach the red dot. Recall that given any finite time band, except for a finite number of terms, an ER sequence $\{A_n\}$ lies in its commutant. In contrast, this is not the case for any sequence approaching the blue dots.

\een

This perspective can now be used to diagnose the possible existence of a QES by {\it defining} a QES as an asymptotically invariant submanifold under the modular flows of the corresponding boundary algebra, in the sense captured by the three properties above.

We can thus apply the constructions of Sec.~\ref{sec:hor} to construct a good notion of QES in the stringy regime by simply replacing time bands by  {\it modular time} bands.  Below we first review aspects of the algebraic formulation of entanglement wedge reconstruction of~\cite{LeuLiu22} that will be used in our discussion. We then present our proposal for an algebraic diagnostic of QES that can also be applied to the stringy regime, and discuss some examples. In particular, the proposal gives a possible solution to a question raised in~\cite{Engelhardt:2023xer} concerning evaporating black holes. 
We also make comments on a possible generalization of the algebraic notion of ER=EPR discussed in~\cite{Engelhardt:2023xer} to the stringy regime.

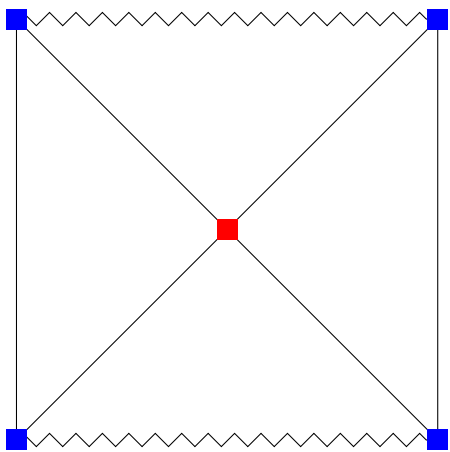
\begin{figure}
\centering

\begin{tikzpicture}[scale=0.7]
\node (I)    at ( 4,0)   {};
\node (II)   at (-4,0)   {};
\node (III)  at (0, 2.5) {};
\node (IV)   at (0,-2.5) {};
\path 
   (I) +(90:2)  coordinate (It) 
       +(-90:2) coordinate (Ib) 
       +(180:2) coordinate (Il) 
       ;

\path 
   (I) +(-160:0.8) coordinate (Ipt) ;

\path 
   (Il) +(135:1.4) coordinate  (Ilt) 
       +(-135:1.4) coordinate (Ilb) 
       +(180:2) coordinate (M) 
       ;

\path 
   (Il) +(135:5.63) coordinate  (Ist) 
       +(-135:5.63) coordinate (Isb) 
       ;

\path 
   (I) +(45:0.35)  coordinate (Itop1)
       +(-45:0.35) coordinate (Ibot1)
       +(180:1) coordinate (Ileft1)
       ;
\path 
   (I) +(90:4)  coordinate (Itop)
       +(-90:4) coordinate (Ibot)
       +(180:4) coordinate (Ileft)
       ;

\draw  (Ileft) -- (Itop) -- (Ibot) -- (Ileft) -- cycle;
\path  
  (II) +(90:4)  coordinate  (IItop)
       +(-90:4) coordinate (IIbot)
       +(0:4)   coordinate                  (IIright)
       ;
\draw 
      (IItop) --
          node[midway, below, sloped] {}
      (IIright) -- 
          node[midway, below, sloped] {}
      (IIbot) --
          node[midway, above, sloped] {}
          node[midway, below left]    {}    
      (IItop) -- cycle;
\path 
   (I) +(-174:3.05) coordinate (Ipt) ;
\draw[decorate,decoration=zigzag] (IItop) -- (Itop)
      node[midway, above, inner sep=2mm] {};

\draw[decorate,decoration=zigzag] (IIbot) -- (Ibot)
      node[midway, below, inner sep=2mm] {};
\node[fill=blue]    at ( 4,4)   {};
\node[fill=blue]    at ( -4,-4)   {};
\node[fill=blue]     at ( -4,4)   {};
\node[fill=blue]     at ( 4,-4)   {};
\node[fill=red]    at ( 0,0)   {};
\end{tikzpicture}
\caption{Fixed points of modular flow in the Penrose diagram of an AdS-Schwarzschild two-sided black hole. The red dot denotes the QES and the blue dots denote timelike infinity on the boundary. The algebraic difference between the red dot and the blue dots is that the red dot is inside the commutant of all double-sided time band algebras, which is not the case for the blue dots.}
\label{fig:modularbh}
\end{figure}

\subsection{Modular depth parameter and QES from modular time bands}

Before stating our proposal for diagnosing the presence of a QES for a boundary subregion algebraically, we first review some elements 
of the algebraic formulation of entanglement wedge reconstruction that are essential for our discussion.

For a system in a semi-classical state $\ket{\Psi}$ describing some bulk geometry, entanglement wedge reconstruction says that physics in the entanglement wedge 
of a boundary {\it spatial} subregion $A$ can be fully described by observables in $A$. The entanglement wedge is given by the domain of dependence $\hat \fb_A$ of a bulk co-dimension one region $\fb_A$ satisfying $\p \fb_A = \ga_A \cup A$ where $\ga_A$ denotes the QES for $A$.
On the boundary, denote the large $N$ limit of the local operator algebra $\sB_A$
in $A$ by $\sX_A$. Entanglement wedge reconstruction can be stated algebraically as 
\be \label{ewr}
\widetilde \sM_{\fb_A} = \sX_A
\ee
where $\widetilde \sM_{\fb_A}$ is the bulk operator algebra for $\fb_A$. Equation~\eqref{ewr} means that any bulk operations in $\hat \fb_A$ can be expressed in terms of those in $\sX_A$.  
We assume that $\ket{\Psi}$ is cyclic and separating with respect to $\sB_A$, and denote the corresponding modular operator by $\De$.

Some of our statements will be conditioned on the following assumptions, which all hold in the semiclassical gravity regime: 

\ben 

\item The QES $\ga_A$ is an asymptotically invariant submanifold under modular flows generated by $\De$.  

This statement comes from the expectation that near $\ga_A$  modular flows generated by $\De$ should act as local boosts.  

\item $\sX_A$ can be generated by  single-trace operators in $A$ via modular flows. More explicitly, $\sX_A$ is generated by large $N$ limits of operators of the form 
\be \label{eow}
\phi (s, \vx) = \De^{-i s} \phi (\vx) \De^{is}, \quad s \in \RR, \; \vx \in A 
\ee
where $\phi$ is a single-trace operator. This statement follows from the arguments of~\cite{Jafferis:2015del, Faulkner:2017vdd} in the strongly coupled regime. 
\item Haag duality survives the large $N$ limit, i.e. \begin{align}\sX_A^\prime=\sX_{\bar{A}}.\end{align} This is guaranteed in the strongly coupled regime by entanglement wedge reconstruction/complementary recovery \cite{Dong:2016eik}.
\een

Note that while we will use assumption 1 essentially as a definition for our notion of stringy QES, assumptions 2 and 3 are not guaranteed to hold true in the stringy regime. In particular, assumption 2 not being true would signal the existence of more exotic large $N$ observables than large $N$ modular flows, while a breakdown of assumption 3 would signal a breakdown of complementary recovery. We leave a more thorough analysis of these assumptions to the future.

Consider an open {\it modular} time interval $I = (-s_0,s_0)$, and denote the subalgebra 
generated by $\phi (s, \vx)$ of~\eqref{eow} with $s \in I$ as $\sX_{I}$. By definition, 
$\sX_I \subseteq \sX_A$. We can then define the modular depth parameter associated with $A$ as follows. 

\begin{defn}\label{def31}
The modular depth parameter $\mathcal{T}_m (A)$ associated with a boundary region $A$ is the largest value of $s_0$ for which the algebra $\sX_{(-s_0,s_0)}$ has a nontrivial relative commutant inside $\sX_{A}$.
\end{defn} 

We now would like to characterize the existence of a nonempty QES at the boundary of the entanglement wedge. The existence of such a nonempty QES can be understood as a statement of connectivity between the entanglement wedge and its complement. Therefore it makes sense to reproduce the definition of stringy connectivity proposed in Section \ref{sec:hor}:

\begin{defn}\label{defcm}
Suppose $R$ and $L$ systems are in an entangled pure state with a large $N$ limit. We say that $R$ and $L$ are share a non-empty QES, or are stringy-connected, if the corresponding modular depth parameters are infinite. 
\end{defn} 

Note that although this definition guarantees that the $R$ and $L$ systems have a QES, it is not clear whether this QES is the same by only making assumption 1 above. However, by adding assumptions 2 and 3 (i.e. that entanglement wedges can be generated by large $N$ limits of modular flows and that Haag duality holds at large $N$ in the bulk), one can show that there are no operators localized in between the right and the left QES, which means that they ``coincide" in this sense. More formally:
\begin{prop}  
If $R$ and $L$ systems have a non-empty QES, and assumptions 1, 2 and 3 are satisfied, then \begin{align}\bigcap_{I}(\sX_I^L\vee\sX_I^R)^\prime=\mathbb{C}.\end{align} 
\end{prop}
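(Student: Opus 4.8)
The plan is to reduce the claim to a Haag-duality statement about modular time bands and then run the same "sequential construction" argument used for the ER algebra in Section \ref{sec:alg}. First I would set up notation: for each $s_0>0$ write $\sX_{I(s_0)}^R$ and $\sX_{I(s_0)}^L$ for the algebras generated by modular-flowed single-trace operators in the window $I(s_0)=(-s_0,s_0)$ on the $R$ and $L$ sides, and let $\sN(s_0) = \bigl(\sX_{I(s_0)}^L\vee\sX_{I(s_0)}^R\bigr)'$. The family $\{\sN(s_0)\}_{s_0>0}$ is decreasing in $s_0$ (larger bands give a larger algebra, hence a smaller commutant), so $\bigcap_I(\sX_I^L\vee\sX_I^R)' = \bigcap_{s_0>0}\sN(s_0)$, and the goal is to show this intersection is trivial.

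The key step is to identify $\sN(s_0)$ geometrically and then take the limit. Using assumption 2, $\sX_A^R = \bigvee_{s_0}\sX_{I(s_0)}^R$ is the full $R$-algebra $\sX_A$, and $\sX_{I(s_0)}^R$ is the algebra of the bulk subregion obtained by flowing $\fb_A$ inward along the modular (boost) flow for modular time up to $s_0$; near the QES $\ga_A$ this flow acts as a local boost, so these regions exhaust the entanglement wedge and shrink toward $\ga_A$ as $s_0\to\infty$. By assumption 3 (Haag duality at large $N$), the commutant of $\sX_{I(s_0)}^R$ inside $\sX_A$ is $\sX_{\bar I(s_0)}^R$, the algebra of the complementary modular band, and likewise on the $L$ side. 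Then $\bigl(\sX_{I(s_0)}^L\vee\sX_{I(s_0)}^R\bigr)' = (\sX_{I(s_0)}^L)'\cap(\sX_{I(s_0)}^R)'$, which by Haag duality is generated by the two complementary-band algebras; geometrically this is the bulk algebra of the two "outer" modular wedges together with whatever sits between the left and right images of $\ga_A$ — exactly the three-diamond picture of Fig.~\ref{fig:qesalg} but in modular time. As $s_0\to\infty$ all three of these regions collapse onto $\ga_A$ itself. Since operators must be smeared in modular time, there is no nontrivial operator localized precisely on $\ga_A$ (the footnote to Fig.~\ref{fig:modularbh}), so $\bigcap_{s_0>0}\sN(s_0)=\mathbb{C}$.

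Concretely, I would argue the last limit as follows: suppose $X\in\bigcap_{s_0>0}\sN(s_0)$. Then for every $s_0$, $X$ commutes with $\sX_{I(s_0)}^R$, hence with $\bigvee_{s_0>0}\sX_{I(s_0)}^R = \sX_A^R = \sM_R$; likewise $X$ commutes with $\sM_L$. But in the entangled pure state $\ket{\Psi}$ the two-sided algebra $\sM_R\vee\sM_L$ acts (at large $N$, with the usual caveats) with trivial commutant on the relevant GNS sector — equivalently, $\sM_L = \sM_R'$ up to the center, so anything commuting with both lies in $\sM_R\cap\sM_R' = \mathbb{C}$. This is the analogue of the statement, expected to hold generically as noted at the end of Section \ref{sec:alg}, that the full single-boundary algebra is recovered from local (here modular-local) operators.

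The hard part will be justifying the geometric identification of $(\sX_{I(s_0)}^R)'\cap\sX_A$ with the complementary modular band algebra and, crucially, that the intersection over all $s_0$ really shrinks to $\ga_A$ rather than to some larger algebra: this is where assumptions 2 and 3 do all the work, and where the stringy case is genuinely uncertain, since a failure of Haag duality or the appearance of exotic large-$N$ operators (modular-flow-generated or otherwise) would leave room for operators "between" the two copies of $\ga_A$. I would therefore state the proposition exactly as conditioned — under assumptions 1, 2, 3 — and present the proof as the modular-time transcription of the ER-algebra argument, flagging that the only substantive input beyond Section \ref{sec:alg} is Haag duality (assumption 3) used to turn the commutant of a modular band into the complementary modular band, plus the absence of operators strictly localized on $\ga_A$.
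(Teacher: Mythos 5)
Your ``Concretely'' paragraph is exactly the paper's proof: by assumption 2 the modular bands generate $\sX^R$ and $\sX^L$, so $\bigvee_{I}(\sX_I^L\vee\sX_I^R)=\sX^L\vee\sX^R=\mathcal{B}(\mathcal{H})$ by Haag duality plus factoriality, and taking commutants gives the claim. The geometric scaffolding in your first two paragraphs --- in particular the identification of $(\sX_{I}^R)'\cap\sX_A$ with a ``complementary modular band'' algebra --- is neither what assumption 3 asserts (it only says $\sX_A'=\sX_{\bar A}$) nor needed anywhere in the argument, so you should drop it rather than flag it as the hard part.
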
 

\begin{proof}
\begin{align}
\bigvee_{I}\left(\sX_I^L\vee\sX_I^R\right)=\sX^L\vee\sX^R=\mathcal{B}(\mathcal{H}),
\end{align}
so by going to the complement,
\begin{align}\bigcap_{I}(\sX_I^L\vee\sX_I^R)^\prime=\mathbb{C}.\end{align} 
\end{proof}

The discussion of Sec.~\ref{sec:spec} can also be directly carried over by replacing $\phi (t)$ there by $\phi (s)$ of~\eqref{eow}. The modular depth parameter and possible existence of a nontrivial QES can then be deduced from the behavior of  the modular spectral function in a similar manner to the previous discussion. Note that just as before, this definition staightforwardly extends to the stringy regime, where potentially, different fields can see different entanglement wedges and quantum extremal surfaces. This could lead to at best, field-dependence, and at worst, an inconsistency in the definition of the entanglement wedge, and would be somewhat in line with hints \cite{Chandrasekaran:2021tkb} suggesting difficulties in defining the entanglement wedge universally in the stringy regime. We look forward to investigating the potential field-dependence of stringy entanglement wedges in future work.

Now consider some simple examples. 

Suppose $A$ is a proper subregion on a boundary Cauchy slice. The entanglement wedge is a proper subregion in the bulk and has a nontrivial QES $\ga_A$. Given that modular flows are expected to act as local boosts near $\ga_A$, we deduce that the ``single-particle'' modular spectral function should have a complete spectrum, and from the discussion of Sec.~\ref{sec:full}, the modular depth parameter is $\infty$. We can generalize this discussion to the stringy regime. Suppose the modular depth parameter for $A$ remains infinite at finite $\lam$, we can then say the QES survives finite $\apr$ corrections.

Now consider a two-sided semi-classical state corresponding to the long wormhole shown on Figure \ref{fig:genex} (a). For simplicity, we will suppose that the bulk spacetime is time-reflection symmetric, and consider $A$ to be the full right boundary on the time-reflection symmetric Cauchy slice.  By using nested time bands of single-trace operator algebras defined in Sec.~\ref{sec:hor2}, we can probe the existence of the causal horizon. In this case there is a nontrivial QES lying outside the horizon, which can be probed by nested modular time bands. So in this case both the causal depth parameter $\sT_c$ and the modular depth parameter $\sT_m$ are infinite. The fact that the QES lies outside the horizon is reflected in the fact that $\sS_R$ has a nontrivial relative commutant inside $\sX_R$. Thus the larger the algebra, the deeper we can probe into the bulk, even though we cannot directly compare $\sT_c$ and $\sT_m$, and both are infinite.  At finite $\lam$, we can also use the time bands of $\sS_R$ and modular time bands of $\sX_R$ to define the stringy horizons and QES. That one probes ``deeper'' than the other again follows from the proper inclusion of one algebra into the other.

\subsection{Algebraic ER=EPR in the stringy regime?} \label{sec:ER} 

In \cite{Engelhardt:2023xer}, an algebraic version of ER=EPR was proposed in the semiclassical regime to characterize spacetime connectivity with the algebras of the boundary theory. 
Here we consider the possible extension of this proposal to the stringy regime. In particular, we discuss the subtle relationship between the value of the modular depth parameter and the type of the underlying von Neumann algebra.

Consider first the $\lam \to \infty$ limit, and a two-sided pure state with a {\it classical} dual spacetime $M$. Suppose $\sX_R, \sX_L$ are respectively operator algebras of CFT$_R$ and CFT$_L$ in the large $N$ limit. Then the algebraic ER=EPR proposal says that $M$ is: 
\bea\label{i01}
&1. & \text{is disconnected iff ${\cal X}_{R}$ and ${\cal X}_{L}$ are both type I.} \\
&2. & \text{has a classical wormhole connecting $R$ to $L$ iff ${\cal X}_{R}$ and ${\cal X}_{L}$ are both type III$_{1}$.} \ \
\label{i02}
\eea
It would be desirable to generalize the proposal to the stringy regime.

In establishing~\eqref{i01}--\eqref{i02} in the algebraic ER=EPR proposal for a classical spacetime~\cite{Engelhardt:2023xer}, it was assumed that the entanglement wedges of the $R$ and $L$ systems are themselves classical spacetime manifolds. 
Such an assumption can no longer be made in the stringy regime. For example, in the case of the IOP model, we see that despite the boundary algebra being type III$_1$, the depth parameter\footnote{In that example, modular and causal depth parameters coincide.} vanishes. While the IOP model may not be dual to a string theory,\footnote{In particular it does not have the analogous $\lam \to \infty$ limit that is dual to the semi-classical gravity regime.} this example highlights that 
in general type III$_1$ by itself does not imply $\sT_m = \infty$. Additional physical inputs may be needed to specify what kind of boundary systems are dual to a string theory.  

We may also ask the converse mathematical question: does  $\mathcal{T}_m =\infty$ imply type III$_1$? 

Again, the answer is no. The reason is that there are some exotic forms of the modular spectral density that lead to an infinite depth parameter while the von Neumann algebra still has type I. In the thermofield double state at inverse temperature $\beta$, a necessary and sufficient condition for the large $N$ algebra to be type I is that the operator $\gamma_\beta$ on $L^2(\mathbb{R},\Theta(\omega)\rho(\omega)d\omega)$ defined by \begin{align}\gamma_\beta f(\omega):=e^{-\beta\omega}f(\omega)\end{align} is trace-class \cite{Derezinski,Gesteau:2024dhj}, i.e., that the support $\{\omega_i\}$ of $\rho$ is discrete and 
\begin{align}
\sum_ie^{-\beta\lvert\omega_i\rvert}<\infty.\label{eq:traceclass}
\end{align}

Theorem 3 of \cite{poltoratski2013problem} shows that it is possible to have Equation \eqref{eq:traceclass} hold true while having $\mathcal{T}=\infty$.
For an explicit example, we can again consider the ``primon gas" or ``Riemannium" at low temperature, which we saw in \ref{sec:genR} has depth parameter $\mathcal{T}=\infty$.  It has an exponentially growing spectral support but still gives rise to a type I von Neumann algebra at low enough temperature \cite{Bost1995HeckeAT}.  In this case, 
\begin{align}
\mathrm{Tr}\,{\gamma_\beta}=
\sum_ne^{-\beta\,{\mathrm{ln}\, p_n}}=\sum_np_n^{-\beta},
\end{align}
where $p_n$ denote the $n^{th}$ prime number. This series is convergent at low enough temperature, for $\beta>1$. So at low enough temperature, we are in a case in which even though the von Neumann algebra is type I, we have $\mathcal{T}=\infty$.

\subsection{Evaporating black holes: differentiating type III$_1$ algebras before and after the Page time}
\label{sec:evap}

In this subsection we give another application of the modular depth parameter by showing how a question raised in~\cite{Engelhardt:2023xer} about spacetime connectivity for an evaporating black hole system can in principle be addressed using the techniques developed in this paper. 

Consider an evaporating black hole system in AdS as discussed in \cite{Engelhardt:2023xer}. 
 Shortly before the Page time, the QES for the boundary CFT (i.e. taking $A$ to be a full boundary Cauchy slice) is empty and the black hole ($B$) is seemingly disconnected from the radiation ($R$), whereas shortly after the Page time, the QES is nontrivial and the black hole is classically connected with the radiation. See Fig.~\ref{fig:evap}.

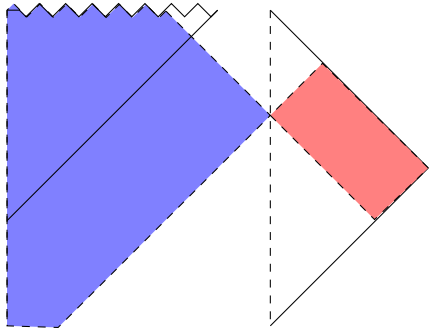
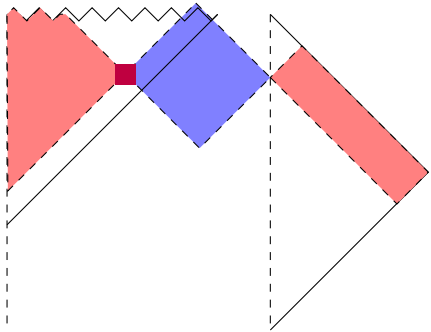
\begin{figure}
\centering
\begin{subfigure}{0.7\textwidth}
\quad\quad\quad
\begin{tikzpicture}[scale=0.7]
\node (I)    at ( 5,0)   {};
\node (Ip)    at (6,0)   {};
\node (II)   at (1,0)   {};
\path 
   (I) +(90:2)  coordinate (Itop1)
       +(-90:2) coordinate (Ibot1)
       +(180:2) coordinate (Ileft1)
       +(90:1) coordinate (Ip1)
       ;

\path 
   (Ip) +(-90:2) coordinate (Ibott)
        +(90:4) coordinate (Itopp)
        +(90:2) coordinate (C)
       ;

\path 
   (I) +(90:4)  coordinate (Itop)
       +(-90:2) coordinate (Ibot)
       +(180:4) coordinate (Ileft)
       ;

\path 
   (II) +(90:4)  coordinate (IItop)
       +(-90:2) coordinate (IIbot)
       +(180:4) coordinate (IIleft)
       +(90:0)   coordinate                  (IItop1)
       ;

\path 
(Ip1)+(0:4) coordinate (bath)
;
\path 
 (C) +(45:1.4) coordinate (Cbtop)
     +(-45:2.8) coordinate (Cbbot)   
     ;

\path 
 (C) +(45:1.4) coordinate (Cbtop)
     +(-45:2.8) coordinate (Cbbot)   
     +(135:2.8) coordinate (Cbhtop)
     +(-135:5.7) coordinate (Cbhbot)   
     ;

\draw[dashed,fill=blue!50,decoration=zigzag] (Cbhtop)--(C) -- (Cbhbot)--(IIbot)--(IItop) decorate{--(Cbhtop)}--cycle;
\draw (IItop1) -- (Itop);
\draw (Itopp) -- (bath) -- (Ibott);
\draw[decoration=zigzag] (Itop) decorate{-- (IItop)};
\draw[dashed] (IItop) -- (IIbot);
\draw[dashed] (Itopp) -- (Ibott);
\draw[dashed,fill=red!50] (C) -- (Cbbot) -- (bath) -- (Cbtop)-- (C) -- cycle;
\end{tikzpicture}
\caption{An evaporating black hole shortly before the Page time. The entanglement wedge of the bath of radiation, denoted in red, is disconnected from the entanglement wedge of the black hole denoted in blue: they do not share any nontrivial quantum extremal surface.}
\end{subfigure}
\begin{subfigure}{0.7\textwidth}
\quad\quad\quad
\begin{tikzpicture}[scale=0.7]
\node (I)    at ( 5,0)   {};
\node (Ip)    at (6,0)   {};
\node (II)   at (1,0)   {};
\path 
   (I) +(90:2)  coordinate (Itop1)
       +(-90:2) coordinate (Ibot1)
       +(180:2) coordinate (Ileft1)
       +(90:1) coordinate (Ip1)
       ;

\path 
   (Ip) +(-90:2) coordinate (Ibott)
        +(90:4) coordinate (Itopp)
        +(90:2.8) coordinate (C)
       ;

\path 
   (I) +(90:4)  coordinate (Itop)
       +(-90:2) coordinate (Ibot)
       +(180:4) coordinate (Ileft)
       ;

\path 
   (II) +(90:4)  coordinate (IItop)
       +(-90:2) coordinate (IIbot)
       +(180:4) coordinate (IIleft)
       +(90:0)   coordinate                  (IItop1)
       ;

\path 
(Ip1)+(0:4) coordinate (bath)
;
\path 
 (C) +(45:1.4) coordinate (Cbtop)
     +(-45:2.8) coordinate (Cbbot)   
     ;

\path 
 (C) +(45:0.85) coordinate (Cbtop)
     +(-45:3.4) coordinate (Cbbot)   
     +(135:2) coordinate (Cbhtop)
     ;
\path
 (Cbhtop) +(-135:1.9) coordinate (Cbhbot)
 ;
\path 
(Cbhbot) +(-45:2) coordinate (radb)
;
\path 
(Cbhbot) +(-135:3.15) coordinate (bleft)
       +(135:1.6) coordinate (btop)
;

\draw[dashed,fill=blue!50] (Cbhtop)--(C) -- node[midway, below, sloped] {}(radb)--(Cbhbot)--(Cbhtop)--cycle;
\draw (IItop1) -- (Itop);
\draw (Itopp) -- (bath) -- (Ibott);
\draw[decoration=zigzag] (Itop) decorate{-- (IItop)};
\draw[dashed] (IItop) -- (IIbot);
\draw[dashed] (Itopp) -- (Ibott);
\draw[dashed,fill=red!50] (C) -- (Cbbot) -- (bath) -- (Cbtop)-- (C) -- cycle;
\draw[dashed,fill=red!50,decoration=zigzag] (Cbhbot)--(bleft)--(IItop) decorate{-- (btop)}--(Cbhbot);
\node[fill=purple] (QES) at (3.25,2.85) {};
\end{tikzpicture}
\caption{An evaporating black hole shortly after the Page time. The entanglement wedge of the bath of radiation, denoted in red, is now connected to the entanglement wedge of the black hole denoted in blue: they do share a quantum extremal surface represented by the purple dot.}
\end{subfigure}
\caption{An evaporating black hole shortly before and shortly after the Page time. In both cases, the algebra of observables describing the interior has type III$_1$, however it is only after the Page time that a QES creates connectivity between the entanglement wedge of the radiation and the entanglement wedge of the black hole.}
\label{fig:evap}
\end{figure}

On both sides of the Page time, the corresponding boundary algebra $\sX_B$ for the black hole subsystem has been argued to be type III$_1$. 
Thus it appears to provide a counterexample to~\eqref{i02}. 

It is not a counterexample, as in this case the dual bulk spacetime is not classical, but quantum volatile\footnote{In a quantum volatile spacetime, one or both of the following are violated in the $G_N \to 0$ limit: (i)
 fluctuations of diffeomorphism invariants go to zero as ${\cal O}(G_{N}^{a})$ for some $a>0$;  (ii) spacetime diffeomorphism invariants such as volumes, areas, and lengths  do not scale as $O(G_{N}^{-a})$ with $a > 0$.}---the volume of the black hole interior grows to infinity in the $G_N \to 0$ limit. It was accordingly argued in~\cite{Engelhardt:2023xer} that, before the Page time, $B$ and $R$ are quantum connected. However, it was not clear how to distinguish directly the classical connectivity after the Page time from the quantum connectivity before the Page time intrinsically from the structure of the boundary algebra $\sX_B$. 
 
We propose that the modular depth parameter can diagnose this change of connectivity, with $\mathcal{T}_m<\infty$ before the Page time, and $\mathcal{T}_m=\infty$ after the Page time. That $\mathcal{T}_m=\infty$ after the Page time follows simply from existence of a non-empty QES. We now outline an argument for 
$\mathcal{T}_m<\infty$ before the Page time. Since currently we cannot calculate the modular spectral function for $\sX_B$ explicitly, the argument we present below should be only viewed as a plausibility argument.

Before the Page time, the entanglement wedge of the boundary includes the interior of the black hole which has an infinite proper length in the $G_N \to 0$ limit. If it were possible to probe this region using light rays, we would have concluded that the depth parameter should be infinite. We now argue that in terms of the modular depth parameter, it can in fact be finite. 
 
To illustrate the main idea, we will use a simple model. Consider a $(1+1)$-dimensional (nonrelativistic) free field theory of a scalar field $\Phi$ defined on an infinite half line with boundary at $x=0$. Consider a localized excitation of the bulk, say by a bump function $f_0$ localized on a small interval of a one-dimensional time slice $t=0$ (at infinite volume), of the form \be 
\Phi (f) = \int_{-\infty}^0 dx \, f (x) \Phi (x), \quad f \in L^2(\mathbb{R_-}) \ .
\ee Importantly, we allow $f$ to be complex-valued, so that $\phi(f)$ is in general a linear combination of two selfadjoint fields that can be interpreted as position and momentum. We have the commutation relation \begin{align}[\phi(f),\phi(g)]=i\,\mathrm{Im}\braket{f,g}Id.\end{align}As a toy model, suppose that the bulk modular Hamiltonian can be modelled on the one particle Hilbert space $L^2(\mathbb{R}_-)$ by some selfadjoint operator $H$. Then in the second quantized picture, the operator $\phi(f)$ is evolved at time $t$ into the operator $\phi(f_t)$, where $f_t(x)$ is the value of the solution $f(x,t)$ to the Schr\"odinger equation \begin{align}i\partial_t f=H f,\end{align} with initial condition \begin{align}f(x,0)=f_0(x).\end{align}
Out of this data, we can construct a generalized free field theory in $0+1$ dimensions, as follows. First, define an antisymmetric function of time 
\begin{align}
\rho(t):=\braket{[\phi(f_t),\phi(f_0)]}_\beta=i\,\mathrm{Im}\braket{f_t,f_0}.
\end{align}
Then construct a generalized free field theory at finite temperature in the same way as in Section \ref{sec:spec}, from this spectral function. We want to show that for a physically reasonable choice of Hamiltonian $H$, $\mathcal{T}<\infty$.

As a toy example, we can choose $H$ to be the Hamiltonian of a one-dimensional quantum harmonic oscillator. Since the odd eigenfunctions of the harmonic oscillator form a basis of $L^2(\mathbb{R}_-)$, for any choice of $f_0\in L^2(\mathbb{R}_-)$, we can decompose \begin{align}f_0=\sum_{n\in2\mathbb{N}+1}\lambda_n\psi_n,\end{align}where the $\psi_n$ are the harmonic oscillator eigenfunctions of energy $(n+\frac{1}{2})\omega_0$. As a result, $\rho(t)$ is of the form
\begin{align}
\rho(t)=\sum_{n\in\mathbb{Z}} \mu_n e^{i\left(n+\frac{1}{2}\right)t},
\end{align}
which means that in frequency space it is an evenly spaced linear combination of Dirac deltas. Therefore it is periodic up to a phase, and we deduce that the depth parameter $\mathcal{T}$ for this generalized free field is finite (and smaller than $2\pi$).

While modular flows are nonlocal, it may still sound surprising that it can give a finite modular depth for an infinite proper length.

Heuristically, we do not expect the structure of modular flow outside the black hole to significantly differ from the one of the thermofield double state. In the exterior region, we expect that operators can be reconstructed from ``simple" single trace operators on the boundary with good accuracy, and in particular that at least for reasonably small time scales, the modular flows of these operators can be approximated by boundary time evolution of these single trace operators. 
But modular flow on operators in the interior significantly differs from a thermofield double. These operators are complex and cannot be reconstructed solely from boundary single traces: we also need large $N$ limits of boundary modular flows. On these operators, modular flow can potentially act very differently from boundary time evolution.
Hence, outside the black hole we expect operators to propagate under modular flow following (approximately) relativistic laws (in particular, with finite speed), but close enough to the interior, modular flow is so different from any geometric transformation that it could in principle allow for a modular Hamiltonian that is nonlocal enough that it guarantees that the value of $\mathcal{T}$ is finite.

\section{Discussion} \label{sec:diss}

In this paper, we have taken a first step towards developing a new language to describe
causal structure, event horizons, and quantum extremal surfaces (QES) for the bulk duals of boundary systems that extend beyond the standard Einstein gravity regime. Here we comment on some future directions.

\ben 

\item {\bf Connections between geometric concepts and the ergodic hierarchy/quantum chaos}

In this paper we introduced the depth parameter $\mathcal{T}$ by considering the inclusion structure of time band algebras on the boundary. We commented earlier that the depth parameter has an interpretation in terms of ergodicity,
 in the sense that it encodes how long one needs to wait in order to obtain full information about the large $N$ theory. 
 
We also defined horizons in terms of half-sided inclusion, which is related to a stronger notion of ergodic or quantum chaotic behavior. 

As recently discussed in~\cite{Gesteau:2023rrx,Ouseph:2023juq}, there are various other possible levels of ergodicity and chaos in a system, beyond those corresponding to the depth parameter and half-sided modular inclusion. It 
would be interesting to understand whether they can be associated with emergent bulk geometric structure. 
Conversely, bulk geometric structures could give inspirations for formulating a mathematical theory of quantum chaos, which is still at early stages. 

\item  {\bf Structure of higher-point functions}

The structures discussed in this paper can be deduced with the knowledge of boundary two-point functions. 
Exploring the structure of higher-point functions could lead to new insights into the non-locality brought by possible violation of equivalence principle mentioned earlier. 

Furthermore, at the level of Einstein gravity, the presence of a black hole event horizon leads to universal behavior 
in out-of-time-ordered correlators (OTOCs)~\cite{SheSta14}, and maximal value of the quantum Lyapunov exponent~\cite{Maldacena:2015waa}. 
Going to the stringy regime, OTOCs involve exchanges of an infinite tower of higher-spin stringy excitations, which gives rise to non-maximal chaos~\cite{SheSta15}. It would be interesting to understand whether such behavior can still be attributed to a stringy horizon.

\item {\bf Connection to the inverse spectral problem}

The question we are asking here is closely related to the so-called inverse spectral problem. The basic idea is the following. In an asymptotically AdS spacetime, the equation of motion for a free field reduces to a Schrödinger equation in the radial variable, with some potential $V(r)$ dictated by the geometry. The standard formulation of the inverse scattering problem is to ask, given the scattering data for the solutions of the Schrödinger equation, how one may reconstruct the potential $V(r)$. 

In our context, the scattering data corresponds to the correlation functions of the boundary generalized free fields, while the potential $V(r)$ would correspond to a ``stringy geometry" in the bulk. There is a general approach to the inverse spectral problem, due to Gelfand and Levitan \cite{Gelfand_Levitan_1955a}, which allows to recover quite explicit results on $V(r)$ given the scattering data. Interestingly, Gelfand--Levitan theory is intimately related to the exponential type problem in harmonic analysis, which we saw that in our context, maps to the determination of the depth parameter $\mathcal{T}$. Therefore it is likely that the ideas introduced here could find a very natural place within Gelfand--Levitan theory, and that understanding how these ideas fit together could teach us more about emergent stringy geometries, especially about modular Hamiltonians of time bands, quasi-normal modes of stringy black holes, and maybe even a description of stringy singularities. We hope to report on this connection in the near future.

\item {\bf Towards a stringy QES formula}

In this work, we were able to construct an algebra of sequences to describe stringy bifurcate horizons. What we are still lacking is some notion of measure on this algebra that could give us information on a stringy notion of ``area" of the bifurcate horizon. Coming up with such a notion would open a path towards a possible derivation of a Ryu--Takayanagi/QES formula at nonzero string length. 

The operator algebraic language has already proven to be useful in order to understand gravitational entropy, in particular the works \cite{Witten:2021unn,Chandrasekaran:2022eqq,Chandrasekaran:2022cip,Jensen:2023yxy,Kudler-Flam:2023qfl} on the crossed product have allowed to define a type II entropy that agrees with the semiclassical notion of generalized entropy up to an overall constant. However, this semiclassical notion does not provide any meaningful way of constructing an area term, as 
the type II entropy is only defined up to a constant.  So far, the only meaningful notion of area term in the operator algebraic context have been formulated by mapping the large $N$ theory to the finite $N$ theory through an approximate quantum error correcting code \cite{Faulkner:2022ada,Gesteau:2023hbq}. It is our hope that the methods introduced here may be relevant to define a notion of area term directly at large $N$, for arbitrary values of the string length, that contributes to gravitational entropy together with the entropy of bulk fields. We expect that such a setup could also provide new insight into the exchange ambiguity between the bulk entropy term and the area term \cite{Susskind:1994sm,Gesteau:2023hbq}.

\item {\bf A new language for stringy spacetime?}

Our results suggest that the algebraic framework is capable of retaining properties that one expects from an emergent spacetime even in a potentially very stringy regime where this spacetime is not described by the usual notion of smooth differentiable Lorentzian manifold. In particular, it appears that the algebraic structure allows to describe the causal structure of this stringy spacetime through the observables that live on some of its subregions.

The idea of describing geometric spaces through the objects that live on their subregions has been extremely fruitful in mathematics, and is at the core of most of modern geometry. In topology, this is exemplified by Gelfand duality, which describes a space by its algebra of functions, while in algebraic geometry this is for example achieved by the notion of sheaf (and actually one can see many of the constructions in algebraic quantum field theory through this lens). It would be very interesting to see how far this idea can be pushed in our context, and find whether one can \textit{define} the subregions of a stringy spacetime only through the algebras of observables that live on its subregions and their embeddings. This way of defining a space directly through the algebras that characterize its subspaces is somewhat reminiscent of the notion of topos, due to Grothendieck. A hint that such an approach to stringy geometry may be possible is that von Neumann algebras and their commutants on a Fock space already come equipped with a structure: that of a complemented complete lattice \cite{Derezinski}, which is reviewed in Appendix \ref{app:araki}. In particular we can construct a set of real symplectic subspaces $\mathcal{V}_i$ of the large $N$ one particle Hilbert space, stable generation, intersection and symplectic complement, and associated von Neumann algebras $\mathcal{M}(\mathcal{V}_i)$, such that
\begin{thm}
\bega
\mathcal{M}(\mathcal{V}_1)=\mathcal{M}(\mathcal{V}_2)\;\mathrm{iff}\;\mathcal{V}_1=\mathcal{V}_2, \\
\mathcal{V}_1\subset\mathcal{V}_2\Rightarrow\mathcal{M}(\mathcal{V}_1)\subset\mathcal{M}(\mathcal{V}_2), \\
\mathcal{M}\left(\bigvee_i \mathcal{V}_i\right)=\bigvee_i \mathcal{M}(\mathcal{V}_i), \\
\mathcal{M}\left(\bigcap_i \mathcal{V}_i\right)=\bigcap_i \mathcal{M}(\mathcal{V}_i), \\
\mathcal{M}\left(\mathcal{V}\right)^\prime=\mathcal{M}\left(\mathcal{V}^\prime\right),
\end{gather}
and $\mathcal{M}\left(\mathcal{V}\right)$ is a factor iff $\mathcal{V}\cap\mathcal{V}^\prime=\{0\}.$
\end{thm}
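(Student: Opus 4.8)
The statement is the basic structure theorem for the lattice of von Neumann algebras attached to a free Bose field (Araki's ``lattice'' theorem), in the form reviewed in Appendix~\ref{app:araki}, so the plan is to derive the commutant (duality) identity first and deduce the other five lines from it. Throughout, $\mathfrak{h}$ is the large $N$ one‑particle Hilbert space, $\mathcal{F}(\mathfrak{h})$ its symmetric Fock space with vacuum $\Omega$, $\sigma(f,g)=\mathrm{Im}\langle f,g\rangle$ the symplectic form, and $\mathcal{M}(\mathcal{V})$ the von Neumann algebra generated by the Weyl unitaries $\{W(f):f\in\mathcal{V}\}$ for a closed real‑linear subspace $\mathcal{V}\subseteq\mathfrak{h}$; the $\mathcal{V}_i$ are such closed subspaces, $\bigvee_i\mathcal{V}_i=\overline{\sum_i\mathcal{V}_i}$, and $\mathcal{V}'=\{g:\sigma(g,f)=0\ \forall f\in\mathcal{V}\}$. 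From the Weyl relations $W(f)W(g)=e^{-\frac{i}{2}\sigma(f,g)}W(f+g)$ three lines are immediate: monotonicity $\mathcal{V}_1\subseteq\mathcal{V}_2\Rightarrow\mathcal{M}(\mathcal{V}_1)\subseteq\mathcal{M}(\mathcal{V}_2)$; the join identity, since $\bigvee_i\mathcal{M}(\mathcal{V}_i)$ contains every finite product $W(f_1)\cdots W(f_k)$ with $f_j\in\mathcal{V}_{i_j}$ and $f\mapsto W(f)$ is strongly continuous, so it contains $\mathcal{M}(\overline{\sum_i\mathcal{V}_i})$, the reverse being monotonicity; and the easy half of duality, $\mathcal{M}(\mathcal{V}')\subseteq\mathcal{M}(\mathcal{V})'$, because $W(g)$ commutes with all $W(f)$, $f\in\mathcal{V}$, when $g\in\mathcal{V}'$.

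The analytic core is the reverse inclusion $\mathcal{M}(\mathcal{V})'\subseteq\mathcal{M}(\mathcal{V}')$, i.e.\ Araki duality, and I would prove it in two steps. Step~1 handles a \emph{complex} subspace $\mathcal{V}$: the exponential law $\mathcal{F}(\mathfrak{h})\cong\mathcal{F}(\mathcal{V})\otimes\mathcal{F}(\mathcal{V}^{\perp_{\mathbb{C}}})$ identifies $\mathcal{M}(\mathcal{V})$ with $\mathcal{B}(\mathcal{F}(\mathcal{V}))\otimes\mathbb{1}$, whose commutant is manifestly $\mathbb{1}\otimes\mathcal{B}(\mathcal{F}(\mathcal{V}^{\perp_{\mathbb{C}}}))=\mathcal{M}(\mathcal{V}^{\perp_{\mathbb{C}}})=\mathcal{M}(\mathcal{V}')$ (for complex $\mathcal{V}$ the symplectic and complex orthocomplements coincide). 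Step~2 reduces a general closed real $\mathcal{V}$ to the \emph{standard} case, where $\Omega$ is cyclic and separating for $\mathcal{M}(\mathcal{V})$, via the usual decomposition of $\mathcal{V}$ into characteristic subspaces that isolates the parts where $\mathcal{V}\cap i\mathcal{V}\neq\{0\}$ or $\overline{\mathcal{V}+i\mathcal{V}}\neq\mathfrak{h}$ (on those the statement is elementary). In the standard case one computes that the Tomita modular and conjugation operators of $(\mathcal{M}(\mathcal{V}),\Omega)$ are the second quantizations $\Gamma(\delta_{\mathcal{V}})$, $\Gamma(j_{\mathcal{V}})$ of the one‑particle pre‑modular data of $\mathcal{V}$, and since $j_{\mathcal{V}}\mathcal{V}=\mathcal{V}'$ at the one‑particle level, $J\,\mathcal{M}(\mathcal{V})\,J=\mathcal{M}(\mathcal{V}')$. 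Alternatively one simply invokes the version of Araki's theorem recorded in Appendix~\ref{app:araki}.

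Granting duality together with the elementary fact $\mathcal{V}''=\mathcal{V}$ for closed $\mathcal{V}$ (since $\mathcal{V}'=(i\mathcal{V})^{\perp_{\mathbb{R}}}$ and real orthocomplementation is an involution on closed subspaces, which also gives the de Morgan law $(\overline{\sum_i\mathcal{V}_i})'=\bigcap_i\mathcal{V}_i'$), the rest is lattice bookkeeping. The meet line:
\begin{align}
\bigcap_i\mathcal{M}(\mathcal{V}_i)
&=\Big(\bigvee_i\mathcal{M}(\mathcal{V}_i)'\Big)'
=\Big(\bigvee_i\mathcal{M}(\mathcal{V}_i')\Big)'
=\mathcal{M}\Big(\overline{\textstyle\sum_i\mathcal{V}_i'}\Big)' \\
&=\mathcal{M}\Big(\bigcap_i\mathcal{V}_i''\Big)
=\mathcal{M}\Big(\bigcap_i\mathcal{V}_i\Big).
\end{align}
Injectivity: $\{f:W(f)\in\mathcal{M}(\mathcal{V})\}=\mathcal{V}$, where ``$\supseteq$'' is the definition and ``$\subseteq$'' holds because $W(f)\in\mathcal{M}(\mathcal{V})$ commutes with $\mathcal{M}(\mathcal{V}')\subseteq\mathcal{M}(\mathcal{V})'$, forcing $\sigma(f,g)=0$ for all $g\in\mathcal{V}'$ (scale $g$), i.e.\ $f\in\mathcal{V}''=\mathcal{V}$; hence $\mathcal{M}(\mathcal{V}_1)=\mathcal{M}(\mathcal{V}_2)\Rightarrow\mathcal{V}_1=\mathcal{V}_2$, and in particular $\mathcal{M}(\mathcal{W})=\mathbb{C}=\mathcal{M}(\{0\})\Rightarrow\mathcal{W}=\{0\}$ (as $W(0)$ is the identity). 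Factoriality: $\mathcal{M}(\mathcal{V})$ is a factor iff $\mathcal{M}(\mathcal{V})\cap\mathcal{M}(\mathcal{V})'=\mathbb{C}$; by duality this is $\mathcal{M}(\mathcal{V})\cap\mathcal{M}(\mathcal{V}')$, by the meet line it equals $\mathcal{M}(\mathcal{V}\cap\mathcal{V}')$, and by injectivity the latter is trivial exactly when $\mathcal{V}\cap\mathcal{V}'=\{0\}$.

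\textbf{Main obstacle.} All the content sits in the duality identity $\mathcal{M}(\mathcal{V})'=\mathcal{M}(\mathcal{V}')$; its hard half is genuinely analytic, needing the second‑quantized Tomita--Takesaki theory of real subspaces and, in the non‑standard cases, the preliminary decomposition of $\mathcal{V}$ into its degenerate, complex and standard pieces so that $\Omega$ is separating/cyclic where required. Once that is in hand --- or imported from Appendix~\ref{app:araki} --- the remaining five lines are purely formal.
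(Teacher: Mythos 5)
Your proposal is correct, but there is nothing in the paper to compare it against: the paper does not prove this theorem at all. In both places where the statement appears (the Discussion and Appendix B.2) it is simply quoted as a known result and attributed to Araki \cite{Araki_1963}, with the remark that it translates the commutant operation into the symplectic complement. What you have written is a faithful reconstruction of the standard proof from the literature (Araki's original argument and its later reformulations via Tomita--Takesaki theory for standard real subspaces): the Weyl relations give monotonicity, the join identity, the inclusion $\mathcal{M}(\mathcal{V}')\subseteq\mathcal{M}(\mathcal{V})'$ and, together with $\mathcal{V}''=\mathcal{V}$ and the de Morgan law for symplectic complements, reduce everything --- meet, injectivity, and the factoriality criterion --- to the duality identity $\mathcal{M}(\mathcal{V})'=\mathcal{M}(\mathcal{V}')$; the duality itself is the genuinely analytic step, handled exactly as you say by second-quantizing the one-particle modular data $(j_{\mathcal{V}},\delta_{\mathcal{V}})$ of a standard subspace and using $j_{\mathcal{V}}\mathcal{V}=\mathcal{V}'$, after splitting off the degenerate and complex pieces of $\mathcal{V}$. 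Your lattice bookkeeping (the chain computing $\bigcap_i\mathcal{M}(\mathcal{V}_i)$, the scaling argument showing $\{f:W(f)\in\mathcal{M}(\mathcal{V})\}=\mathcal{V}$, and the identification $\mathcal{M}(\mathcal{V})\cap\mathcal{M}(\mathcal{V})'=\mathcal{M}(\mathcal{V}\cap\mathcal{V}')$) is all sound. The only honest caveat is the one you already flag: the hard half of duality is not proved in your sketch but outlined, so at the level of rigor your write-up and the paper ultimately rest on the same imported input; the added value of your version is that it makes explicit which parts are formal consequences of the Weyl relations and which part carries the analytic content.
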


This structure is ideal in order to retain a notion of causality, since the commutant operation (or the symplectic complement at the one particle level) can be understood as an operation mapping a region to its causal complement. What now remains to be seen is how much geometric structure can be put on top of this causal structure in the stringy regime - how close to a full-fledged Lorentzian manifold can we get?

\een

\vspace{0.2in}   \centerline{\bf{Acknowledgements}} \vspace{0.2in}
We would like to thank Suzanne Bintanja, Matthew Dodelson, Netta Engelhardt, Tom Faulkner, Matthias  Gaberdiel, Elias Kiritsis, Sam Leutheusser, Nikolai Makarov, Juan Maldacena, Matilde Marcolli, Pascal Millet, Vladimir Narovlansky, Alexei Poltoratski, Leonardo Santilli, Joaquin Turiaci, and Edward Witten for discussions. 
This work is supported by the Office of High Energy Physics of U.S. Department of Energy under grant Contract Number  DE-SC0012567 and DE-SC0020360 (MIT contract \# 578218).

\appendix
\section{Elements of distribution theory}
\label{app:distributions}
In this first appendix, we recall the basic points of distribution theory that are being used throughout the paper. We particularly emphasize how one can take the Fourier transform of a distribution, as Fourier theory is the central element of our discussion.

\subsection{Distributions and tempered distributions}

Informally, distributions are a generalized notion of function that only makes sense \textit{dually}. Consider the example of the real line. A natural space of functions on $\mathbb{R}$ would be the space of square-integrable functions $L^2(\mathbb{R})$. However, in some contexts one would like to be able to consider more general kinds of ``functions" on $\mathbb{R}$, for example, discrete measures such as Dirac's delta, or oscillatory functions like $e^{i\omega t}$.

The issue is that such oscillatory functions, if defined at all, are definitely not square-integrable. This issue is often encountered in quantum mechanics, for example when considering a free particle in one dimension. One would like to diagonalize the Hamiltonian in a ``basis of Dirac deltas", but they are not part of the Hilbert space of $L^2(\mathbb{R})$. The well-known solution to this problem, which is sometimes referred to as ``rigging" the Hilbert space, amounts to allow more general notions of wavefunctions by passing to a space of distributions.

The idea is the following. By the Riesz representation theorem, the Hilbert space $\mathcal{H}=L^2(\mathbb{R})$ is \textit{reflexive}, which means that it is isomorphic to its topological dual. Therefore there are exactly as many continuous linear functionals on $\mathcal{H}=L^2(\mathbb{R})$ as there are elements of $\mathcal{H}=L^2(\mathbb{R})$. Therefore, if one wants to find a space of generalized functions that is larger than $L^2(\mathbb{R})$, a natural thing to do is to look at a \textit{subspace} of $L^2(\mathbb{R})$, so that there are more continuous linear functionals on it than on $L^2(\mathbb{R})$. 

The choice of such a subspace is nonunique, and leads to different notions of distributions. The most obvious choice of dense subspace of $L^2(\mathbb{R})$ is the space of compactly supported test functions $\mathcal{D}(\mathbb{R})$. It turns out that this space can be equipped with a topology, called the \textit{inductive limit topology}, such that for this topology, the topological dual $\mathcal{D}^\prime(\mathbb{R})$ of $\mathcal{D}(\mathbb{R})$ is strictly larger than $\mathcal{D}(\mathbb{R})$. In particular it contains Dirac deltas, and a large family of continuous functions. The space $\mathcal{D}^\prime(\mathbb{R})$ is known as the space of \textit{distributions} on $\mathbb{R}$.

Another interesting choice of dense subspace of $L^2(\mathbb{R})$ is the so-called \textit{Schwartz space} $\mathcal{S}(\mathbb{R})$, of functions whose derivatives all decay faster than polynomially at infinity. There also exists a way to equip $\mathcal{S}(\mathbb{R})$ with a natural topology for which $L^2(\mathbb{R})\subset\mathcal{S}^\prime(\mathbb{R})$. In fact, we even have a stronger statement: as \begin{align}\mathcal{D}(\mathbb{R})\subset\mathcal{S}(\mathbb{R})\subset L^2(\mathbb{R}),\end{align}
we have 
\begin{align}L^2(\mathbb{R})\subset\mathcal{S}^\prime(\mathbb{R})\subset \mathcal{D}^\prime(\mathbb{R}).\end{align}

Therefore one can see the notion of tempered distribution as a refined notion of distribution, that is continuous on a larger space of test functions (all of Schwartz space). One of the main reasons why tempered distributions are useful is because they are the right kind of distribution to consider in order to take Fourier transforms, as we now review.

\subsection{Fourier transforms of distributions}

The classical formula for the Fourier transform, 
\begin{align}
\tilde{f}(\omega):=\int_{\mathbb{R}}f(t)e^{-i\omega t} dt,
\end{align} is only well-defined on $L^1(\mathbb{R})$. However tempered distributions can be much more general. For example any continuous function with subexponential growth defines a tempered distribution. The point, however, is that the dual picture allows to define a notion of Fourier transform for any tempered distribution. The reason why we need the distributions to be tempered is because we need the space of test functions we choose to be stable under Fourier transform. As we point out now, this is obviously not true for compactly supported functions, but it is true for Schwartz space:

\begin{prop}
The Schwartz space $\mathcal{S}(\mathbb{R})$ is stable under Fourier transform.
\end{prop}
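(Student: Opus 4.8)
The plan is to verify directly that $\tilde f$ satisfies the defining estimates of $\mathcal{S}(\mathbb{R})$ whenever $f$ does, i.e.\ that every Schwartz seminorm $p_{m,n}(\tilde f) := \sup_{\omega}\,\bigl|\omega^m\,\tilde f^{(n)}(\omega)\bigr|$ is finite. Everything rests on the two intertwining identities relating the Fourier transform to differentiation and to multiplication by the coordinate: with the conventions of this appendix,
\[
\widetilde{f'}(\omega) = i\omega\,\tilde f(\omega), \qquad \widetilde{(t\,f)}(\omega) = i\,\frac{d}{d\omega}\tilde f(\omega).
\]

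First I would prove these two identities for $f\in\mathcal{S}(\mathbb{R})$. The first is integration by parts in $\int f'(t)\,e^{-i\omega t}\,dt$, with vanishing boundary terms because every Schwartz function (and its derivatives) decays faster than any polynomial. The second is differentiation under the integral sign in $\int f(t)\,e^{-i\omega t}\,dt$ with respect to $\omega$, which is legitimate because $t\mapsto t\,f(t)$ is again Schwartz, hence lies in $L^1(\mathbb{R})$, so dominated convergence applies with a bound uniform in $\omega$; the same reasoning shows that $\tilde f$ is $C^\infty$. Iterating the two identities then yields, for all $m,n\geq 0$,
\[
\omega^m\,\tilde f^{(n)}(\omega) \;=\; c_{m,n}\;\widetilde{\left(\frac{d^m}{dt^m}\bigl(t^n f(t)\bigr)\right)}(\omega), \qquad |c_{m,n}|=1.
\]

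Then I would conclude: for $f\in\mathcal{S}(\mathbb{R})$, the function $\frac{d^m}{dt^m}\bigl(t^n f(t)\bigr)$ is a finite linear combination of functions $t^k f^{(\ell)}(t)$, each of which is Schwartz and in particular in $L^1(\mathbb{R})$, so its Fourier transform is bounded (indeed continuous) with sup-norm at most its $L^1$-norm. Hence $p_{m,n}(\tilde f)<\infty$ for all $m,n$, which together with smoothness gives $\tilde f\in\mathcal{S}(\mathbb{R})$. Moreover the same chain of estimates, combined with $(1+t^2)^{-1}\in L^1(\mathbb{R})$ to convert $L^1$-norms into seminorms, bounds $p_{m,n}(\tilde f)$ by a finite sum of seminorms $p_{k,\ell}(f)$, so $f\mapsto\tilde f$ is continuous for the Fréchet topology of $\mathcal{S}(\mathbb{R})$.

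There is no genuine obstacle here: the only points requiring care are the vanishing of boundary terms in the integration by parts and the justification of differentiating under the integral sign, both of which are immediate consequences of the rapid decay built into the definition of $\mathcal{S}(\mathbb{R})$. The substance is entirely contained in the two intertwining identities, and the estimate above automatically upgrades the statement to the sharper fact that the Fourier transform is a topological automorphism of $\mathcal{S}(\mathbb{R})$ — this last point additionally invokes the Fourier inversion theorem, which one would prove separately (e.g.\ by approximation with Gaussians) if it is needed downstream.
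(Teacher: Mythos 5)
Your proof is correct and complete: the two intertwining identities, their iteration to express $\omega^m\,\tilde f^{(n)}(\omega)$ as the Fourier transform of $\frac{d^m}{dt^m}\bigl(t^n f(t)\bigr)$, and the $L^1$-to-sup bound on Fourier transforms together give exactly the finiteness of all Schwartz seminorms. The paper states this proposition without proof, treating it as a classical fact, so there is nothing to compare against; what you have written is the standard textbook argument, and the added remarks on continuity in the Fr\'echet topology and on inversion are correct bonuses not required by the statement.
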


This result allows to write a definition for the Fourier transform of a general tempered distribution.

\begin{defn}
Let $T$ be a tempered distribution on $\mathbb{R}$. We define the Fourier transform $\tilde{T}$ of the tempered distribution $T$ as follows. For $f\in\mathcal{S}(\mathbb{R})$,
\begin{align}
\tilde{T}(f):=T(\tilde{f}).
\end{align}
\end{defn}

In general, the Fourier transform of a tempered distribution is a general tempered distribution, however, under some extra assumption on the tempered distribution, one can say additional things about the regularity of the Fourier transform. There are many results along this line, going from very easy ones to highly nontrivial results in harmonic analysis like the Beurling--Malliavin theorem or solutions to the type problem we will discuss in the next appendix. Here we give one of the most useful ``easier" results of this type, namely an important part of the Schwartz--Paley--Wiener theorem:

\begin{prop}
The Fourier transform of a compactly supported distribution extends to an entire function on the complex plane.
\end{prop}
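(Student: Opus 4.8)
The plan is to realize the Fourier transform of a compactly supported distribution $T$ as the function $F(\omega) := T\big(t\mapsto e^{-i\omega t}\big)$, and then to show (i) that $F$ is well-defined and entire in $\omega\in\mathbb{C}$, and (ii) that the tempered distribution $\tilde T$ constructed in the previous subsection coincides with integration against the restriction $F|_{\mathbb{R}}$. Granting this, $F$ is the desired entire extension.

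First I would invoke the standard structural fact that a distribution with compact support extends uniquely to a continuous linear functional on $C^\infty(\mathbb{R})$: picking $\chi\in\mathcal{D}(\mathbb{R})$ with $\chi\equiv 1$ on a neighborhood of $\mathrm{supp}\,T$, one sets $T(\psi):=T(\chi\psi)$ for $\psi\in C^\infty(\mathbb{R})$, and there exist $C>0$, $N\in\mathbb{N}$ and a compact set $K$ such that
\begin{align}
|T(\psi)|\le C\sum_{k=0}^{N}\sup_{K}|\psi^{(k)}|
\end{align}
for all $\psi\in C^\infty(\mathbb{R})$. Since $t\mapsto e^{-i\omega t}$ is smooth for every $\omega\in\mathbb{C}$, $F(\omega)$ is well-defined. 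For holomorphy, fix $\omega$ and let $h\to 0$: the difference quotients $h^{-1}\big(e^{-i(\omega+h)t}-e^{-i\omega t}\big)$ converge to $-it\,e^{-i\omega t}$, uniformly for $t$ in any compact set together with all their $t$-derivatives (a routine estimate on the entire function $(h,t)\mapsto e^{-iht}$). By the seminorm estimate above, $T$ is continuous for exactly this mode of convergence, so $F'(\omega)=T\big(-it\,e^{-i\omega t}\big)$ exists; hence $F$ is entire. (Equivalently, one may apply Morera's theorem, interchanging $T$ with a contour integral using the same estimate.)

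Finally I would identify $F|_{\mathbb{R}}$ with $\tilde T$. For $f\in\mathcal{S}(\mathbb{R})$, the definition from the text gives $\tilde T(f)=T(\tilde f)$ with $\tilde f(t)=\int f(\omega)e^{-i\omega t}\,d\omega$. Writing this integral as a limit of Riemann sums, which converge to $\tilde f$ in $C^\infty(\mathbb{R})$ thanks to the rapid decay of $f$ and its derivatives together with the estimate above, one pulls $T$ inside to obtain
\begin{align}
\tilde T(f)=T(\tilde f)=\int f(\omega)\,T\big(e^{-i\omega\,\cdot}\big)\,d\omega=\int f(\omega)\,F(\omega)\,d\omega .
\end{align}
Thus $\tilde T$ is the regular tempered distribution given by $F|_{\mathbb{R}}$, and $F$ is its entire extension.

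The main obstacle is the justification of the interchange $T\big(\int\cdots\,d\omega\big)=\int T(\cdots)\,d\omega$ (and, in the Morera variant, $T\big(\oint\cdots\big)=\oint T(\cdots)$): one must verify that the relevant integrals, approximated by Riemann sums, converge in the topology of $C^\infty$ on a fixed compact neighborhood of $\mathrm{supp}\,T$—which is precisely where the order-$N$ seminorm bound for compactly supported distributions is used, and where the Schwartz decay of $f$ enters to control the tails. Everything else is routine.
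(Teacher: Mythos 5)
Your proof is correct: it is the standard argument for the easy direction of the Paley--Wiener--Schwartz theorem (extension of $T$ to $\mathcal{E}'=\left(C^\infty\right)'$ via the finite-order seminorm estimate, holomorphy of $\omega\mapsto T\bigl(e^{-i\omega\cdot}\bigr)$ from convergence of difference quotients in the $C^\infty$ topology on a compact neighborhood of $\mathrm{supp}\,T$, and identification with $\tilde T$ by Riemann-sum approximation). The paper states this proposition as a standard review fact without proof, so there is no competing argument to compare against; your route is the canonical one and all the interchanges you flag are justified exactly as you describe.
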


\subsection{Pointwise product and convolution of distributions}

Most of the time in this work, we reason in Fourier space on tempered distributions whose Fourier transforms are square integrable functions against a suitable measure, hence, it is usually not a problem to pointwise multiply the Fourier transforms. Nevertheless, it is worth pointing out that in general there is an obstruction to performing a pointwise product of distributions. For example, two Dirac deltas cannot be pointwise multiplied because they are all singular at the same point. It is therefore useful to ask when, in general, it is possible to define the pointwise product of distributions. It turns out that the answer relies on the notion of wavefront set, which we now define.

\begin{defn}
Let $\Omega$ be an open set in $\mathbb{R}^n$. Given a distribution $T\in\mathcal{D}^\prime(\mathbb{R}^n)$, the wavefront set $WF(T)$ of $T$ is defined to be the complement of the union of all conic subsets of $T^\ast U$ on which $T$ is smooth.
\end{defn}
This definition extends straightforwardly to the case of manifolds. We then have the crucial property:
\begin{prop}
The pointwise product of two distributions $T_1$ and $T_2$ is well-defined unless there exists $(x,k)\in T^\ast U$ such that $(x,k)\in WF(T_1)$ and $(x,-k)\in WF(T_2)$. 
\end{prop}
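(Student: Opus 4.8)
The plan is to establish the classical Hörmander multiplication criterion by working microlocally: define the product locally near each point of $U$ through the convolution of Fourier transforms of localized pieces, show the convolution converges under the stated hypothesis, and then patch. First I would recall the Fourier-analytic description of the wavefront set: a point $(x_0,k_0)$ does \emph{not} lie in $WF(T)$ exactly when there is $\chi\in C_c^\infty(U)$ with $\chi(x_0)\neq 0$ and an open cone $\Gamma\ni k_0$ on which $\widehat{\chi T}(\xi)$ decays faster than any polynomial as $|\xi|\to\infty$; moreover $\widehat{\chi T}$ always has at most polynomial growth because $\chi T$ is a compactly supported distribution (Schwartz--Paley--Wiener). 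Fixing $x_0\in U$, I would then choose closed conic neighbourhoods $\Sigma_1,\Sigma_2\subset\mathbb{R}^n\setminus\{0\}$ of the fibre cones $\{k:(x_0,k)\in WF(T_i)\}$, small enough that the hypothesis — no $(x_0,k)\in WF(T_1)$ with $(x_0,-k)\in WF(T_2)$ — upgrades to the cone-disjointness $\Sigma_1\cap(-\Sigma_2)=\emptyset$, and a cutoff $\chi\in C_c^\infty$ supported near $x_0$ with $\chi\equiv 1$ near $x_0$, so that $\widehat{\chi T_i}$ is rapidly decreasing outside $\Sigma_i$.

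The core of the argument is to make sense of
\[
\widehat{\chi^2 T_1 T_2}(\zeta)=(2\pi)^{-n}\int_{\mathbb{R}^n}\widehat{\chi T_1}(\zeta-\eta)\,\widehat{\chi T_2}(\eta)\,d\eta
\]
as an ordinary function of at most polynomial growth. I would split the $\eta$-integral into three regions: (a) $\eta\notin\Sigma_2$, where $\widehat{\chi T_2}(\eta)$ decays rapidly in $|\eta|$ while $\widehat{\chi T_1}(\zeta-\eta)$ grows at most polynomially in $|\zeta|+|\eta|$, so the piece converges with a polynomial bound in $|\zeta|$; (b) $\eta\in\Sigma_2$ but $\zeta-\eta\notin\Sigma_1$, where now $\widehat{\chi T_1}(\zeta-\eta)$ decays rapidly in $|\zeta-\eta|$ and again beats the polynomial growth of $\widehat{\chi T_2}$; (c) the ``doubly bad'' region $\eta\in\Sigma_2$ and $\zeta-\eta\in\Sigma_1$. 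For (c) I would use that $\Sigma_1$ and $-\Sigma_2$ are disjoint closed cones to obtain $\epsilon>0$ with $|\xi+\eta|\ge\epsilon(|\xi|+|\eta|)$ for all $\xi\in\Sigma_1,\ \eta\in\Sigma_2$; taking $\xi=\zeta-\eta$ forces $|\eta|\le\epsilon^{-1}|\zeta|$ and $|\zeta-\eta|\le\epsilon^{-1}|\zeta|$, so region (c) sits inside a ball of radius $O(|\zeta|)$ and the integrand, a product of tempered functions, is integrable there with a polynomial bound in $|\zeta|$. Combining (a)--(c) shows the convolution defines a tempered distribution, hence $\chi^2 T_1 T_2$; dividing by $\chi^2$ near $x_0$ defines the product in a neighbourhood of $x_0$.

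To conclude I would dispatch the routine consistency checks: independence of the local definition on $\chi$ and on the conic neighbourhoods $\Sigma_i$ (two admissible choices agree on the overlap of their supports by the same estimate), agreement with the ordinary pointwise product when one factor is smooth, and gluing of the local products via a partition of unity subordinate to a cover of $U$ by such neighbourhoods into a single distribution on $U$; for completeness I would note that the criterion is sharp, e.g. $\delta\in\mathcal{D}'(\mathbb{R})$ has $WF(\delta)=\{(0,k):k\neq 0\}$ and $\delta^2$ admits no consistent definition. I expect the main obstacle to be the bookkeeping in step (c), specifically ensuring that the conic neighbourhoods $\Sigma_i$ can be taken small enough that the \emph{pointwise} non-intersection of $WF(T_1)$ with $-WF(T_2)$ over $x_0$ survives the passage to conic neighbourhoods — this uses closedness of wavefront sets in the cosphere bundle together with a compactness argument — and that the resulting separation constant $\epsilon$ is uniform; everything else reduces to the standard ``rapid decay beats polynomial growth'' estimates.
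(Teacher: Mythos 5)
The paper does not prove this proposition: it is stated in the review appendix on distribution theory as a recalled standard fact (H\"ormander's criterion for multiplying distributions), so there is no in-paper argument to compare against. Your proposal is the standard textbook proof (essentially Theorem 8.2.10 of H\"ormander, vol.\ I) and is correct in its essentials: the Fourier characterization of $WF$, the localization by a cutoff $\chi$, the three-region splitting of the convolution $\int\widehat{\chi T_1}(\zeta-\eta)\,\widehat{\chi T_2}(\eta)\,d\eta$, and the cone-separation estimate $\lvert\xi+\eta\rvert\geq\epsilon(\lvert\xi\rvert+\lvert\eta\rvert)$ in the doubly bad region are all the right ingredients, with regions (a) and (b) handled by Peetre-type inequalities so that rapid decay beats polynomial growth uniformly in $\zeta$. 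The one step you correctly flag as needing care --- upgrading the pointwise disjointness of the fibre cones at $x_0$ to disjoint closed conic neighbourhoods $\Sigma_1$, $-\Sigma_2$ valid for a fixed cutoff supported near $x_0$ --- is exactly where the closedness of $WF$ and compactness on the cosphere enter (H\"ormander's Lemma 8.1.1 on uniformity of the cone over the support of the cutoff); with that supplied, the argument is complete.
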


Similarly, since the convolution of two distributions requires mutiplying them pointwise, the notion of wavefront set is also relevant in order to define the convolution product of two general distributions.

\section{Generalized free field theory and review of  a theorem of Araki} \label{app:araki} 

In this Appendix, we review the construction of the Hilbert space associated with a generalized free field theory 
around a semi-classical state, and then discuss a theorem of Araki~\cite{Araki_1963} that is used in the main text.

\subsection{Construction of a generalized free field theory} 

Consider a single-trace Hermitian operator $\phi (t)$ (again we suppress spatial dependence for notational simplicity), with 
spectral function 
\be \label{s1}
\rho (t-t') = \vev{\Psi|[\phi (t) ,\phi (t')]|\Psi}  ,
\ee
in a semi-classical state $\ket{\Psi}$. For simplicity we will assume that $\ket{\Psi}$ is time translation invariant. 
$\rho(t)$ and its Fourier transform $\rho (\om)$ have the properties 
\be 
\rho^*(t) = \rho (-t) = - \rho (t), \quad \rho^* (\om) =  \rho (\om) =- \rho (-\om), 
\quad \th (\om) \rho (\om) \geq 0  \ . 
\ee
Denote the Fourier transform of $\phi (t)$ as $\phi_\om$, we have from~\eqref{s1} 
the commutation relations  
\be
 [\phi_\om, \phi_{\om'}] = \rho (\om) 2 \pi \de (\om + \om') , \quad
\phi_{-\om} = \phi_\om^\da\ .
\ee

Introduce smeared operators 
\be 
\phi (f) = \int d t \, f (t) \phi (t) \equiv  \int {d \om \ov 2 \pi} f(-\om) \phi_\om, \quad f (-\om) = f^* (\om), 
\ee
where $f(t)$ is real such that $\phi (f)$ is Hermitian. 
 $\rho(t)$ can be used to define a symplectic product (i.e. anti-symmetric between $f (t)$ and $g (t)$) in the space $\{f (t)\}$ 
\be \label{sym}
(f,g) = - i \vev{\Psi|[\phi (f) ,\phi (g)]|\Psi} = - i \int dt dt' \, f(t) \rho (t-t') g (t') 
= - i \int {d \om \ov 2 \pi} \, f (-\om) \rho (\om) g (\om) \ .
\ee
We can also associate a complex vector $\ket{f}$ with $(f (\om), \om > 0)$ and introduce an inner product 
\be 
\vev{f,g} = 2 \int {d\om \ov 2 \pi} \, f^* (\om) \, \th (\om) \rho (\om) \, g (\om) \ .
\ee
Upon completion, the set of normalizable vectors then form a Hilbert space $\sZ$, which is often referred to as the single-particle Hilbert space. Depending on the state, we will need to restrict admissible test functions to the ones that sit in the domain of a certain operator to be defined below. The space $\sV$ of $f$ (i.e. the space of $f(\om), \forall \om$) is a closed real subspace of $\sZ$, and  
the symplectic product~\eqref{sym} is defined on this real subspace, with 
\be 
(f,g) = {\rm Im} \vev{f,g} \ .
\ee

We now consider several explicit examples. 

\subsubsection{Vacuum state} 

For $\ket{\Psi}$ to be vacuum state $\ket{\Om}$, $\phi_\om$ with $\om > 0$ is taken to be the annihilation operators, i.e. 
\be 
 \phi_\om \ket{\Om} =0 , \quad \om > 0 \ .
\ee 
The Fock space, usually denoted as $\Ga (\sZ)$,  can then be obtained acting creation operators $\phi_\om , \om < 0$ on the vacuum.

\subsubsection{TFD state} 

For $\ket{\Psi}$ given by the TFD state, we take $\phi$ to be the operator $\phi^R$ in the CFT$_R$. There is also a corresponding $\phi^L$ with the corresponding single-particle Hilbert space denoted as $\bar \sZ$. In this case no linear combinations of $\phi^R$ can annihilate the TFD state. Instead, we have 
\be \label{hheq}
\phi_{\om}^{(R)} \ket{\rm TFD} = e^{-{\b \om \ov 2} }  \phi_{-\om}^{(L)} \ket{\rm TFD}  \ .
\ee
We can then introduce 
\be\label{c111HH}
c^{(R)}_\om = \fb_+ \phi^{(R)}_\om - \fb_- \phi^{(L)}_{-\om}, 
\quad c^{(L)}_\om = \fb_+ \phi^{(L)}_\om - \fb_- \phi^{(R)}_{-\om}, 
\quad 
 \fb_\pm \equiv \le( {e^{\pm {\b|\om| \ov 2}} \ov 2 \sinh {\b |\om| \ov 2}}\ri)^\ha,
\ee
with 
\be 
 c^{(R)}_\om \ket{\rm TFD} = c^{(L)}_\om \ket{\rm TFD} =0 , \quad \om > 0 \ .
\ee 
The Fock space $\Ga (\sZ \oplus \bar \sZ)$ can be generated by acting $c^{(R,L)}_\om, \om < 0$ on $\ket{\rm TFD}$.

In order to define creation operators associated to a function in $\mathcal{Z}$, we also must require $f$ to be in the domain of the operator of multiplication by $\fb_-$.

\subsubsection{A quasi-free state} 

A quasi-free state $\ket{\Psi_\ga}$ is a two-sided state defined as 
\be 
\phi_{\om}^{(R)} \ket{\Psi_\ga} = \ga^{\ha}  (\om)  \phi_{-\om}^{(L)} \ket{\Psi_\ga}  \ .
\ee
We can then introduce 
\bega\label{c11HH}
c^{(R)}_\om = \fb_+ \phi^{(R)}_\om - \fb_- \phi^{(L)}_{-\om}, 
\qquad c^{(L)}_\om = \fb_+ \phi^{(L)}_\om - \fb_- \phi^{(R)}_{-\om}, 
\\
 \fb_+ \equiv \le( {\ga^{-\ha}  \ov \ga^{-\ha}   - \ga^{\ha}  }\ri)^\ha = (1 + \hat \rho_\ga)^\ha,\quad
  \fb_- \equiv \le( {\ga^{\ha} (\om) \ov \ga^{-\ha} (\om)  - \ga^{\ha} (\om) }\ri)^\ha = \hat \rho_\ga^\ha, \quad 
 \hat \rho_\ga \equiv {\ga \ov 1-\ga}  ,
\end{gather} 
with 
\be 
 c^{(R)}_\om \ket{\Psi_\ga} = c^{(L)}_\om \ket{\Psi_\ga} =0 , \quad \om > 0 \ .
\ee 
The Fock space $\Ga (\sZ \oplus \bar \sZ)$  can be generated by acting $c^{(R,L)}_\om, \om < 0$ on $\ket{\Psi_\ga}$. 

In order to define creation operators associated to a function in $\mathcal{Z}$, we also must require $f$ to be in the domain of the operator of multiplication by $\fb_-$.

\subsection{Symplectic complement and commutant}

Consider a single particle Hilbert space $\sW$ and the corresponding Fock space $\Ga (\sW)$. 
For the examples discussed earlier, for the vacuum we have $\sW = \sZ$, while for the TFD and a general quasi-free state we have $\sW = \sZ \oplus \bar \sZ$. 

Introduce Weyl operators 
\be 
W (f) \equiv e^{i \phi (f)} , \quad f \in \sW 
\ee
and define a von Neumann algebra 
\begin{align}
\mathcal{M}(\mathcal{W}) \equiv \{W(f),f\in\mathcal{W}\}^{\prime\prime}.
\end{align}
It can be shown that
\be 
\sM (\sW) = \sB (\Ga (\sW))  \ .
\ee
That is, the algebra of bounded operators 
on $\Ga (\sW)$ is the Weyl algebra of operators on the single-particle Hilbert space.

Now suppose $\mathcal{V}$ is closed real subspace of $\mathcal{W}$. 

\begin{defn}We define the \textit{symplectic complement} of $\mathcal{V}$ as
\begin{align}
\mathcal{V}^\prime=\{z\in\mathcal{W},\,\forall v\in\mathcal{V}, \mathrm{Im}\braket{z,v}=0.\}
\end{align}
\end{defn}
\begin{defn} We define the von Neumann algebra $\sM (\sV)$ associated with $\sV$ as 
\begin{align}
\mathcal{M}(\mathcal{\sV}) \equiv \{W(v),v \in \sV \}^{\prime\prime}.
\end{align}
\end{defn}
The following theorem due to Araki~\cite{Araki_1963} says that the commutant operation of the von Neumann algebra and 
 the symplectic complement operation are mapped to each other.
\begin{thm}
\bega
\mathcal{M}(\mathcal{V}_1)=\mathcal{M}(\mathcal{V}_2)\;\mathrm{iff}\;\mathcal{V}_1=\mathcal{V}_2, \\
\mathcal{V}_1\subset\mathcal{V}_2\Rightarrow\mathcal{M}(\mathcal{V}_1)\subset\mathcal{M}(\mathcal{V}_2), \\
\mathcal{M}\left(\bigvee_i \mathcal{V}_i\right)=\bigvee_i \mathcal{M}(\mathcal{V}_i), \\
\mathcal{M}\left(\bigcap_i \mathcal{V}_i\right)=\bigcap_i \mathcal{M}(\mathcal{V}_i), \\
\mathcal{M}\left(\mathcal{V}\right)^\prime=\mathcal{M}\left(\mathcal{V}^\prime\right),
\end{gather}
and $\mathcal{M}\left(\mathcal{V}\right)$ is a factor iff $\mathcal{V}\cap\mathcal{V}^\prime=\{0\}.$
\label{thm:araki}
\end{thm}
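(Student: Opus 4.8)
The plan is to isolate the one genuinely hard statement --- the one-particle Haag duality $\mathcal{M}(\mathcal{V})' = \mathcal{M}(\mathcal{V}')$ --- and then deduce the remaining five assertions by purely formal manipulations of the complemented lattice of closed real subspaces. Throughout I will use only two elementary facts: the Weyl relation $W(f)W(g) = e^{-i\,\mathrm{Im}\langle f,g\rangle}\,W(g)W(f)$, which follows from the CCR $[\phi(f),\phi(g)] = i\,\mathrm{Im}\langle f,g\rangle\,\mathbf 1$ and Baker--Campbell--Hausdorff, and the strong continuity of $f\mapsto W(f)$ on the natural dense domain in $\Gamma(\mathcal{W})$; together with the symplectic ``bipolar'' identity $\mathcal{V}'' = \overline{\mathcal V}$ and the fact that the symplectic complement turns closed real joins into meets, $(\bigvee_i\mathcal{V}_i)' = \bigcap_i\mathcal{V}_i'$ (both of which are immediate from $\mathcal{V}' = (i\mathcal V)^{\perp_{\mathbb R}}$).

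\textbf{The easy inclusions.} First I would dispatch the trivial items. Monotonicity $\mathcal{V}_1\subset\mathcal{V}_2\Rightarrow\mathcal{M}(\mathcal{V}_1)\subset\mathcal{M}(\mathcal{V}_2)$ holds because the generating Weyl sets are nested. For the join, ``$\supseteq$'' is monotonicity and ``$\subseteq$'' holds because $W(v)$ with $v$ in the closed real span of $\bigcup_i\mathcal{V}_i$ is a strong limit of phases times finite products $\prod_k W(v_{i_k})$, $v_{i_k}\in\mathcal{V}_{i_k}$, hence lies in $\bigvee_i\mathcal{M}(\mathcal{V}_i)$. And one half of duality is the Weyl relation itself: if $z\in\mathcal{V}'$ then $\mathrm{Im}\langle z,v\rangle = 0$ for all $v\in\mathcal{V}$, so $W(z)$ commutes with all generators of $\mathcal{M}(\mathcal{V})$, giving $\mathcal{M}(\mathcal{V}')\subseteq\mathcal{M}(\mathcal{V})'$.

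\textbf{The core step (the main obstacle): $\mathcal{M}(\mathcal{V})'\subseteq\mathcal{M}(\mathcal{V}')$.} This is where all the work lies. I would first treat a \emph{standard} subspace (i.e. $\mathcal{V}\cap i\mathcal{V} = \{0\}$ and $\mathcal{V}+i\mathcal{V}$ dense), for which $\Omega$ is cyclic and separating for $\mathcal{M}(\mathcal{V})$. The key is that the closed antilinear involution $s_{\mathcal{V}}:v+iv'\mapsto v-iv'$ ($v,v'\in\mathcal{V}$) has polar decomposition $s_{\mathcal{V}} = j_{\mathcal{V}}\,\delta_{\mathcal{V}}^{1/2}$, and a Bogoliubov computation shows that $\Gamma(\delta_{\mathcal{V}}^{it})$ implements a one-parameter automorphism group of $\mathcal{M}(\mathcal{V})$ satisfying the KMS condition with respect to $\omega_\Omega$ on the generating Weyl operators; by Takesaki's uniqueness the Tomita operator of $(\mathcal{M}(\mathcal{V}),\Omega)$ is then the second quantization of $s_{\mathcal{V}}$, so $\mathcal{M}(\mathcal{V})' = \Gamma(j_{\mathcal{V}})\,\mathcal{M}(\mathcal{V})\,\Gamma(j_{\mathcal{V}}) = \mathcal{M}(j_{\mathcal{V}}\mathcal{V})$. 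It then remains the purely one-particle identity $j_{\mathcal{V}}\mathcal{V} = \mathcal{V}'$ for standard $\mathcal{V}$ (the standard-subspace duality of Rieffel--van Daele and Longo), provable directly from the geometry of $s_{\mathcal V}$. For general closed real $\mathcal{V}$ I would decompose $\mathcal{W}$ orthogonally into four invariant pieces on which $\mathcal{V}$ restricts to, respectively, a complex subspace, the zero subspace, a standard subspace, and a maximal-isotropic ``doubling'' piece; $\mathcal{M}(\mathcal{V})$ factorizes as the corresponding tensor product and duality follows factor by factor. I expect the KMS/Bogoliubov verification and the bookkeeping of this decomposition to be the genuinely technical part; this is essentially Araki's original argument~\cite{Araki_1963}.

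\textbf{Bookkeeping from duality.} Granting $\mathcal{M}(\mathcal{V})' = \mathcal{M}(\mathcal{V}')$ (hence also $\mathcal{M}(\mathcal{V}')' = \mathcal{M}(\mathcal{V})$ for closed $\mathcal{V}$), the rest is routine. \emph{Injectivity:} if $W(f)\in\mathcal{M}(\mathcal{V})$ then $W(f)$ commutes with $\mathcal{M}(\mathcal{V}') = \mathcal{M}(\mathcal{V})'$, so $\mathrm{Im}\langle f,z\rangle = 0$ for all $z\in\mathcal{V}'$, i.e. $f\in\mathcal{V}'' = \mathcal{V}$; thus $\mathcal{V}$ is recovered from $\mathcal{M}(\mathcal{V})$ and $\mathcal{M}(\mathcal{V}_1)=\mathcal{M}(\mathcal{V}_2)\Rightarrow\mathcal{V}_1=\mathcal{V}_2$ (the converse being trivial). \emph{Meet:} using duality twice and the join formula,
\[
\bigcap_i\mathcal{M}(\mathcal{V}_i)=\Bigl(\bigvee_i\mathcal{M}(\mathcal{V}_i')\Bigr)'=\mathcal{M}\Bigl(\bigvee_i\mathcal{V}_i'\Bigr)'=\mathcal{M}\Bigl(\bigl(\textstyle\bigvee_i\mathcal{V}_i'\bigr)'\Bigr)=\mathcal{M}\Bigl(\bigcap_i\mathcal{V}_i\Bigr),
\]
the last step being the lattice anti-isomorphism property of the symplectic complement. \emph{Factor criterion:} $\mathcal{M}(\mathcal{V})$ is a factor iff $\mathcal{M}(\mathcal{V})\cap\mathcal{M}(\mathcal{V})' = \mathbb{C}\mathbf 1$; by duality and the meet formula this centre equals $\mathcal{M}(\mathcal{V}\cap\mathcal{V}')$, which by injectivity is $\mathbb{C}\mathbf 1 = \mathcal{M}(\{0\})$ iff $\mathcal{V}\cap\mathcal{V}' = \{0\}$. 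The hard part is therefore entirely concentrated in the modular computation and the standard-subspace reduction of the core step; everything else follows formally.
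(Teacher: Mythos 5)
The paper does not prove this theorem: it is quoted verbatim from Araki's lattice paper \cite{Araki_1963} and used as a black box, so there is no in-paper argument to compare yours against. Your outline is nevertheless a correct reconstruction of how the result is established in the literature, and your division of labour is the right one: everything except the duality $\mathcal{M}(\mathcal{V})'=\mathcal{M}(\mathcal{V}')$ is formal lattice bookkeeping (your derivations of injectivity, the meet formula, and the factor criterion from duality together with $\mathcal{V}''=\mathcal{V}$ and $(\bigvee_i\mathcal{V}_i)'=\bigcap_i\mathcal{V}_i'$ are all sound), while the duality is the analytic core. Two remarks. First, the route you sketch for that core step --- identifying the Tomita operator of $(\mathcal{M}(\mathcal{V}),\Omega)$ with the second quantization of $s_{\mathcal{V}}=j_{\mathcal{V}}\delta_{\mathcal{V}}^{1/2}$ and reducing to the one-particle identity $j_{\mathcal{V}}\mathcal{V}=\mathcal{V}'$ --- is the later standard-subspace proof in the style of Eckmann--Osterwalder, Leyland--Roberts--Testard, and Longo, not ``essentially Araki's original argument'': the 1963 paper predates Tomita--Takesaki theory and proceeds by a direct analysis of the Fock representation. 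This is a misattribution rather than a gap, and your proposed reduction of a general closed real subspace to complex, trivial, and standard pieces is the correct way to remove the standardness hypothesis. Second, a small repair in your injectivity step: commutation of $W(f)$ with $W(z)$ only yields $\mathrm{Im}\langle f,z\rangle\in 2\pi\mathbb{Z}$; you should invoke the real-linearity of $\mathcal{V}'$ and replace $z$ by $tz$ for all real $t$ to force $\mathrm{Im}\langle f,z\rangle=0$. With these caveats the proposal is a faithful, if compressed, account of the known proof.
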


This theorem is very useful, as it allows to translate algebraic statements on the full Fock space to 
to statements about the one particle Hilbert space.

\subsection{No nontrivial singular inclusions in a generalized free field theory} \label{app:equ}

Here we give a proof that for a generalized free field, if $\mathcal{S}_{(-t_0,t_0)}$ is a proper subalgebra of $\mathcal{S}_R$, then it must have a nonempty relative commutant. If we call a proper inclusion $\mathcal{S}_{(-t_0,t_0)}$ in $\mathcal{S}_R$ with empty relative commutant as a singular inclusion, then for a generalized free field, singular inclusions do not exist. This result seems standard in the algebraic QFT literature, in which significant effort has been made to construct singular inclusions that do \textit{not} come from generalized free fields, see for example \cite{correa2023modular,longo2017freeproductsaqft}. It is also a direct consequence of Theorem \ref{thm:araki}.

\begin{prop}
    Let $V_1\subset V_2$ be two closed real subspaces like in Theorem \ref{thm:araki}, and suppose that the relative commutant of $M(V_1)$ inside $M(V_2)$ is trivial (in particular, this implies $M(V_2)$ is a factor). Then, \begin{align}M(V_1)=M(V_2).\end{align}
\end{prop}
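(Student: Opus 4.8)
The plan is to transport everything to the one--particle Hilbert space by Araki's Theorem~\ref{thm:araki}, which turns the assertion into a non--degeneracy statement for the symplectic form. First I would identify the relative commutant. By the commutant clause of Theorem~\ref{thm:araki}, $M(V_1)'=M(V_1')$, and since $V_1\subset V_2$ the intersection clause gives
\begin{align}
M(V_1)'\cap M(V_2)=M(V_1')\cap M(V_2)=M(V_1'\cap V_2).
\end{align}
So the relative commutant of $M(V_1)$ inside $M(V_2)$ is exactly $M(V_1'\cap V_2)$. The hypothesis that it is trivial then reads $M(V_1'\cap V_2)=\mathbb{C}1=M(\{0\})$, and the injectivity clause of Theorem~\ref{thm:araki} ($M(W_1)=M(W_2)\iff W_1=W_2$) forces
\begin{align}
V_1'\cap V_2=\{0\}.
\end{align}
In particular, since $V_1\subset V_2$ gives $V_2'\subset V_1'$, we get $V_2\cap V_2'\subset V_2\cap V_1'=\{0\}$, so $M(V_2)$ is a factor by the last clause of Theorem~\ref{thm:araki} (and likewise $V_1\cap V_1'\subset V_2\cap V_1'=\{0\}$, so $M(V_1)$ is a factor too).

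It then remains to show that $V_1\subset V_2$ together with $V_1'\cap V_2=\{0\}$ forces $V_1=V_2$; equality of the algebras follows from the injectivity clause once more. The key observation is that $V_1'\cap V_2$ is precisely the symplectic complement of $V_1$ computed \emph{inside} $(V_2,\mathrm{Im}\braket{\cdot,\cdot})$, and that $M(V_2)$ being a factor says exactly that this restricted symplectic form is non--degenerate. In finite dimensions this immediately yields $V_1=V_2$, via the identity $\dim W+\dim W^{\perp_\sigma}=\dim V_2$ for a non--degenerate symplectic form (applied to $W=V_1$). In infinite dimensions I would run the analogous argument by hand: writing $A=-P_{V_2}J|_{V_2}$ (with $J$ multiplication by $i$) for the bounded antisymmetric operator on $V_2$ implementing the form, the condition $V_1'\cap V_2=\{0\}$ says that $\overline{AV_1}=V_2$, and one must upgrade this density to $AV_1=V_2$, hence $V_1=V_2$, using that $V_2$ is a standard subspace.

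The main obstacle is precisely this last upgrade. The restricted symplectic form is only weakly non--degenerate: $A$ is injective but need not be bounded below, since $A$ is always a contraction (indeed $-A^2\le \mathrm{Id}$), so the naive bipolar argument does not close. I would resolve it either by invoking the modular structure of the standard subspace $V_2$ to control the spectrum of $A$, or -- which is likely what is intended here -- by noting that for the generalized free fields of interest the relevant pairs $V_1\subset V_2$ are tame enough (e.g.\ $V_2$ split-like, so that $V_2+V_2'$ is closed) that ``density'' and ``equality'' coincide and the step is immediate.
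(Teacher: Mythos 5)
Your reduction to the one--particle level is the right opening move and matches the paper's implicit use of Theorem~\ref{thm:araki}: the relative commutant is $M(V_1'\cap V_2)$, so the hypothesis forces $V_1'\cap V_2=\{0\}$, and both $M(V_1)$ and $M(V_2)$ are factors. But the proof is not finished. The entire content of the proposition is the implication $V_1\subset V_2,\ V_1'\cap V_2=\{0\}\Rightarrow V_1=V_2$, and at exactly that point you stop: you concede that your operator $A$ only yields density of $AV_1$ in $V_2$ (and even $AV_1=V_2$ would not obviously give $V_1=V_2$, since $A$ need not be surjective on $V_2$), and you offer two unexecuted ways out --- invoking the modular structure of $V_2$, or restricting to ``tame'' split-like inclusions. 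Neither is carried out, and the second would prove a weaker statement than the one asserted, which is for arbitrary closed real subspaces. As it stands this is a plan with the crucial step missing, not a proof.

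The paper closes the argument without ever touching the operator $A$, by staying entirely inside the lattice calculus of Theorem~\ref{thm:araki}: since $M(V_1)$ is a factor, $V_1\vee V_1'=\mathcal{W}$ and hence $M(V_1)\vee M(V_1)'=\mathcal{B}(\mathcal{H})$, so
\begin{align}
M(V_2)=M(V_2)\cap\bigl(M(V_1)\vee M(V_1)'\bigr)=M\bigl(V_2\cap(V_1\vee V_1')\bigr)=M\bigl(V_1\vee(V_2\cap V_1')\bigr)=M(V_1),
\end{align}
using the intersection clause, then $V_2\cap V_1'=\{0\}$, then the injectivity clause. You should be aware that the middle identity $V_2\cap(V_1\vee V_1')=V_1\vee(V_2\cap V_1')$ is a modular-law statement for closed real subspaces: it is immediate when $V_1+V_1'$ is closed, and otherwise requires justification --- this is the same closedness issue you correctly identified as the obstruction in your route. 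So your diagnosis of where the difficulty lives is accurate; what is missing from your proposal is the actual mechanism the paper uses to dispatch it, namely factoriality of $M(V_1)$ converted into $V_1\vee V_1'=\mathcal{W}$ and the lattice identities of Theorem~\ref{thm:araki}, rather than spectral control of the skew form on $V_2$.
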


\begin{proof}
\begin{align}
M(V_2)=&M(V_2) \cap \mathcal{B}(\mathcal{H})\\
= &M(V_2)\cap (M(V_1) \vee M(V_1)')\\
=&M(V_2)\cap M(V_1 \vee (V_1)')\\
=&M(V_2 \cap(V_1 \vee (V_1)'))\\
=&M((V_2\cap V_1) \vee(V_2\cap (V_1)'))\\
=&M(V_2\cap V_1)\\
=&M(V_1).
\end{align}
So the inclusion is actually trivial.
\end{proof}
\section{A review of the exponential type problem}\label{app:exp}

We saw in the main text that for a generalized free field theory at finite temperature, the value of the depth parameter was equal to the exponential type of the Kallen-Lehmann measure. In general, it is difficult to compute the exponential type of a measure, and there is a large body of mathematical work on this topic. The goal of this appendix is to summarize this mathematical work. If one follows the general intuition given by Heisenberg's uncertainty principle, one expects that the more porous the support of the spectral density, the least porous the support of the Fourier transform of any function supported there can be. In particular, for porous enough support of the spectral density, we expect there to be a maximal gap for any distribution of the form $f\ast E$ with $f$ admissible.

Therefore, studying the triviality of large enough relative commutants of time bands is closely related to the following analysis problem, which as we saw before, is known as the \textit{exponential type problem}. 

\subsection{A review of the main results}

The mathematical results on the exponential type problem (and a close relative known as the gap problem), seem to culminate with the work of Poltoratski in 2012 \cite{poltoratski2013problem}, which gives an explicit answer to the gap problem for finite measures on the real line. Earlier work includes important theorems due to Benedicks \cite{BENEDICKS1985180,f4312568-16eb-3ae7-9d52-073fcbb92f76}, Beurling and de Branges \cite{66cbfd0f-454e-3156-b6ff-7e5822b3a818}. 

We start by stating the definition of the exponential type:
\begin{defn}
Let $\mu$ be a measure on $\mathbb{R}$. We define the \textit{exponential type} of $\mu$ as \begin{align}T_\mu:=\underset{t\in\mathbb{R}}{\mathrm{sup}}\{t\in\mathbb{R},\,\exists f\in L^2(\mu),\,\mathrm{supp}(\widetilde{f\mu})\cap(0,t)=\emptyset \}.\end{align}
\end{defn}
The exponential type problem consists in computing the value of $T_\mu$ for a measure $\mu$.

The type problem has been solved in various cases. In the examples given in the main text, we showed that \begin{itemize}\item If $\mu$ is evenly spaced with spacing $l$ then $T_\mu\leq 2\pi/l$.
\item If $\mu$ is compactly supported, then $T_\mu=0$.
\item If $\mu$ is a continuous function with complete support (with a few extra technical assumptions), then $T_\mu=\infty$.
\end{itemize}

The most general solutions to the exponential type problem to date have been provided by Poltoratski in \cite{poltoratski2013problem}, where he introduced. The work of \cite{poltoratski2013problem} is formulated in the case where the measure $\mu$ has \textit{finite} mass, or in the case of Poisson-summable measures, i.e. measures for which \begin{align}\int_{\mathbb{R}}\frac{d\lvert\mu\rvert(x)}{1+x^2}dx<\infty.\end{align}However it can be straightforwardly extended to the polynomially growing case \cite{private}.

The key ingredient of these results is the notion of $d$-uniform sequence. This notion allows to capture ``how concentrated" the support of a sequence is. The definition is a bit technical, we recall it here:

\begin{defn}
We say that the discrete sequence of real numbers $\Lambda=(\lambda_n)$ is $d$\textit{-uniform} if there exists a sequence of disjoint intervals $I_n=(a_n,b_n]$, whose centers go to $\pm\infty$ and widths go to $\infty$ as $n\rightarrow\pm\infty$, such that:
\begin{align}
\sum_{n}\frac{\lvert I_n\rvert^2}{1+\mathrm{dist}(0,I_n)^2}<\infty,
\label{eq:short}
\end{align}
\begin{align}
\#(\Lambda\cap I_n)=d\lvert I_n\rvert+o(\lvert I_n\rvert),
\label{eq:density}
\end{align}
and
\begin{align}
\sum_n \frac{(\#(\Lambda\cap I_n))^2\mathrm{log}\lvert I_n\rvert-E_n}{1+\mathrm{dist}(0,I_n)^2}<\infty,
\label{eq:energy}
\end{align}
where
\begin{align}
E_n:=\sum_{\lambda_k\neq\lambda_l\in I_n}\mathrm{log}\lvert\lambda_k-\lambda_l\rvert.
\end{align}
\end{defn}

Roughly speaking, Equation \eqref{eq:short} means that the sequence of sampling intervals $I_n$ cannot cover too much of the real line. Equation \eqref{eq:density} controls the density of the elements of $\Lambda$ inside the $I_n$, while Equation \eqref{eq:energy} ensures that the $\lambda_n$ are reasonably equally spaced inside the $I_n$, by introducing the term $E_n$, which can be thought of as a repulsion term between the $\lambda_n$, which looks similar to the ones that appear in fermionic statistics.

The most general theorem of \cite{poltoratski2013problem} is then:
\begin{thm}
Let $\mu$ be a finite positive measure on $\mathbb{R}$. Let $a>0$. Then $T_\mu\geq a$ if and only if for all lower semi-continuous $W\in L^1(\mu)$ with limit $\infty$ at $\pm\infty$, and any $0<d<a$, there exists a $d$-uniform sequence $(\lambda_n)_{n\in\mathbb{N}}$ in the support of $\mu$ such that \begin{align}\sum_{n}\frac{\mathrm{log}\,W(\lambda_n)}{1+\lambda_n^2}<\infty.\end{align}
\end{thm}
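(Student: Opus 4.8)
The plan is to follow Poltoratski's ``Toeplitz approach'' to the exponential type problem, of which this is the main theorem; the sketch below describes the structure of the argument one would reproduce from \cite{poltoratski2013problem} rather than a new proof. The first step is to reformulate the defining property of $T_\mu$ as a statement about a Toeplitz kernel. Given $\mu$ of finite mass, one passes to the model (de Branges) space associated with the Cauchy, or Poisson, integral of $\mu$, via Clark theory; the existence of $f\in L^2(\mu)$, $f\neq 0$, with $\widetilde{f\mu}$ vanishing on $(0,a)$ is then equivalent to the existence of a nonzero element of $\ker T_{\bar{S}^a\,\Theta}$, where $S^a(z)=e^{iaz}$ encodes the gap of length $a$ and $\Theta$ is the meromorphic inner function built from $\mu$. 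In this language $T_\mu\geq a$ becomes the question of how large an inner multiplier $\bar{S}^a$ can be absorbed while keeping the Toeplitz kernel nontrivial.

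The second step is to invoke the Beurling--Malliavin-type description of nontriviality of such Toeplitz kernels, in the refined form due to Makarov and Poltoratski: $\ker T_{\bar{S}^a\,\Theta}\neq\{0\}$ holds precisely when the argument of the symbol admits a suitable admissible majorant, which after the standard logarithmic-integral reduction translates into the combinatorial requirement that, for every density $d<a$, one can find a sequence $(\lambda_n)$ in $\mathrm{supp}\,\mu$ together with sampling intervals $I_n$ satisfying the shortness condition \eqref{eq:short}, the density condition \eqref{eq:density} and the energy condition \eqref{eq:energy}. The energy term $E_n$ is exactly what prevents the $\lambda_n$ from clustering, so that the associated Blaschke-type product behaves like a genuine near-critical multiplier, and this is the step where the full strength of Poltoratski's sharpening of Beurling--Malliavin is used. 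The auxiliary weight $W$ is the usual device for turning an $L^2(\mu)$ statement into a pointwise statement about $\mathrm{supp}\,\mu$: one cannot demand that the extracted sequence carry positive $\mu$-mass, so one instead requires weighted summability $\sum_n \log W(\lambda_n)/(1+\lambda_n^2)<\infty$ against \emph{every} admissible $W$. In the sufficiency direction a normal-families argument then reconstructs a single $f\in L^2(\mu)$ with the desired gap from this family of near-threshold sequences; in the necessity direction one extracts $(\lambda_n)$ from the spectrum of the Clark measure, or the zero set, of such an $f$ and verifies the weighted bound using $f\in L^2(\mu)$.

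The main obstacle is the equivalence at the heart of the second step: showing that nontriviality of the Toeplitz kernel is \emph{quantitatively} equivalent to the existence of $d$-uniform sequences for all $d<a$. This requires the delicate estimates on logarithmic integrals and on cancellation in the argument of the symbol that make up Poltoratski's refinement of Beurling--Malliavin, and it is precisely why the statement carries the strict inequality $0<d<a$ and the universal quantifier over $W$. Reproducing this core in full is beyond the scope of a sketch, so we take the theorem as an input from \cite{poltoratski2013problem}; our own contribution is the setup above together with the verification, carried out in the examples, that the holographic spectral densities of interest satisfy the hypotheses, including the extension to polynomially growing $\mu$ noted in \cite{private}.
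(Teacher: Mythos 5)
The paper does not actually prove this statement: it is quoted verbatim as the main theorem of \cite{poltoratski2013problem} and used purely as an external input. Your proposal treats it the same way --- the Toeplitz-kernel/Clark-measure/Beurling--Malliavin sketch you give is an accurate description of Poltoratski's strategy, but since you explicitly defer the core equivalence to that reference rather than reproving it, your treatment coincides with the paper's and there is nothing further to compare.
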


In practise, this theorem is not always easy to apply, in large part due to the fact that the characterization it introduces needs to be verified for \textit{all} weights $W$. However, there are some situations in which $W$ can be removed from the statement. For example, in the case where the measure $\mu$ is discrete, we have
a much more explicit criterion. We first define the notion of $a$-regular and strongly $a$-regular sequence, which is a sequence whose overall density in the reals is approximately $a$:

\begin{defn}
A sequence $\Lambda=(\lambda_n)$ is $a$-regular if for all $\varepsilon>0$, if there is a sequence $I_n$ of intervals such that for all $n$,
\begin{align}
\left\lvert\frac{\#(\Lambda\cap I_n)}{\lvert I_n\rvert}-a\right\rvert\geq\varepsilon,
\end{align} then \begin{align}\sum_{n}\frac{\lvert I_n\rvert^2}{1+\mathrm{dist}(0,I_n)^2}<\infty.\end{align}
\end{defn}
\begin{defn}
A sequence $\Lambda=(\lambda_n)$ is strongly $a$-regular if \begin{align}\int \frac{n_\lambda(x)-ax}{1+x^2}dx<\infty,\end{align}where $n_\lambda(x)$ is the counting function of the support of $\Lambda$.
\end{defn}
The characterization of the exponential type is then:
\begin{prop}
Let \begin{align}
\mu:=\sum_nw(n)\delta_{\lambda_n}
\end{align}
be a finite measure, where $\Lambda=(\lambda_n)$ is a sequence of points such that for some $c>0$, $\lvert \lambda_i-\lambda_j\rvert>c$. Let
\begin{align}
D:=\mathrm{sup}\left\{D_\ast(\Lambda^\prime),\,\Lambda^\prime\subset\Lambda,\,\sum_{\lambda_n\in\Lambda^\prime}\frac{\mathrm{log}\,w(n)}{1+n^2}>-\infty\right\},
\end{align}
where
\begin{align}
D_\ast(\Lambda^\prime)=\mathrm{sup}\,\{a\geq0, \,\Lambda\,\text{has a strongly }a\text{-regular subsequence}\}.
\end{align}
Then, \begin{align}T_\mu=2\pi D.\end{align}
\end{prop}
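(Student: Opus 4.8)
The plan is to obtain this as a specialization of the general criterion quoted above (Theorem~4 of~\cite{poltoratski2013problem}), in the normalization where the exponential type equals $2\pi$ times the largest attainable Beurling density. Two reductions make that criterion collapse to the statement here when $\mu$ is discrete and $c$-separated. First, for separated supports ``$d$-uniform subsequence'' may be replaced by ``strongly $d$-regular subsequence,'' so that the supremum $D$ over admissible strongly regular subsequences equals the analogous supremum over $d$-uniform ones. Second, the universal quantifier over admissible weights $W$ can be traded for the single closed-form requirement $\sum_{\lambda_n\in\Lambda'}\frac{\log w(n)}{1+n^2}>-\infty$. I will use repeatedly that $c$-separation alone gives $1+\lambda_n^2\gtrsim 1+n^2$, and that \emph{along a $c$-separated strongly $d$-regular subsequence with $d>0$} the quantities $(1+n^2)^{-1}$, $(1+\lambda_n^2)^{-1}$ and the inverse square of the running index are all comparable (the counting function being then linear, so $|\lambda_n|\asymp n$). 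The case $\#\,\mathrm{supp}\,\mu<\infty$ is trivial, so assume it is infinite.

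For the first reduction: given a $c$-separated strongly $d$-regular $\Lambda'$, I would exhibit sampling intervals $I_n$ realizing $d$-uniformity by partitioning $\mathbb{R}$ into consecutive blocks of width $|I_n|\sim\mathrm{dist}(0,I_n)^{1-\varepsilon}$; finiteness of $\int\frac{n_{\Lambda'}(x)-dx}{1+x^2}\,dx$ gives the short condition~\eqref{eq:short} and, after discarding the few anomalous blocks, the density condition~\eqref{eq:density}, while the energy condition~\eqref{eq:energy} is automatic because $c$-separation yields $E_n\geq\#(\Lambda'\cap I_n)\big(\#(\Lambda'\cap I_n)-1\big)\log c$, so that $\#(\Lambda'\cap I_n)^2\log|I_n|-E_n\lesssim|I_n|^2\log|I_n|$, summable against $(1+\mathrm{dist}(0,I_n)^2)^{-1}$ by the choice of $|I_n|$. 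Conversely a $d$-uniform subsequence of a separated sequence contains, after thinning, a strongly $d'$-regular subsequence for each $d'<d$; hence the two suprema agree. Separation is genuinely needed here: without it the energy term is uncontrolled.

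For the second reduction, the direction $D\geq a\Rightarrow T_\mu\geq 2\pi a$ is the easy one. Fix $d<a$ and pick $\Lambda'\subset\Lambda$, admissible for $D$, with $D_\ast(\Lambda')\geq d$; after thinning (which only removes eventually-negative terms, so preserves admissibility and makes $\Lambda'$ strongly $d$-regular, hence $d$-uniform) the comparability above also gives $\sum_{\lambda_n\in\Lambda'}\frac{-\log w(n)}{1+\lambda_n^2}<\infty$. Then for any admissible $W$, using $\log x\leq x$ and $\tfrac1{1+\lambda_n^2}\leq 1$,
\begin{align}
\sum_{\lambda_n\in\Lambda'}\frac{\log W(\lambda_n)}{1+\lambda_n^2}
=\sum_{\lambda_n\in\Lambda'}\frac{\log\!\big(w(n)W(\lambda_n)\big)}{1+\lambda_n^2}
+\sum_{\lambda_n\in\Lambda'}\frac{-\log w(n)}{1+\lambda_n^2}
\leq\sum_n w(n)W(\lambda_n)+\sum_{\lambda_n\in\Lambda'}\frac{-\log w(n)}{1+\lambda_n^2}<\infty,
\end{align}
so $\Lambda'$ verifies the hypothesis of Theorem~4 for \emph{every} $W$; hence $T_\mu\geq 2\pi d$, and letting $d\uparrow a$ gives $T_\mu\geq 2\pi D$. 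For the reverse direction $T_\mu\geq 2\pi a\Rightarrow D\geq a$ I would feed Theorem~4 a weight adapted to $w$: with $r(n):=\sum_{k\geq n}w(k)\downarrow 0$ and $h(n):=r(n)^{-1/2}\to\infty$ (which obeys $\sum_n w(n)h(n)<\infty$ by comparison with $\int s^{-1/2}\,ds$), set $W(\lambda_n):=\max\!\big(h(n),[w(n)(1+n^2)]^{-1}\big)$, extended to a lower semicontinuous function on $\mathbb{R}$ tending to $\infty$; this $W$ is admissible since $\sum_n w(n)W(\lambda_n)\leq\sum_n w(n)h(n)+\sum_n(1+n^2)^{-1}<\infty$. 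For each $d<a$, Theorem~4 produces a $d$-uniform $\Lambda^\ast\subset\mathrm{supp}\,\mu$ with $\sum_{\lambda_n\in\Lambda^\ast}\frac{\log W(\lambda_n)}{1+\lambda_n^2}<\infty$; by the first reduction $\Lambda^\ast$ contains a strongly $d'$-regular subsequence $\Lambda^{\ast\ast}$ for any $d'<d$, and restricting the (eventually nonnegative) sum to $\Lambda^{\ast\ast}$ keeps it finite. On $\Lambda^{\ast\ast}$ the comparability of weights applies, and since $\log W(\lambda_n)\geq -\log w(n)-\log(1+n^2)$ while $\sum_n\frac{\log(1+n^2)}{1+n^2}<\infty$, I conclude $\sum_{\lambda_n\in\Lambda^{\ast\ast}}\frac{\log w(n)}{1+n^2}>-\infty$; thus $\Lambda^{\ast\ast}$ is admissible for $D$ and $D_\ast(\Lambda^{\ast\ast})\geq d'$, so $D\geq d'$. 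Letting $d'\uparrow d\uparrow a$ gives $D\geq a$, and combining the two directions yields $T_\mu=2\pi D$.

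The main obstacle is the reverse half of the second reduction: manufacturing a \emph{single} weight that is at once integrable against $\mu$, divergent at infinity, and a faithful proxy for $-\log w$. Since there is no slowest function tending to infinity, the construction must draw on the decay of the tails $r(n)$ of $\mu$ itself, and this is exactly what $h(n)=r(n)^{-1/2}$ supplies. A secondary, purely bookkeeping difficulty — handled throughout by $c$-separation together with the linearity of the counting function along strongly regular subsequences — is matching the Poisson-type weight $(1+\lambda_n^2)^{-1}$ of Theorem~4 with the weight $(1+n^2)^{-1}$ in the definition of $D$, keeping track of the density-to-type factor $2\pi$, and reducing the delicate $d$-uniformity condition, with its fermionic-looking energy/repulsion term, to plain strong regularity.
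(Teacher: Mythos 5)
First, a point of reference: the paper does not prove this proposition at all --- Appendix~\ref{app:exp} presents it as a quoted result of Poltoratski \cite{poltoratski2013problem}, stated after his general weighted criterion precisely because the discrete, separated case is where the weight $W$ ``can be removed from the statement.'' So there is no in-paper proof to match; your proposal is an attempt to reconstruct Poltoratski's specialization. Your overall strategy (derive the discrete case from the weighted Theorem by (a) trading $d$-uniform subsequences for strongly $d$-regular ones under $c$-separation and (b) eliminating the universal quantifier over $W$) is the right skeleton, and your second reduction is essentially sound in both directions: the estimate $\log(wW)\leq wW$ handles arbitrary admissible $W$ given one admissible $\Lambda'$, and the single weight $W(\lambda_n)=\max\bigl(r(n)^{-1/2},[w(n)(1+n^2)]^{-1}\bigr)$ is a legitimate way to extract the $\sum\frac{\log w(n)}{1+n^2}>-\infty$ condition from the conclusion of the weighted theorem (the $L^1(\mu)$ bound on $r^{-1/2}$ via the telescoping/integral comparison is correct).

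The genuine gap is in your first reduction, which is in fact the technical heart of the matter. Fixing the sampling intervals a priori as consecutive blocks with $\lvert I_n\rvert\sim\mathrm{dist}(0,I_n)^{1-\varepsilon}$ does not yield the density condition \eqref{eq:density}, even ``after discarding the few anomalous blocks.'' Writing $f(x)=n_{\Lambda'}(x)-dx$, the count in a block deviates from $d\lvert I_n\rvert$ by $f(b_n)-f(a_n)$, and strong $d$-regularity only controls $f$ in a Poisson-averaged sense: it is compatible with $f$ oscillating with amplitude and period both of order $x^{1-\varepsilon}$ (the contribution to $\int\lvert f\rvert/(1+x^2)$ over $[x,2x]$ is then only $\sim x^{-\varepsilon}$, which is dyadically summable), in which case \emph{every} block of that width is anomalous and nothing survives the discarding. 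The intervals must instead be chosen adaptively, centered where $f$ is exceptionally small --- and proving that enough such locations exist, with widths still tending to infinity and satisfying \eqref{eq:short}, is a real lemma, not bookkeeping. Likewise the converse extraction (a $d$-uniform subsequence of a separated sequence contains strongly $d'$-regular subsequences for all $d'<d$) is asserted with no argument, and it is not automatic: conditions \eqref{eq:short}--\eqref{eq:energy} constrain $\Lambda$ only inside the $I_n$, so one must control the counting function between the sampling intervals as well. Finally, note that the weighted theorem as quoted in the paper reads ``$T_\mu\geq a$'' with densities $d<a$ and no $2\pi$, so a literal derivation along your lines produces $T_\mu=D$ rather than $T_\mu=2\pi D$; this is a normalization mismatch internal to the paper's own statements, but your proof cannot simply defer it to ``bookkeeping'' --- you would need to fix a Fourier convention and track the factor through the definition of exponential type.
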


It is this characterization that we used to show that below the Hawking--Page temperature, the depth parameter is equal to $\pi$. In cases where the sequence is not separated, i.e. where there is no minimal distance separating its points, we have an alternative result:

\begin{prop}
Let \begin{align}
\mu:=\sum_nw(n)\delta_{\lambda_n}
\end{align}
be a finite measure.
\begin{align}
D:=\mathrm{sup}\left\{d,\,\exists\,d-\text{uniform}\,\Lambda^\prime\subset\Lambda,\,\sum_{\lambda_n\in\Lambda^\prime}\frac{\mathrm{log}\,w(n)}{1+n^2}>-\infty\right\}.
\end{align}
Then, \begin{align}T_\mu\geq2\pi D,\end{align} and there is equality if \begin{align}\int dx\frac{\mathrm{log}(\lvert n_\Lambda\rvert+1)(x)}{1+x^2}<\infty.\end{align}
\label{prop:c7}
\end{prop}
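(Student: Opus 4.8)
The plan is to read Proposition~\ref{prop:c7} as the discrete-measure incarnation of Poltoratski's solution of the type/gap problem, so that the two assertions ($T_\mu\geq 2\pi D$ in general, and $T_\mu=2\pi D$ under the counting-function hypothesis) match the ``sufficient'' and ``necessary'' halves of that solution respectively. The genuinely hard analytic content — the construction of a function in $L^2(\mu)$ with a prescribed spectral gap out of a $d$-uniform sampling sequence, and the converse extraction of $d$-uniform sequences from the existence of such a gap — will be quoted from Poltoratski. What remains is to reconcile the two bookkeeping schemes: arbitrary test weights $W$ and sums indexed by the points $\lambda_n$ on the one hand, versus the weights $w(n)$ of $\mu$ and sums indexed by $n$ on the other.

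For the inequality $T_\mu\geq 2\pi D$ I would fix $d<D$ together with a $d$-uniform subsequence $\Lambda'\subset\Lambda$ satisfying $\sum_{\lambda_n\in\Lambda'}\tfrac{\log w(n)}{1+n^2}>-\infty$, and reduce to the measure $\mu|_{\Lambda'}$: extending any $f\in L^2(\mu|_{\Lambda'})$ by zero to $\Lambda$ realizes it inside $L^2(\mu)$ with an unchanged Fourier transform, so $T_\mu\geq T_{\mu|_{\Lambda'}}$. Since $\Lambda'$ is now the entire support of $\mu|_{\Lambda'}$ and is $d$-uniform, the sufficient-condition half of the main theorem applies directly: for every admissible weight $W$ and every $d'<d$ one thins $\Lambda'$ to a $d'$-uniform subsequence by deleting an evenly distributed set of points of density $d-d'$, chosen so as to remove the terms where $W$ is large; summability of $\int W\,d\mu|_{\Lambda'}=\sum_{\Lambda'}W(\lambda_n)w(n)$ together with the hypothesis $\sum_{\Lambda'}\tfrac{\log w(n)}{1+n^2}>-\infty$ controls the remaining tail. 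This gives $T_{\mu|_{\Lambda'}}\geq 2\pi d$, and letting $d\uparrow D$ yields the inequality.

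For the equality, assume $\int\tfrac{\log(1+n_\Lambda(x))}{1+x^2}\,dx<\infty$ and write $T_\mu=a$. The plan is to invoke the necessity half of the main theorem not for all weights simultaneously but for the single weight $W$ obtained by regularizing $1/w$ into a lower semicontinuous $L^1(\mu)$ function tending to $\infty$; the conclusion then produces, for every $d<a/2\pi$, a $d$-uniform subsequence of $\Lambda$ along which $\sum\tfrac{\log W(\lambda_n)}{1+\lambda_n^2}<\infty$, which after unwinding $W\approx 1/w$ should give a $d$-uniform subsequence with $\sum_{\lambda_n}\tfrac{\log w(n)}{1+n^2}>-\infty$, i.e. $D\geq a/2\pi$. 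Two things must be arranged, and both rely on the counting-function hypothesis: that $1/w$ can be turned into an admissible weight only by altering it on a set sparse enough not to disturb the $d$-uniformity of the extracted sequence; and that convergence of the $\lambda_n$-indexed sum $\sum\tfrac{\log W(\lambda_n)}{1+\lambda_n^2}$ can be transferred to the $n$-indexed sum $\sum\tfrac{\log w(n)}{1+n^2}$, which is precisely the point where $\int\tfrac{\log(1+n_\Lambda(x))}{1+x^2}\,dx<\infty$ makes the passage between counting index and point value harmless.

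I expect this last transfer — between $\lambda_n$-indexed and $n$-indexed sums, with the matching regularization of $1/w$ — to be the main obstacle, since it is exactly where one must spend the counting-function hypothesis and where the density $d$ of the extracted sequence has to be preserved rather than degraded by an $\varepsilon$ at each step; the examples in Section~\ref{sec:genR} (the Riemannium, which violates the hypothesis and for which only the inequality $T_\mu=\infty$ is claimed) confirm that the equality genuinely fails without it. The deep harmonic analysis, Poltoratski's two implications, is cited rather than reproved.
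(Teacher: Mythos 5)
The first thing to note is that the paper does not prove this proposition: Appendix~\ref{app:exp} is an explicit literature review, and Proposition~\ref{prop:c7} is quoted from Poltoratski \cite{poltoratski2013problem} (it is his solution of the gap/type problem specialized to discrete measures), with the surrounding text stating that the proofs of these results are ``far from elementary'' and rely on Toeplitz-operator techniques. So there is no in-paper argument to compare yours against; the intended justification is the citation. That said, your strategy of deducing the discrete statement from the general weighted theorem quoted just above it is the natural one, and you correctly locate the two places where real work is needed: the admissibility of the weight, and the passage between $n$-indexed and $\lambda_n$-indexed sums.

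As a proof, however, your sketch has concrete gaps at exactly those two places. For the equality direction, the weight you want, $W\approx 1/w$, is never in $L^1(\mu)$ when $\Lambda$ is infinite, since $\int (1/w)\,d\mu=\sum_n 1=\infty$; ``regularizing'' it therefore cannot mean smoothing, it must mean damping by a summable factor, i.e. taking $W(\lambda_n)=c_n/w(n)$ with $\sum_n c_n<\infty$, and then the extra contribution $\sum_n \frac{\log c_n}{1+\lambda_n^2}$ must itself be shown to be controlled --- this is precisely where the hypothesis $\int \frac{\log(1+n_\Lambda(x))}{1+x^2}\,dx<\infty$ has to be spent, and you do not carry that step out. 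Likewise the transfer between $\sum \frac{\log w(n)}{1+n^2}$ (the proposition's normalization, indexed by the position $n$ in $\Lambda$) and $\sum \frac{\log W(\lambda_n)}{1+\lambda_n^2}$ (the theorem's) requires comparing $n$ with $n_\Lambda(\lambda_n)$, which for dense supports --- the Riemannium of Sec.~\ref{sec:genR} being the relevant example --- changes the sums by more than a bounded factor; you flag this as the main obstacle but leave it unresolved. Since the weighted theorem you reduce to is itself only quoted, and the reduction is incomplete at its two hard points, the appropriate course here is the one the paper takes: cite \cite{poltoratski2013problem} for the discrete statement directly rather than re-deriving it.
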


There is also another useful result on the exponential type problem in the case of some polynomially growing measures called Frostman measures \cite{POLTORATSKI2019348}. These are positive measures $\mu$ for which there exist $C,\alpha>0$ such that for all intervals $I\subset\mathbb{R}$, \begin{align}\mu(I)<C\lvert I\rvert^\alpha.\end{align}In particular, Frostman measures include bounded continuous functions. We then have the following result:
\begin{thm}
Frostman measures have exponential type $0$ or $\infty$.
\end{thm}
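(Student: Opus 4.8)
The plan is to prove the dichotomy by ruling out intermediate values, i.e. to show that if $\mu$ is a Frostman measure with $T_\mu>0$ then in fact $T_\mu=\infty$. As a preliminary sanity check I would record the elementary scaling behaviour: if $\mu_\lambda$ denotes the push-forward of $\mu$ under $x\mapsto\lambda x$, then $\mu_\lambda$ is again Frostman with the same exponent $\alpha$, and $T_{\mu_\lambda}=\lambda^{-1}T_\mu$. This alone shows that the finite positive values of $T$ on the Frostman class form either the empty set or all of $(0,\infty)$, and it makes it plausible that no particular finite value is distinguished; but it does not by itself close the argument, so the real work is to upgrade ``positive type'' to ``infinite type''.

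The core of the proof uses the $d$-uniform-sequence criterion of \cite{poltoratski2013problem} recalled above. Assume $T_\mu\geq a_0>0$ and fix $d_0\in(0,a_0)$; then for every admissible weight $W$ the support of $\mu$ carries a $d_0$-uniform sequence $(\lambda_n)$ with sampling intervals $(I_n)$ obeying \eqref{eq:short}, \eqref{eq:density}, \eqref{eq:energy} and with $\sum_n \log W(\lambda_n)/(1+\lambda_n^2)<\infty$. The key structural input is that the Frostman bound $\mu(J)\leq C|J|^\alpha$ forces $\mathrm{supp}\,\mu$ to be ``$\alpha$-thick'' at every scale: if $\mu(J_0)=m$ then the part of $\mathrm{supp}\,\mu$ inside $J_0$ cannot be covered by fewer than $\asymp m\,C^{-1}\varepsilon^{-\alpha}$ intervals of length $\varepsilon$, so it is uncountable and well spread. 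On each sampling interval $I_n$ one has $|I_n|\to\infty$, and after pruning one can arrange that $\mu(I_n)$ does not decay too fast; the thickness then lets me select, for any fixed integer $N$, roughly $Nd_0|I_n|$ points of $\mathrm{supp}\,\mu$ in $I_n$ that are as close to equally spaced at scale $1/(Nd_0)$ as the support permits. Collecting these over all $n$ produces a candidate $(Nd_0)$-uniform sequence.

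It then remains to check that the densified sequence really satisfies the three conditions and the weight bound. Conditions \eqref{eq:short} and \eqref{eq:density} are immediate since we reuse the intervals $I_n$ with prescribed counts. The energy condition \eqref{eq:energy} is the delicate one: because the chosen points are as evenly spaced as the Frostman-thick support allows, the repulsion term $E_n$ stays within $o\!\left(|I_n|^2\log|I_n|\right)$ of $(\#(\Lambda\cap I_n))^2\log|I_n|$, so \eqref{eq:energy} survives; this is where the precise value of $\alpha$ and a careful geometric construction of the sub-grid enter. For the weight, one uses the Frostman bound again: a function $W\in L^1(\mu)$ tending to $\infty$ can be large only on a $\mu$-$\alpha$-small set, so the densified sequence can be kept away from it enough that $\sum_n\log W(\lambda_n)/(1+\lambda_n^2)<\infty$ still holds. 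Applying the criterion of \cite{poltoratski2013problem} (cf.\ Proposition~\ref{prop:c7}) gives $T_\mu\geq Nd_0$ for every $N$, hence $T_\mu=\infty$; together with the trivial facts that compactly supported Frostman measures (e.g.\ the standard Cantor measure) have $T_\mu=0$ while Lebesgue measure on $\mathbb{R}$ has $T_\mu=\infty$, this yields the dichotomy.

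The main obstacle will be the densification step: controlling the decay of $\mu(I_n)$ along the sampling sequence and, above all, verifying that the energy/repulsion estimate \eqref{eq:energy} is not destroyed when we pack $N$ times as many points into each $I_n$ — one needs a genuinely quantitative geometric argument using the Frostman exponent, rather than a soft one. Removing the auxiliary weight $W$ is a secondary but still nontrivial point that likewise rests essentially on the Frostman condition.
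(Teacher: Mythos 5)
First, a point of comparison: the paper does not prove this theorem at all --- it is imported verbatim from \cite{POLTORATSKI2019348}, where the proof proceeds by spectral-gap/Toeplitz-kernel methods rather than by manipulating $d$-uniform sequences. So there is no internal argument to measure yours against, and your proposal must stand on its own. It does not: the densification step, which you correctly identify as the heart of the matter, contains a genuine gap rather than a technical obstacle. The Frostman condition $\mu(J)\leq C\lvert J\rvert^{\alpha}$ is purely an \emph{upper} regularity bound; it controls how concentrated $\mu$ can be but gives no lower bound whatsoever on the local mass $\mu(B(x,r))$ at points $x\in\mathrm{supp}\,\mu$. Your covering estimate --- that $\mathrm{supp}\,\mu\cap J_0$ requires at least $\asymp \mu(J_0)C^{-1}\varepsilon^{-\alpha}$ intervals of length $\varepsilon$ to cover --- is correct but degenerates exactly where you need it. To pass from a $d_0$-uniform sequence to an $(Nd_0)$-uniform one you must place $N$ support points near each $\lambda_n$; since a Frostman measure is non-atomic its support is perfect, so such points always exist, but the only scale at which the Frostman bound guarantees $N$ of them with mutual separation $\delta'$ is $\delta'\lesssim\bigl(\mu(B(\lambda_n,\delta))/CN\bigr)^{1/\alpha}$, and $\mu(B(\lambda_n,\delta))$ may decay super-exponentially in $1/\delta$. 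The within-cluster contribution to $(\#(\Lambda\cap I_n))^2\log\lvert I_n\rvert-E_n$ is then of order $N^2 d_0\lvert I_n\rvert\log(1/\delta')$, and combining \eqref{eq:energy} with \eqref{eq:short} one sees this is only tolerable when $\log(1/\delta')=O(\lvert I_n\rvert)$ --- a lower (Ahlfors-type) regularity condition on $\mu$ that Frostman measures simply need not satisfy. The phrase ``$\alpha$-thick at every scale'' conflates upper and lower regularity.

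Two secondary problems compound this. The criterion of \cite{poltoratski2013problem} requires a $d$-uniform sequence satisfying $\sum_n\log W(\lambda_n)/(1+\lambda_n^2)<\infty$ for \emph{every} admissible weight $W$, whereas your densified sequence is built from the sequence furnished for one particular $W$; preserving the weight condition after adding points is not addressed beyond a heuristic, and for an unbounded $W\in L^1(\mu)$ the new points could well land where $W$ is large. (There is also the lesser issue that the quoted criterion is stated for finite or Poisson-summable measures, while Frostman measures are generally neither.) Your scaling remark is correct but, as you say, carries no logical weight. If you want a proof, the mechanism to pursue is the one in \cite{POLTORATSKI2019348}: a positive spectral gap for a Frostman measure can be ``added to itself'' through an analytic argument on the functions $f$ with $\widehat{f\mu}$ vanishing on an interval, not through re-sampling the support.
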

In our language this means that if the spectral density of our theories at large $N$ is a bounded continuous function in the TFD state, then either there is no emergent radial direction at all, or there is stringy connectivity between both sides! 

The techniques used to prove the results described above are far from elementary, and make use of advanced theories in harmonic analysis, such as the theory of Toeplitz operators. They are also related to other deep areas of mathematical analysis, such as Beurling--Malliavin theory and the Gelfand--Levitan method, which we believe should also play a role in the definition of stringy geometries. The relation between these results and our physical setup shows that our depth parameter contains highly nontrivial information. We also hope that the new physical relevance of the exponential type found in this paper encourages further developments in the mathematical study of exponential type problems. 

\subsection{An explicit computation in a simple case}
\label{app:exp2}
Although the arguments of \cite{poltoratski2013problem} to establish the results above are very involved mathematically, in this appendix we offer an elementary argument to compute the exponential type for a simple choice of spectral function $\rho(\omega)$. Namely we assume

\be \label{vaespe}
\rho (\om) = 
\sum_{n=0}^\infty a_n \le[\delta(\omega-2n) -\delta(\omega+2n) \ri], 
\ee

which corresponds to the case \eqref{vae} of the vacuum state of a $d$-dimensional CFT on $S^{d-1}$ in the particular case $\omega_0=0$. In this case it can be readily seen that the depth parameter $\mathcal{T}$ is equal to $\pi$. More formally:

\begin{prop}
If the $a_n$ are nonzero and decay at most polynomially, the exponential type of the measure $\rho(\omega)$ is $\mathcal{T}=\pi$.
\end{prop}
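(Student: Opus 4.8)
The plan is to prove $\mathcal{T}\le\pi$ and $\mathcal{T}\ge\pi$ separately, working throughout with the description of the (relative) commutant of a time band from Section~\ref{sec:spec}: for $g$ admissible, $\mathcal{S}_{(-t_0,t_0)}$ has a nontrivial relative commutant in $\mathcal{S}_\phi$ iff there is a nonzero such $g$ with $(g\ast\rho)$ supported outside $(-t_0,t_0)$, equivalently (in frequency space) with the Fourier transform of $g(\omega)\rho(\omega)$ vanishing on $(-t_0,t_0)$. The single structural observation driving everything is that $\rho(\omega)$ is a sum of Dirac masses on $2\mathbb{Z}$, so $g(\omega)\rho(\omega)$ is always supported on $2\mathbb{Z}$ and hence $(g\ast\rho)(t)=\tfrac1{2\pi}\sum_{n\ge1}a_n\big[g(2n)e^{2int}-g(-2n)e^{-2int}\big]$ is a $\pi$-periodic tempered distribution (the $n=0$ mode cancels identically).

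For the upper bound, fix $t_0$ with $2t_0>\pi$ and suppose $g$ is normalizable (so in particular $\sum_{n\ge1}a_n|g(2n)|^2<\infty$) with $(g\ast\rho)$ vanishing on $(-t_0,t_0)$. By $\pi$-periodicity, $(g\ast\rho)$ then vanishes on $\bigcup_k\big((-t_0,t_0)+\pi k\big)=\mathbb{R}$, since consecutive translates overlap when $2t_0>\pi$; hence $g(\omega)\rho(\omega)=0$, and since every $a_n\ne0$ this forces $g(2n)=0$ for all $n\ge1$, so $\phi(g)$ is trivial in $\mathcal{S}_\phi$. Thus the relative commutant of $\mathcal{S}_{(-t_0,t_0)}$ is trivial, and by the absence of singular inclusions for generalized free fields (Appendix~\ref{app:equ}) we get $\mathcal{S}_{(-t_0,t_0)}=\mathcal{S}_\phi$. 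Therefore no $2t_0>\pi$ contributes to $\mathcal{T}$, i.e. $\mathcal{T}\le\pi$.

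For the lower bound, given any $t_0<\pi/2$ I would construct an explicit nonzero element of the relative commutant of $\mathcal{S}_{(-t_0,t_0)}$. First produce a smooth $\pi$-periodic function $\tilde P\not\equiv0$ with zero mean and vanishing on $(-t_0,t_0)$: take $q\in C_c^\infty$ supported in $(t_0,\pi/2)$ with $\int q=0$ and $q\not\equiv0$ (for instance a bump minus its reflection about the midpoint of $(t_0,\pi/2)$) and let $\tilde P$ be its $\pi$-periodization; then $\tilde P(t)=\sum_n p_n e^{2int}$ with $(p_n)$ rapidly decreasing, $p_0=0$, $p_{-n}=\overline{p_n}$, and $p_n\ne0$ for some $n\ge1$. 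Set $\gamma_n:=2ip_n/a_n$ for $n\ge1$; since $(a_n)$ does not decay faster than polynomially, $1/a_n$ has at most polynomial growth and $(\gamma_n)$ is again rapidly decreasing. Choose a real-in-time test function $g$ with $g(2n)=\gamma_n$, $g(-2n)=\overline{\gamma_n}$ for $n\ge1$ (e.g. $\hat g(\omega)=\sum_{n\ge1}\big[\gamma_n\chi(\omega-2n)+\overline{\gamma_n}\chi(\omega+2n)\big]$ for a fixed even smooth bump $\chi$ with $\chi(0)=1$, $\operatorname{supp}\chi\subset(-1,1)$). Then $g(\omega)\rho(\omega)=\sum_{n\ge1}a_n\big[\gamma_n\delta(\omega-2n)-\overline{\gamma_n}\delta(\omega+2n)\big]$, whose inverse Fourier transform is $\tfrac{i}{\pi}\tilde P$, hence supported outside $(-t_0,t_0)$. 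One checks $g$ is normalizable for \eqref{eq:inner2} and \eqref{eq:inner3} and lies in the domain of multiplication by $\mathfrak{b}_-$, because $\sum_{n\ge1}a_n|\gamma_n|^2=4\sum_{n\ge1}|p_n|^2/a_n<\infty$ and the remaining weights are bounded or exponentially small at $\omega=2n$; and $\phi(g)$ is nonzero in $\mathcal{S}_\phi$ since $g(2n)=\gamma_n\ne0$ for some $n$. Hence $\phi(g)$ is a nontrivial element of the relative commutant of $\mathcal{S}_{(-t_0,t_0)}$, so $2t_0\le\mathcal{T}$; letting $t_0\uparrow\pi/2$ gives $\mathcal{T}\ge\pi$, and with the upper bound $\mathcal{T}=\pi$.

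The main obstacle is the lower-bound construction, and in particular the single point where the hypothesis on $(a_n)$ enters: dividing the prescribed Fourier data $p_n$ by $a_n$ must not destroy the regularity and normalizability of $g$, which is exactly what the ``no faster-than-polynomial decay'' assumption secures. The reality/parity bookkeeping (making $g$ real in time and keeping it out of the subspace of test functions supported off $\operatorname{supp}\rho$ that is quotiented out) is routine but must be tracked carefully; the upper bound, once $\pi$-periodicity of $(g\ast\rho)$ is noted, is elementary.
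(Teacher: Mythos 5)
Your proof is correct and follows essentially the same route as the paper: the upper bound via $\pi$-periodicity of $(g\ast\rho)$, and the lower bound by periodizing a bump supported in the complementary window and dividing its Fourier coefficients by $a_n$, with the polynomial-decay hypothesis guaranteeing normalizability in $L^2(\rho)$. Your version is slightly more careful than the paper's (explicit zero-mean condition to kill the $n=0$ mode, and the reality/parity bookkeeping), and states the construction directly rather than by contradiction, but the underlying argument is the same.
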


\begin{proof}
We first establish that $\mathcal{T}\leq\pi$, and then establish that $\mathcal{T}\geq\pi$. First, $\mathcal{T}\leq\pi$. Indeed, suppose that there exists a function $f$ with real-valued Fourier transform such that the Fourier transform of $f(\omega)\rho(\omega)$ (or $\rho_0 (\om)$) vanishes on an interval of size larger than $\pi$. Introducing $c_n:=f_na_n$ and noticing that \begin{align}f(\omega)\rho(\omega)&=\sum_{n=0}^\infty c_n\delta(\omega-2n)-\sum_{n=0}^\infty c_n^\ast\delta(\omega+2n),\end{align} we obtain that $(f\ast\rho)(t)$ is periodic of period $\pi$. So if it vanishes on an interval of size larger than $\pi$, then it identically vanishes and $f=0$. Hence we have established that \begin{align}\mathcal{T}\leq\pi.\end{align} Now, we show that $\mathcal{T}\geq\pi$ by explicitly exhibiting, for all $\varepsilon>0$, test functions in $L^2(\rho)$ such that the Fourier transform of $f(\omega)\rho(\omega)$ vanishes on intervals of size $\pi-\varepsilon$. More precisely, suppose that $\mathcal{T}=\pi-\varepsilon$ for some $\varepsilon>0$. Then this means that there is no function $f(\omega)\in L^2(\rho)$ such that \begin{align}(f\ast\rho)(t)=i\,(\Pi\ast K)(t),\end{align} where $\Pi$ is the Dirac comb and $K$ is a compactly supported smooth function of support inside $(-\frac{\pi}{2},\frac{\pi}{2})$ but outside $(-\frac{\pi}{2}+\frac{\varepsilon}{2},\frac{\pi}{2}-\frac{\varepsilon}{2})$) (otherwise integration against such a function would vanish for functions supported on $(-\frac{\pi}{2}+\frac{\varepsilon}{2},\frac{\pi}{2}-\frac{\varepsilon}{2})$). So in Fourier space this means there are no functions $K$ and $f$ as defined before such that \begin{align}\sum a_{\pm n} f_{\pm n} \delta_{\pm 2n}=\sum K_{\pm n}\delta_{\pm 2n}.\end{align} But for any $K$ compactly supported in $t$, one can always find such an $f$: simply define $f_{\pm n}=K_{\pm n}/a_{\pm n}$. Since the $K_{\pm n}$ decay faster than polynomially (as $K$ is in Schwartz space), this implies \begin{align}\sum\frac{\lvert K_{\pm n}\rvert^2}{a_{\pm n}}<\infty,\end{align} so such an $f$ is in $L^2(\rho)$, which is a contradiction. Therefore, in that case, \begin{align}\mathcal{T}=\pi.\end{align}

\end{proof}

\newpage
\bibliographystyle{jhep}
\providecommand{\href}[2]{#2}\begingroup\raggedright\endgroup

\end{document}